\def\listofalgorithms{\@starttoc{loa}\listalgorithmname}
\def\l@algorithm{\@tocline{0}{3pt plus2pt}{0pt}{1.9em}{}}
\newcommand*{\algrule}[1][\algorithmicindent]{\makebox[#1][l]{\hspace*{.5em}\vrule height .75\baselineskip depth .25\baselineskip}}%
\def\ALG@printindent{%
    \ifnum \theALG@nested>0
    \ifx\ALG@text\ALG@x@notext
    \addvspace{-3pt}
    \else
    \unskip
    \ALG@printindent@tempcnta=1
    \loop
    \algrule[\csname ALG@ind@\the\ALG@printindent@tempcnta\endcsname]%
    \advance \ALG@printindent@tempcnta 1
    \ifnum \ALG@printindent@tempcnta<\numexpr\theALG@nested+1\relax
    \repeat
    \fi
    \fi
}%
\patchcmd{\ALG@doentity}{\noindent\hskip\ALG@tlm}{\ALG@printindent}{}{\errmessage{failed to patch}}
\newcommand{\FF}{\mathbb{F}}
\newcommand{\tr}{\top}
\newcommand{\elementfun}[2]{{#1}^{({#2})}}
\renewcommand{\P}[1]{\elementfun{P}{#1}} 
\newcommand{\e}[1]{\elementfun{e}{#1}}
\algnewcommand{\True}{\textbf{true}\space}
\algnewcommand{\False}{\textbf{false}\space}
\algnewcommand{\And}{\textbf{and}}
\algnewcommand{\Or}{\textbf{ or }}
\algnewcommand{\Break}{\textbf{break}}%
\algnewcommand{\Continue}{\textbf{continue}}%
\algnewcommand{\algorithmicgoto}{\textbf{go to}}%
\algnewcommand{\Goto}[1]{\algorithmicgoto~\ref{#1}}%
\DeclareMathOperator{\rank}{rank} 
\DeclareMathOperator{\wt}{weight} 
\DeclareMathOperator{\image}{im}  
\theoremstyle{theorem}
\newtheorem{theorem}{Theorem}
\newtheorem{lemma}{Lemma}
\newtheorem{proposition}{Proposition}
\theoremstyle{definition}
\newtheorem{definition}{Definition}
\theoremstyle{remark}
\newtheorem{remark}{Remark}
\newtheorem{example}{Example}
\title[Beyond RAID~6]{Beyond RAID~6 --- an efficient systematic code
  protecting against multiple errors, erasures, and silent data
  corruption}
\author{Mohamad Moussa}
\address{Department of Mathematics, University of Arizona, Tucson, AZ 85721, USA}
\email{moussa7@math.arizona.edu}
\author{Marek Rychlik}
\address{Department of Mathematics, University of Arizona, Tucson, AZ 85721, USA}
\email{rychlik@email.arizona.edu}
\newcommand{\TrademarkNotice}{%
  PentaRAID\texttrademark is the trademark used by \emph{Xoralgo
    Inc.}, a company formed by the authors in collaboration with the
  University of Arizona, to pursue commercial implementations of the
  technology based on the research described in the current paper.}
\keywords{RAID~6 replacement, RAID, error-correcting codes, erasure codes, silent data corruption, fault tolerance, storage array, disk array, Reed-Solomon coding.}
\subjclass[2010]{%
  94B05, 
  94B35
  }
\begin{document}

\begin{abstract}
  We describe a replacement for RAID~6, based on a new linear,
  systematic code, which detects and corrects any combination of $E$
  errors (unknown location) and $Z$ erasures (known location) provided
  that $Z+2E \leq 4$.  We investigate some scenarios for error
  correction beyond the code's minimum distance, using list
  decoding. We describe a decoding algorithm with quasi-logarithmic
  time complexity, when parallel processing is used:
  $\approx O(\log N)$ where $N$ is the number of disks in the array
  (similar to RAID~6).
  
  By comparison, the error correcting code implemented by RAID~6
  allows error detection and correction only when $(E,Z)=(1,0)$,
  $(0,1)$, or $(0,2)$. Hence, when in degraded mode (i.e., when
  $Z \geq 1$), RAID~6 loses its ability for detecting and correcting
  random errors (i.e., $E=0$), leading to data loss known as silent
  data corruption. In contrast, the proposed code does not experience
  silent data corruption unless $Z \geq 3$.
  
  The aforementioned properties of our code, the relative simplicity
  of implementation, vastly improved data protection, and low
  computational complexity of the decoding algorithm, make our code a
  natural successor to RAID~6.  As this code is based on the use of
  quintuple parity, this justifies the name PentaRAID\texttrademark\
  for the RAID technology implementing the ideas of the current paper.
\end{abstract}

\maketitle

\setcounter{tocdepth}{1}
\tableofcontents


\section{Introduction}
RAID (Redundant Arrays of Inexpensive Drives) was introduced as a
method of increasing reliability of data storage systems
\cite{Patterson_1988}. The original data is stored on $k$ devices, for
simplicity called ``drives''. Redundancy is added by using extra $p$
drives storing summary information (``parity'') computed from the
original data.  Parity information is used to reconstruct data lost
due to failure of some of the drives. The practicality of this
approach depends on the existence of efficient algorithms for
computing parity and reconstructing lost data. There is a trade-off
involving efficiency, protection from data loss, and cost. As RAID
evolved, several specific schemes were proposed, known as RAID levels.
RAID~6 is one popular scheme, in which $p=2$ and $k$ is arbitrary
\cite{WikipediaRaidLevels,Anvin_2009}. RAID~6 is limited in its
ability to protect data by $p=2$, which makes it possible to recover
from erasures $Z$ and errors $E$ provided that $Z+2E \leq 2$.  The
need for RAID~6 replacement has been understood for quite a
while~\cite{Plank_RS_Tutorial, Plank_Correction}; it has even been
predicted that RAID~6 will cease to work in year approximately 2019,
due to evolving storage capacity and application needs
~\cite{Leventhal:2009:TRB:1661785.1670144}. As $k$ increases, the
probability of having 2 failed drives simultaneously increases. It
should be noted that recovery takes a significant amount of time with
current large drives (say, $>1$ day for modest values of $k$). There
exist codes which allow arbitrary $p$, such as Reed-Solomon codes, but
they come with significant computational overhead and added complexity
of implementation. In the current paper we introduce a code which uses
$p=5$, allows recovery from up to 2 drive failures at unknown
locations and up to 4 drives at known locations, and which is nearly
as easy to implement as the RAID~6 code. Moreover, the knowledge
required to implement it is similar, and it amounts to familiarity
with Galois field arithmetic in the scope of the popular Anvin's paper
\cite{Anvin_2009}. Thus, we hope that PentaRAID\texttrademark will
successfully fill the gap between RAID~6 and Reed-Solomon
codes\footnote{\TrademarkNotice}
It
should be noted that our algorithm admits an implementation which uses
a constant, very small number of Galois field operations per error,
independent of the size of the underlying Galois field, if the
arithmetical operations in the field have constant time (e.g.\ use
lookup tables).  Thus the algorithm has better computational
complexity than the alternatives, and easily supports a large number
of drives, e.g. $254=2^8-2$ if the underlying Galois field is
$GF(256)$, and $65,533=2^{16}-2$ if the field is $GF(2^{16})$
\cite{WikipediaRaidLevels,Anvin_2009}.

The remainder of the paper is organized as follows. Section
\ref{sec:notations} introduces notations and some preliminary
considerations.  In Section \ref{sec:main-section}, we prove the main
theoretical result of the current paper,
Theorem~\ref{thm:injectivity-of-error-to-syndrome-mapping}.  A special
case in which $E=0$ (i.e erasure code) is discussed in Section
\ref{sec:erasures}. An efficient way of solving a quadratic equations
over a field of characteristic $2$ is presented in Section
\ref{sec:quadratic-equation}. Section \ref{sec:decoding-algorithm}
describes in detail the decoding algorithm for the proposed code. In
sections \ref{sec:two_unknown}, \ref{sec:complexity}, we present the
decoding technique for the case of dual disk corruption (i.e $E=2$ and
$Z=0$), and its computational complexity, respectively. In the last
three sections \ref{sec:three_drives},~\ref{sec:degraded_mode}, and
\ref{sec:four_drives} we investigate the error correction capabilities
beyond the minimum distance.

\section{Preliminary Considerations and Notations}
\label{sec:notations}
The bulk of the algebraic operations in the current paper are over the
finite field $\FF$ of characteristic two, and thus addition is the
same as subtraction. In addition to that, we will use frequently the
Frobenius identity ${(a+b)}^{2}=a^{2}+b^{2}$. In particular, we can
apply our results to the most commonly used finite field: the Galois
field $GF(256)$. There are minor differences when $\FF=GF(2^m)$, where
$m\ge 1$ is otherwise an arbitrary integer, and the results 
of the paper are applicable to all these fields.

The main object studied in this paper is a linear, systematic code
over the Galois field $\FF$. As it is customary, we define the code by
its generator and parity matrices. Consider the generator matrix
\begin{equation}
  \label{eqn:generator-matrix}
  G=\left(
  \begin{array}{c}
    I_{k\times k} \\
    \hline
    P_{5\times k}
  \end{array}
  \right),
\end{equation} 
and the parity check matrix 
\begin{equation}
\label{eqn:parity-check-matrix}
  H=\left(
  \begin{array}{c|c}
    P_{5\times k} & I_{5\times 5}
  \end{array}
  \right),
\end{equation}
where 
\begin{equation}
  \label{eqn:parity-matrix}
  P_{5\times k}=\left(
  \begin{array}{cccc}
    1                        &  1                         & \cdots  & 1\\
    \alpha_{1}                & \alpha_{2}                 & \cdots  & \alpha_{k}\\
    \alpha_{1}^{2}            & \alpha_{2}^{2}              & \cdots  & \alpha_{k}^{2}\\
    \alpha_{1}^{3}            & \alpha_{2}^{3}              & \cdots  & \alpha_{k}^{3}\\
    \alpha_{1}^{2}+\alpha_{1}  & \alpha_{2}^{2}+\alpha_{2}   & \cdots  & \alpha_{k}^{2}+\alpha_{k}
  \end{array}
  \right).
\end{equation}
As we will often refer to the structure of the column of this matrix, it will be convenient
to introduce a vector depending on $\rho\in\FF$:
\begin{equation}
  \label{eqn:P-rho}
  \P{\rho}=\left(\begin{array}{ccccc}
    1&\rho&\rho^2&\rho^3&\rho(\rho+1)
  \end{array}\right)^\tr.
\end{equation}
Using this notation, we can define the columns of $P=P_{5\times k}$ to be
$P_i=\P{\alpha_i}$.

The entries $\alpha_{i}$, $i=1,\ldots,k$, are distinct elements of
$\FF$, excluding the zero element, and any one of the $3^{rd}$ roots of
unity distinct from $1$ (if such a root exists in $\FF$). So, we have the 
following bound:
\begin{equation*}
  k \leq k_{\max}(\FF)=\begin{cases}
    |\FF|-1 &\text{if there is no $3^{rd}$ roots of unity in $\FF$,}\\
    |\FF|-2 &\text{if there is a $3^{rd}$ root of unity in $\FF$}.
  \end{cases}
\end{equation*}
For example, if $\FF=GF(256)=GF(2^8)$, the equation
$\zeta^3-1=(\zeta-1)\,(\zeta^2+\zeta+1)$ has two solutions in $\FF$
distinct from $1$, and thus $k\leq k_{\max}=254$.  The reason behind the
exclusion of a $3^{rd}$ root of unity will be clarified later.

It should be noted that, in contrast with most linear codes considered
in literature, the rows of the matrix $P_{5\times k}$ are not linearly
independent: the sum of rows $2$ and $3$ is equal to row $5$. Also, by
dropping the $5^{th}$ row, we obtain a $4\times 4$ Vandermonde matrix.

For $k$ \emph{data} disks $D_{1},\ldots,D_{k}$ where $k\leq k_{\max}$, let
\begin{equation}
  m=\left(\begin{array}{cccc}
    D_{1} &
    D_{2} &
    \ldots &
    D_{k}
  \end{array}
  \right)^\tr%
\end{equation}
be the original message (original values of the data drives), and
\begin{equation}
  t=G\cdot m=\left(
  \begin{array}{cccc|ccccc}
    D_{1} & 
    D_{2} &
    \ldots & 
    D_{k}&
    P_{1} &
    P_{2} &
    P_{3} &
    P_{4} &
    P_{5}
  \end{array}\right)^\tr
\end{equation}
be the transmitted message. Note that $H\cdot G$ is a zero matrix,
since the field is of characteristic $2$, and thus $H\cdot t=H\cdot G\cdot m=0$.

Consider the received message $r$ to be the transmitted message, plus
an error message $r=t+e$, where $e$ is an element of the vector space
$\FF^{k+5}$. As it is customary, we think of $e$ as random, thus to be
modeled with probability theory.

The \emph{syndrome} of a received message is 
\begin{equation*}
  s=H\cdot r=H\left(t+e\right)=H\cdot t+H\cdot e=0+H\cdot e=H\cdot e.
\end{equation*}
In short, the syndrome of a message is $s=H\cdot e$. It is a column
vector of size $5$ by $1$.

In this paper, we prove that the linear code defined by the above pair of
matrices $G$ and $H$ can be uniquely decoded using an approach widely
known as \emph{syndrome decoding}.  We will provide an algorithm which
will perform the decoding task. As the first step, we prove the following theorem:

\begin{theorem}[Injectivity of error-to-syndrome mapping]
\label{thm:injectivity-of-error-to-syndrome-mapping}
For one-error and two-error patterns, the function (the parity check
matrix $H$) that maps the error $e$ into the the syndrome
$s=H\cdot e$, is an injective transformation.  Thus, having the
syndrome $s$, can tell us the locations of the errors and the error
values.
\end{theorem}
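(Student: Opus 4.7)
The plan is to establish the equivalent claim that every set of four columns of the parity-check matrix $H$ is linearly independent. This shows that the minimum distance of the code is at least $5$, and then injectivity on weight-at-most-$2$ error patterns follows: if $He_1=He_2$ with each $\wt(e_i)\le2$, then $e_1-e_2$ has weight at most $4$ and lies in $\ker H$, forcing $e_1=e_2$.

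The columns of $H$ split into the $k$ data columns $\P{\alpha_i}$ from \eqref{eqn:P-rho} and the $5$ standard unit vectors $e_1,\ldots,e_5$ of $\FF^5$. I would argue by a case analysis on the number $j\in\{0,1,2,3,4\}$ of data columns among the four selected. The cases $j=0$ and $j=4$ are immediate: four distinct unit vectors are trivially independent, while four data columns already contain a $4\times 4$ Vandermonde minor in their first four rows, nonsingular since the $\alpha_i$ are distinct and nonzero. For $j=3$, after deleting the row $i$ corresponding to the chosen unit vector $e_i$, I would exhibit a $3\times 3$ Vandermonde minor on the distinct entries $\alpha_{i_1},\alpha_{i_2},\alpha_{i_3}$; the relation ``row $5$ equals row $2$ plus row $3$'' of $P_{5\times k}$ (valid in characteristic two) lets one recover a $(1,\alpha,\alpha^2)$-type minor no matter which of the five rows was deleted.

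For $j=1$, a dependence would require the chosen $\P{\alpha}$ to have zeros in the two rows $J\subseteq\{1,\ldots,5\}$ complementary to the three chosen unit vector indices. However, the first entry of $\P{\alpha}$ is $1$, the entries $\alpha,\alpha^2,\alpha^3$ are nonzero because $\alpha\ne 0$, and the last entry $\alpha^2+\alpha$ vanishes only when $\alpha\in\{0,1\}$. Hence at most one of the five entries of $\P{\alpha}$ can vanish, which rules out the required pair of zeros.

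The main obstacle is $j=2$. Restricted to the three rows $J$ complementary to the two chosen unit vectors, the two data columns $\P{\alpha},\P{\beta}$ must be linearly independent. Enumerating the ten possible three-subsets $J$, most reduce via row operations together with the Frobenius identity $(a+b)^2=a^2+b^2$ to forcing $\alpha=\beta$, which contradicts distinctness. The delicate configuration is $J=\{1,4,5\}$: a nontrivial linear relation there yields, using $c_1+c_2=0$, simultaneously $\alpha^3=\beta^3$ and $\alpha^2+\alpha=\beta^2+\beta$. The second equation is equivalent to $(\alpha+\beta)^2=\alpha+\beta$, hence factors as $(\alpha+\beta)(\alpha+\beta+1)=0$, so $\beta=\alpha+1$ (the case $\beta=\alpha$ being excluded); substituting into the first gives $\alpha^3=(\alpha+1)^3=\alpha^3+\alpha^2+\alpha+1$, whence $\alpha^2+\alpha+1=0$. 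Thus $\alpha$, and symmetrically $\beta=\alpha+1=\alpha^2$, is a primitive third root of unity. But any such element is explicitly removed from $\{\alpha_1,\ldots,\alpha_k\}$, which is precisely the reason given in the preliminaries for that exclusion. This closes the case analysis and yields the theorem.
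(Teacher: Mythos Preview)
Your proof is correct and follows essentially the same route as the paper's Proposition~\ref{thm:distance-of-the-code-is-5}: a case split on how many of the four chosen columns of $H$ come from $P_{5\times k}$ versus $I_{5\times 5}$, reduction to the rank of the corresponding submatrices of $P$, and isolation of the single delicate configuration (your $J=\{1,4,5\}$, the paper's $C^{23}$) where the exclusion of a primitive cube root of unity is invoked. One small correction of wording: only \emph{one} of the two primitive cube roots of unity is excluded from $\{\alpha_1,\ldots,\alpha_k\}$, not both; your argument still closes because $\alpha$ and $\beta=\alpha+1=\alpha^2$ are precisely the two such roots, so at least one of them lies outside the admissible set.
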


The proof of Theorem~\ref{thm:injectivity-of-error-to-syndrome-mapping}
will occupy the entire next section, as the proof is divided into 
a significant number of distinct cases which require detailed analysis.

The following theorem describes our code in classical terms:
\begin{theorem}[On dimension, length and distance of the code]
  \label{thm:dimension-length-and-distance}%
  The linear code given by generator matrix $G$ and the parity check matrix
  $H$ is a systematic linear code with length $k+5$, rank (or
  dimension) $k$ and distance $5$.
\end{theorem}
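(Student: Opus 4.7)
The proof decomposes into four independent claims. Systematic structure, length $k+5$, and rank $k$ are all immediate from the shape of $G$: the identity block $I_{k\times k}$ at the top of $G$ forces the first $k$ coordinates of any codeword $Gm$ to be $m$ itself, which simultaneously certifies systematicity, shows that $m\mapsto Gm$ is injective (so the image has dimension $k$), and exhibits each codeword as a vector in $\FF^{k+5}$.

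For the distance claim I would prove $d\ge 5$ and $d\le 5$ separately. For the lower bound, the plan is to reduce directly to Theorem~\ref{thm:injectivity-of-error-to-syndrome-mapping}. Suppose for contradiction that some nonzero codeword $c$ had weight at most $4$. Because the support of $c$ has size at most $4$, I partition it into two disjoint subsets $S_1,S_2$ of size at most $2$ and set $e_j$ equal to $c$ on $S_j$ and $0$ elsewhere, so that $c=e_1+e_2$ with each $e_j$ an error pattern of weight at most $2$. The codeword condition $Hc=0$ rewrites in characteristic $2$ as $He_1=He_2$, and the injectivity guaranteed by Theorem~\ref{thm:injectivity-of-error-to-syndrome-mapping} forces $e_1=e_2$, whence $c=0$ --- a contradiction.

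For the upper bound, I would exhibit a single codeword of weight exactly $5$. Taking an index $i$ with $\alpha_i=1$ (permissible, since $1$ is neither the zero element nor a nontrivial cube root of unity), the corresponding column of $P$ is $\P{1}=(1,1,1,1,0)^{\tr}$, whose last coordinate vanishes because $1\cdot(1+1)=0$ in characteristic $2$. The codeword produced by the weight-one message $m$ with $m_i=1$ and $m_j=0$ for $j\neq i$ is then $(m,P_i)$, of weight $1+4=5$.

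The substantive difficulty is concentrated in the appeal to Theorem~\ref{thm:injectivity-of-error-to-syndrome-mapping}; granted that statement, the support-splitting step for the lower bound is mechanical and the matching upper bound is a one-line verification. It is worth remarking that the exclusion of the nontrivial cube roots of unity from the $\alpha_i$ --- the same hypothesis invoked in Theorem~\ref{thm:injectivity-of-error-to-syndrome-mapping} --- is exactly what rules out the weight-four codewords $(e_i+e_j,\,P_i+P_j)$ that would otherwise appear when $\{\alpha_i,\alpha_j\}=\{\omega,\omega^2\}$ and would drop the distance to $4$.
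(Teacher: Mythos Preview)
Your treatment of length, rank, and systematicity matches the paper's. The divergence is in the distance argument, and there the lower bound has a genuine circularity problem.

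For $d\ge 5$ you reduce to Theorem~\ref{thm:injectivity-of-error-to-syndrome-mapping} via the support-splitting trick. The reduction itself is correct, but in this paper Theorem~\ref{thm:injectivity-of-error-to-syndrome-mapping} is not proved independently: Section~\ref{sec:main-section} establishes it by first proving Proposition~\ref{thm:distance-of-the-code-is-5} (the minimum distance is $5$) through a direct case analysis showing every four columns of $H$ are linearly independent, and then invoking the standard fact that distance $d$ forces the syndrome map to be injective on errors of weight at most $\lfloor (d-1)/2\rfloor$. So the ``substantive difficulty'' you defer to Theorem~\ref{thm:injectivity-of-error-to-syndrome-mapping} \emph{is} the distance computation. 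Within the paper's logical structure your argument assumes what it is meant to prove. The paper's route---checking column independence case by case---is where the cube-root exclusion is actually consumed, in the analysis of the submatrix $C^{23}$.

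For $d\le 5$ your construction requires some $\alpha_i=1$; the hypotheses permit this but do not guarantee it, since the $\alpha_i$ are an arbitrary $k$-element subset of the admissible field elements and $1$ need not be chosen. The paper instead uses the built-in row dependence (row~$5$ of $P$ equals row~$2$ plus row~$3$), so $P$ has rank at most $4$ and any five of its columns are dependent, yielding a nonzero codeword of weight at most $5$ irrespective of which $\alpha_i$ appear. Your remark about what goes wrong when both nontrivial cube roots are present is correct and illuminating, but it belongs to the explanation of \emph{why} the column-independence analysis succeeds, not to a proof that bypasses that analysis.
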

\begin{proof}
  We defer the proof of the statement concerning the distance to
  Proposition~\ref{thm:distance-of-the-code-is-5}. The length of the
  code is clearly $k+5$.  The rank of the code is $k$ as the columns
  of the generator matrix are linearly independent. It is clear that
  $\rank G\leq k$, the number of columns.  Erasing parity matrix
  $P_{5\times k}$ from the generator matrix leaves us with the
  identity matrix $I_{k\times~k}$, so $\rank G \geq k$. Combining
  these estimates, $\rank G=k$.
\end{proof}

In some applications the locations of the failed (or erased) drives
are known. The following theorem is an easy consequence of
Theorem~\ref{thm:dimension-length-and-distance}. We address this case
in Section~\ref{sec:erasures}, in particular, in
Theorem~\ref{thm:correctability-of-four-erasures}, proving that up to
$4$ erasures at known locations can be corrected.

In general, it is advantageous for a linear, forward error correcting
code (FEC) to have distance as large as possible. An $[N,K,D]$ linear
FEC, where $N$ is the length, $K$ is the rank and
$D$ is the distance, can correct $E$ errors and $Z$ (known) erasures
if the condition $Z+2\times E \leq D - 1$ is satisfied
(\cite{MoonLinearBlockCodes}, pp.~104--105). As an example, a liner
FEC with distance $D=5$ can correct a combination of $Z=2$ (known)
erasures and $E=1$ error (at unknown location).

\section{Proofs of the Main Results}
\label{sec:main-section}
In this section we prove the main theoretical result of the current
paper, Theorem~\ref{thm:injectivity-of-error-to-syndrome-mapping}.
Let us begin with stating useful results from coding theory.
\begin{definition}[Weight of a vector]
  The weight of a vector $e\in\FF^n$ is the number of non-zero entries:
  \begin{equation*}
    \wt e = \# \{i\in\{1,2,\ldots,n\}\,:\, e_i \neq 0 \}.
  \end{equation*}
\end{definition}
Let $\mathcal{S}^n_{\leq \nu} \subseteq \FF^n$ be the set of all vectors
$e$ with at most $\nu$ non-zero entries, i.e.\ whose weight is at most
$\nu$. In order to prove
Theorem~\ref{thm:injectivity-of-error-to-syndrome-mapping}, we will
prove that the function
\begin{equation}
  \FF^{k+5}\supset \mathcal{S}^{k+5}_{\leq 2} \ni e \mapsto H\cdot e \in \FF^5 
\end{equation}
is injective. 

Let us recall a known theorem from the theory of linear
codes~\cite{MoonLinearBlockCodes}, p.~88, Theorem~3.3:
\begin{theorem}
  For a linear code $C\subseteq\FF^n$ with a parity check matrix $H$,
  the minimum distance is $d$ if and only if both of the following
  conditions hold:
  \begin{enumerate}
  \item every set of $d-1$ columns of $H$ is linearly independent;
  \item some set of $d$ columns of $H$ is linearly dependent.
  \end{enumerate}
\end{theorem}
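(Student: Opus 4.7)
The plan is to reduce the statement to the elementary correspondence between nonzero codewords and linear dependencies among the columns of $H$. The first step is to recall that for a linear code the minimum distance equals the minimum weight of a nonzero codeword: this is immediate from linearity, since $c-c' \in C$ and $d(c,c') = \wt(c-c')$ for any $c,c' \in C$. Writing $H = (H_1 \mid \cdots \mid H_n)$ with $H_i$ the $i$-th column, the defining condition $Hc = 0$ becomes $\sum_{i=1}^n c_i H_i = 0$, and restricting to the support of $c$ turns this into a nontrivial linear relation among exactly $\wt(c)$ columns of $H$. Conversely, any nontrivial relation $\sum_{i \in T} \lambda_i H_i = 0$ with all $\lambda_i \neq 0$ yields a codeword whose support is $T$, of weight exactly $|T|$. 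The theorem thus reduces to a combinatorial statement about when the smallest support of a nontrivial column relation equals $d$.

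For the forward direction, assume the minimum weight is $d$. If some $d-1$ columns were linearly dependent, zero-padding the coefficients would produce a nonzero codeword of weight at most $d-1$, contradicting minimality; this establishes (1). Since the minimum weight $d$ is attained, a nonzero codeword of weight exactly $d$ exists, and the columns of $H$ indexed by its support form a linearly dependent set of size $d$; this establishes (2). For the converse, suppose (1) and (2) hold. Condition (1) forces every set of at most $d-1$ columns to be independent (a dependence among fewer columns extends, by zero padding, to one among any $d-1$ columns containing them), so every nonzero codeword has weight at least $d$. Condition (2) supplies a nontrivial relation $\sum_{i \in T} \lambda_i H_i = 0$ with $|T| = d$; by (1) no $\lambda_i$ can vanish (otherwise fewer than $d$ columns would be dependent), so the resulting codeword has weight exactly $d$, pinning the minimum weight at $d$.

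The argument is entirely mechanical once the codeword/column-dependence dictionary is in place; the only subtlety worth flagging is the double use of condition (1) in the converse, both to bound minimum weight from below and to ensure that the dependence produced by (2) involves $d$ genuinely nonzero coefficients, so that it really corresponds to a weight-$d$ codeword rather than a shorter one.
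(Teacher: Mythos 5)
Your proof is correct, and it is the standard textbook argument: translate codewords into column dependencies via $Hc=\sum_i c_iH_i$, then match minimum weight against the smallest size of a dependent column set, taking care (as you rightly flag) that condition (1) forces all coefficients in the size-$d$ dependence to be nonzero. Note that the paper itself offers no proof of this statement — it is quoted as a known result from the cited coding-theory textbook (Moon, Theorem~3.3) — so there is nothing to compare against; your argument is the canonical one and fills that gap correctly.
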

We also have the theorem stating the connection between the distance
and capability to detect and correct errors at unknown locations
\cite{MoonLinearBlockCodes},  p.~101.
\begin{theorem}
  Let $C\subseteq\FF^{k+r}$ be a linear code with distance $d$.  Then the
  mapping
  \begin{equation}
    \FF^{k+r}\supset \mathcal{S}^{k+r}_{\leq \nu} \ni e \mapsto H\cdot e \in \FF^r 
  \end{equation}
  is injective if 
  \[ \nu \leq \left\lfloor \frac{d-1}{2} \right\rfloor. \]
\end{theorem}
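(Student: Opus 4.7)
The plan is to argue by contrapositive using the standard coding-theoretic trick: suppose two distinct vectors $e_{1}, e_{2} \in \mathcal{S}^{k+r}_{\leq \nu}$ produce the same syndrome, and show this forces a nonzero codeword of weight strictly less than $d$, contradicting the definition of minimum distance.

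First I would invoke linearity of $H$: from $H e_{1} = H e_{2}$ it follows that $f := e_{1} - e_{2}$ satisfies $H f = 0$, so $f$ lies in $\ker H = C$. Since we assumed $e_{1} \neq e_{2}$, this $f$ is a nonzero codeword, and so its weight is bounded below by $d$.

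Next I would bound $\wt(f)$ from above using the elementary fact that the support of a difference is contained in the union of the two supports. This gives
\[
\wt(f) \;\leq\; \wt(e_{1}) + \wt(e_{2}) \;\leq\; 2\nu \;\leq\; 2\left\lfloor \frac{d-1}{2} \right\rfloor \;\leq\; d - 1.
\]
The only step here with any computational content is the final inequality $2\lfloor (d-1)/2 \rfloor \leq d - 1$, which is trivial by separating the parities of $d$ (equality when $d$ is odd, strict inequality when $d$ is even).

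Combining the two bounds, $d \leq \wt(f) \leq d - 1$, a contradiction; hence $f = 0$ and $e_{1} = e_{2}$, proving injectivity. I do not foresee any serious obstacle: the argument is essentially textbook (compare \cite{MoonLinearBlockCodes}, p.~101) and uses nothing specific about the parity matrix $H$ constructed in this paper --- only linearity of $H$ and the definition of minimum distance. The substantive work of the paper is reserved for the companion statement Theorem~\ref{thm:injectivity-of-error-to-syndrome-mapping} and the distance computation in Proposition~\ref{thm:distance-of-the-code-is-5}, where the specific structure of $P_{5\times k}$ and the exclusion of nontrivial cube roots of unity actually enter.
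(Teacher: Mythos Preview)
Your argument is correct and is exactly the standard textbook proof. Note, however, that the paper does not supply its own proof of this statement at all: it is quoted as a known result with a citation to \cite{MoonLinearBlockCodes}, p.~101, and the paper's original work begins only with Proposition~\ref{thm:distance-of-the-code-is-5}. So there is nothing to compare against; your write-up simply fills in the omitted (and entirely routine) justification.
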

\begin{proposition}\label{thm:distance-of-the-code-is-5}%
  Let $C\subseteq\FF^{k+5}$ be the code defined by the generator
  matrix $G$ and parity check matrix $H$, given by
  equations~\eqref{eqn:generator-matrix},~\eqref{eqn:parity-check-matrix}
  and~\eqref{eqn:parity-matrix}. Then the minimum distance of $C$ is
  $5$.
\end{proposition}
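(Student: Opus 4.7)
The plan is to verify the standard column-rank criterion for minimum distance: one must show (a) $d \leq 5$ by exhibiting a codeword of weight exactly $5$, and (b) $d \geq 5$ by ruling out any nonzero codeword of weight at most $4$.

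For (a), I would construct an explicit codeword of weight $5$ supported entirely on the data block (assuming $k \geq 5$, which is the case of interest). Pick any five data indices $i_{1},\ldots,i_{5}$ and consider the columns $\P{\alpha_{i_{1}}},\ldots,\P{\alpha_{i_{5}}}$ of $H$. Their top four rows form a $4 \times 5$ Vandermonde matrix in the distinct nonzero nodes $\alpha_{i_{j}}$, which has rank $4$, so its right null space is one-dimensional; moreover, because every $4\times 4$ Vandermonde minor in distinct nonzero nodes is nonsingular, every spanning vector of this null space has all five coordinates nonzero. Let $(x_{1},\ldots,x_{5})$ be such a vector. Since the fifth row of $P$ is the coordinatewise sum of rows $2$ and $3$,
\begin{equation*}
  \sum_{j=1}^{5} x_{j}\,(\alpha_{i_{j}}^{2}+\alpha_{i_{j}})
  \;=\; \sum_{j=1}^{5} x_{j}\,\alpha_{i_{j}}^{2} + \sum_{j=1}^{5} x_{j}\,\alpha_{i_{j}}
  \;=\; 0
\end{equation*}
automatically, and thus $\sum_{j} x_{j}\,\P{\alpha_{i_{j}}}=0$. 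This yields a weight-$5$ codeword and shows $d \leq 5$.

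For (b), the cleanest route is to invoke Theorem~\ref{thm:injectivity-of-error-to-syndrome-mapping}: any nonzero codeword of weight $w \leq 4$ could be partitioned into two distinct vectors of weight at most $2$ with identical syndromes, contradicting injectivity on $\mathcal{S}^{k+5}_{\leq 2}$.

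The main obstacle is the logical ordering: this route relies on Theorem~\ref{thm:injectivity-of-error-to-syndrome-mapping}, whose proof occupies the entire following section and is itself a substantial case analysis. A self-contained alternative is to verify directly that every choice of $4$ columns of $H$ is linearly independent, splitting by how many come from $P$ versus from $I_{5\times 5}$: four $P$-columns reduce to a nonsingular $4\times 4$ Vandermonde minor, while any mixture with identity columns $e_{j}$ uses the observation that every $\FF$-linear combination of $P$-columns lies in the hyperplane $\{y\in\FF^{5}:y_{5}=y_{2}+y_{3}\}$, whereas $e_{2}, e_{3}, e_{5}$ each violate this relation; the remaining cases reduce to smaller Vandermonde systems, with the exclusion of the nontrivial cube roots of unity precisely needed to avoid a rogue dependence coming from a factor $\alpha^{2}+\alpha\beta+\beta^{2}$ in the $\alpha^{3}+\beta^{3}$ expansion.
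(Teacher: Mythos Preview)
Your upper bound (a) is correct and is essentially the paper's argument: five columns of $P_{5\times k}$ are dependent because row~5 equals row~2 plus row~3 (both you and the paper tacitly assume $k\ge 5$ here).

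For the lower bound, you are right that invoking Theorem~\ref{thm:injectivity-of-error-to-syndrome-mapping} would be circular: in the paper, that theorem is deduced from Proposition~\ref{thm:distance-of-the-code-is-5} via the standard distance--injectivity link, not the other way around. Your fallback direct approach is indeed the paper's route, but your sketch compresses the argument past the point where it is a proof, and the hyperplane heuristic does not do the work you assign it. Concretely: a dependence among $t$ columns $P_{i_1},\ldots,P_{i_t}$ and $4-t$ identity columns $I_{j_1},\ldots,I_{j_{4-t}}$ forces the $P$-part, restricted to the rows \emph{outside} $\{j_1,\ldots,j_{4-t}\}$, to be singular. So the real task is to show that every $(t{+}1)\times t$ submatrix of $[P_{i_1}\,|\cdots|\,P_{i_t}]$ obtained by deleting $4-t$ rows has rank $t$. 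The hyperplane relation $y_5=y_2+y_3$ does not dispatch this: for instance, with identity columns $I_2$ and $I_3$ the combination $d_2 I_2+d_3 I_3$ lies in that hyperplane whenever $d_2+d_3=0$, so no contradiction arises.

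The paper therefore carries out an explicit case analysis. For $t=2$ there are ten $3\times 2$ submatrices; nine of them contain a $2\times 2$ minor of Vandermonde type that is nonzero by distinctness and nonvanishing of the $\alpha_i$. The one genuinely delicate case is deleting rows~2 and~3 (your ``remaining case'' is not Vandermonde at all: the surviving rows are $1,\ \alpha^3,\ \alpha(\alpha+1)$). There the paper shows two specific $2\times 2$ minors cannot vanish simultaneously unless $\alpha_i+\alpha_j=1$ and $\alpha_i\alpha_j=1$, i.e.\ both are roots of $\zeta^2+\zeta+1=0$. This is where the cube-root exclusion enters, and note that only \emph{one} primitive cube root is excluded, not both; your phrase ``exclusion of the nontrivial cube roots of unity'' overstates the hypothesis. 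For $t=3$ a similar but shorter analysis of five $4\times3$ submatrices is needed. Your intuition about the factor $\alpha^2+\alpha\beta+\beta^2$ is on target, but the argument needs the explicit minor computations to go through.
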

\begin{proof}
  First, let us show that any combination of $4$ distinct columns of $H$ is linearly independent.
  Any $4$ columns of $H$ are obtained by choosing $r\leq 4$ columns of the identity matrix $I_{5\times 5}$,
  and then choosing $t=4-r$ columns of $P_{5\times k}$, i.e.\ vectors of the form
  \begin{equation*}
    P_{i}^\tr=
    \left(\begin{array}{ccccc}
      1 & \alpha_i & \alpha_i^2 &\alpha_i^3 & \alpha_i(\alpha_i+1)
    \end{array}\right)^\tr.
  \end{equation*}
  To explain the method of proof, let us first consider a special case, when
  2 columns of $I_5$ are selected, 
  \begin{align*}
    I_{2}^\tr&=\left(\begin{array}{ccccc}
      0 &1 &0 &0 &0
    \end{array}\right)^\tr,\\
    I_{4}^\tr&=\left(\begin{array}{ccccc}
      0 &0 &0 &1 &0\end{array}
      \right)^\tr.
  \end{align*}
  Let us consider the matrix $A$ which contains the vectors whose
  independence we study.  It is our prerogative to order the vectors
  in any order, and we choose to write the ones coming from the
  identity matrix before the ones coming from the parity matrix.
  Also, we choose to perform column rather than row reduction to find
  the echelon form.
  \begin{equation*}
    A=\left(\begin{array}{cccc}
      I_{2} & I_{4}  & P_{i} &  P_{j}
    \end{array}\right)
    =\left(\begin{array}{cc|cc}
      0 & 0 & 1 & 1 \\
      1 & 0 & \alpha_i & \alpha_j\\
      0 & 0 & \alpha_i^2 & \alpha_j^2\\
      0 & 1 & \alpha_i^3 & \alpha_j^3\\
      0 & 0 &  \alpha_i(\alpha_i+1) &  \alpha_j(\alpha_j+1)
    \end{array}\right).
  \end{equation*}
  We need to prove that this matrix has rank $4$. As rank is invariant
  under elementary column operations, we can subtract $\alpha_i\times column_1$ 
  form column $3$, and $\alpha_j\times column_1$ from column $4$. Similarly,
  we can subtract multiples of $column_{2}$ from $column_3$ and $column_4$ to eliminate
  entries in row $4$. The resulting matrix is:
  \begin{equation*}
    A^{(1)} = \left(\begin{array}{cc|cc}
      0 & 0 & 1 & 1 \\
      1 & 0 & 0 & 0 \\
      0 & 0 & \alpha_i^2 & \alpha_j^2\\
      0 & 1 & 0 & 0 \\
      0 & 0 &  \alpha_i(\alpha_i+1) &  \alpha_j(\alpha_j+1)
    \end{array}\right).
  \end{equation*}
  We can rearrange the rows of $A^{(1)}$ to bring it to the column echelon form:
  \begin{equation*}
    A^{(2)}=\left(\begin{array}{c|c}
      I_{2\times 2} & 0\\
      \hline
      0 & B
    \end{array}\right),\quad
    B=\left(\begin{array}{cc}
      1 & 1 \\
      \alpha_i^2 & \alpha_j^2\\
      \alpha_i(\alpha_i+1) &  \alpha_j(\alpha_j+1)
    \end{array}\right).
  \end{equation*}
  It is easy to see that $\rank A^{(2)}=\rank I_{2\times 2} + \rank B=2+\rank B$.
  Therefore we need to show that $\rank B=2$.

  We observe that matrix $B$ is obtained by erasing $2$ rows of the matrix
  $C$ given by:\label{page:C-matrix}
  \begin{equation*}
    C=\left(\begin{array}{ccccc}
      1 & 1 \\
      \alpha_i & \alpha_j\\
      \alpha_i^2 & \alpha_j^2\\
      \alpha_i^3 & \alpha_j^3\\
      \alpha_i(\alpha_i+1) &  \alpha_j(\alpha_j+1)
    \end{array}\right).
  \end{equation*}
  The general case is also reduced to calculating rank of a specific matrix $B$.
  We consider the matrix
  \begin{equation*}
    C = \left(\begin{array}{cccc}
      P_{i_1} & P_{i_2} & \ldots P_{i_t}
    \end{array}\right)
  \end{equation*}
  obtained from $P_{5\times k}$ by taking any subset of $t\leq 4$ columns,
  and proving that each of its submatrices obtained by erasing $r=4-t$ rows
  has rank $t=4-r$. As $r$ corresponds to the number of columns taken from the
  identity matrix $I_{5\times 5}$, the total rank is $r+t = (4-t)+t=4$ as claimed.

  Let us consider various cases as $t=0,1,2,3,4$.

  If $t=0$ then the matrix has $0$ columns, and the rank is $0$.

  If $t=1$ then $r=3$ entries are erased from a single-column matrix $C=P_{i}$.
  At least one of the two remaining entries is non-zero so $\rank C=1=t$, as required.

  If $t=2$ then $r=2$. The matrix $C^{lm}$, obtained by erasing two rows $l<m$, from $C$, is thus $3\times 2$, and 
  is in one of the ${5\choose 2}=10$ forms:\label{page:C-matrices}
  \begin{align*}
    &C^{45}=\left(\begin{array}{cc}
      1        & 1          \\
      \alpha_i & \alpha_j   \\
      \alpha_i^2 & \alpha_j^2
    \end{array}\right),
    \quad
    C^{35}=\left(\begin{array}{cc}
      1        & 1          \\
      \alpha_i & \alpha_j   \\
      \alpha_i^3 & \alpha_j^3
    \end{array}\right),
    \quad
    C^{34}=\left(\begin{array}{cc}
      1        & 1        \\
      \alpha_i & \alpha_j \\
      \alpha_i(\alpha_i+1) & \alpha_j(\alpha_j+1)
    \end{array}\right),\\
    &
    C^{25}=\left(\begin{array}{cc}
      1        & 1          \\
      \alpha_i^2 & \alpha_j^2   \\
      \alpha_i^3 & \alpha_j^3
    \end{array}\right),
    \quad
    C^{24}=\left(\begin{array}{cc}
      1        & 1        \\
      \alpha_i^2 & \alpha_j^2 \\
      \alpha_i(\alpha_i+1) & \alpha_j(\alpha_j+1)
    \end{array}\right),
    \quad
    C^{23}=\left(\begin{array}{cc}
      1        & 1        \\
      \alpha_i^3 & \alpha_j^3 \\
      \alpha_i(\alpha_i+1) & \alpha_j(\alpha_j+1)
    \end{array}\right),\\
    &
    C^{15}=\left(\begin{array}{cc}
      \alpha_i & \alpha_j    \\
      \alpha_i^2 & \alpha_j^2 \\
      \alpha_i^3 & \alpha_j^3
    \end{array}\right),
    \quad
    C^{14}=\left(\begin{array}{cc}
      \alpha_i & \alpha_j \\
      \alpha_i^2 & \alpha_j^2 \\
      \alpha_i(\alpha_i+1) & \alpha_j(\alpha_j+1)
    \end{array}\right),
    \quad
    C^{13}=\left(\begin{array}{cc}
      \alpha_i & \alpha_j \\
      \alpha_i^3 & \alpha_j^3 \\
      \alpha_i(\alpha_i+1) & \alpha_j(\alpha_j+1)
    \end{array}\right),\\
    &C^{12}=\left(\begin{array}{cc}
      \alpha_i^2 & \alpha_j^2 \\
      \alpha_i^3 & \alpha_j^3 \\
      \alpha_i(\alpha_i+1) & \alpha_j(\alpha_j+1)
    \end{array}\right).
  \end{align*}
  In order for each of the matrices to have the required rank $2$, every
  one of the above matrices should have a $2\times 2$ non-singular
  submatrix.

  First 3 matrices, $C^{45}$, $C^{35}$ and $C^{34}$, contain the matrix
  \begin{equation*}
    \left(\begin{array}{cc}
      1        & 1          \\
      \alpha_i & \alpha_j   \\
    \end{array}\right)
  \end{equation*}
  which has determinant $\alpha_j-\alpha_i\neq 0$, in view of our
  assumption that all $\alpha_i$ are distinct.
  
  Matrices $C^{25}$ and $C^{24}$ contain the matrix
  \begin{equation*}
    \left(\begin{array}{cc}
      1        & 1          \\
      \alpha_i^2 & \alpha_j^2   \\
    \end{array}\right)
  \end{equation*}
  which has determinant $\alpha_j^2-\alpha_i^2=(\alpha_j-\alpha_i)^2$
  by the Frobenius identity. Hence, it is also $\neq 0$.
  
  Matrices $C^{15}$ and $C^{14}$ contain the matrix
  \begin{equation*}
    \left(\begin{array}{cc}
      \alpha_i        & \alpha_j          \\
      \alpha_i^2 & \alpha_j^2   \\
    \end{array}\right)
  \end{equation*}
  which has determinant $\alpha_i\alpha_j^2-\alpha_j\alpha_i^2=\alpha_i\alpha_j(\alpha_j-\alpha_i)\neq 0$, as
  $\alpha_i$ are all non-zero and distinct.

  Matrix $C^{13}$ contains the matrix
  \begin{equation*}
    \left(\begin{array}{cc}
      \alpha_i        & \alpha_j          \\
      \alpha_i^3 & \alpha_j^3   \\
    \end{array}\right)
  \end{equation*}
  which has determinant
  $\alpha_i\alpha_j^3-\alpha_j\alpha_i^3=\alpha_i\alpha_j(\alpha_j^2-\alpha_i^2)=\alpha_i\alpha_j(\alpha_j-\alpha_i)^2\neq
  0$ in view of $\alpha_i,\alpha_j\neq 0$ and $\alpha_i\neq\alpha_j$.

  Matrix $C^{12}$ contains the matrix
  \begin{equation*}
    \left(\begin{array}{cc}
      \alpha_i^2        & \alpha_j^2          \\
      \alpha_i^3 & \alpha_j^3   \\
    \end{array}\right)
  \end{equation*}
  which has determinant
  $\alpha_i^2\alpha_j^3-\alpha_j^2\alpha_i^3=\alpha_i^2\alpha_j^2(\alpha_j-\alpha_i)\neq0$
  in view of $\alpha_i,\alpha_j\neq 0$ and $\alpha_i\neq\alpha_j$.
  
  The only matrix left in this group of 10 is $C^{23}$. It has three $2\times2$ submatrices, some of which can have
  the determinant $0$. We claim that the following two submatrices of $C^{23}$,
  \begin{equation*}
    \left(\begin{array}{cc}
      1 & 1\\
      \alpha_{i}\left(\alpha_{i}+1\right) & \alpha_{j}\left(\alpha_{j}+1\right)
    \end{array}\right),
    \quad\text{and}\quad
    \left(\begin{array}{cc}
      \alpha_{i}^{3} & \alpha_{j}^{3}\\
      \alpha_{i}\left(\alpha_{i}+1\right) & \alpha_{j}\left(\alpha_{j}+1\right)
    \end{array}\right),
  \end{equation*}
  cannot be both singular. Indeed, their determinants are:
  \begin{equation*}
    \alpha_{j}\left(\alpha_{j}+1\right)+\alpha_{i}\left(\alpha_{i}+1\right),
    \qquad
    \alpha_{i}^3\alpha_j\left(\alpha_{j}+1\right)+\alpha_{j}^3\alpha_i\left(\alpha_{i}+1\right).
  \end{equation*}
  Let us suppose that both determinants are $0$. We then have (after dividing the second one by $\alpha_i\alpha_j\neq 0$):
  \begin{equation*}
    \left\{\begin{array}{ccc}
    \alpha_{j}\left(\alpha_{j}+1\right)&=&\alpha_{i}\left(\alpha_{i}+1\right)\\
    \alpha_{i}^2\left(\alpha_{j}+1\right)&=&\alpha_{j}^2\left(\alpha_{i}+1\right).
    \end{array}\right.
  \end{equation*}
  The first equation and Frobenius identity imply:
  \begin{equation*}
    0=\alpha_i^2+\alpha_j^2+\alpha_i+\alpha_j = (\alpha_i+\alpha_j)^2+(\alpha_i+\alpha_j).
  \end{equation*}
  Thus, $\alpha_i+\alpha_j=0$ or $\alpha_i+\alpha_j=1$. Since $\alpha_i\neq \alpha_j$,
  $\alpha_i+\alpha_j\neq 0$. Thus $\alpha_i+\alpha_j=1$.

  If $\alpha_i=1$ then $0=\alpha_j(\alpha_j+1)$. Hence $\alpha_j=1$
  ($\alpha_j\neq 0$ by assumption).  But then $\alpha_i=\alpha_j=1$
  which contradicts the assumption that $\alpha$'s are distinct. Hence,
  we can divide equations side by side and obtain
  \begin{equation*}
    \frac{\alpha_j}{\alpha_i^2}=\frac{\alpha_i}{\alpha_j^2},
  \end{equation*}
  or $\alpha_i^3=\alpha_j^3$. Therefore,
  \begin{align*}
    1&=(\alpha_i+\alpha_j)^3=\alpha_i^3+3\alpha_i^2\alpha_j+3\alpha_i\alpha_j^2+\alpha_j^3\\
    &=\alpha_i^3+\alpha_i^2\alpha_j+\alpha_i\alpha_j^2+\alpha_j^3=\alpha_i\alpha_j(\alpha_i+\alpha_j) = \alpha_i\alpha_j.
  \end{align*}
  Hence $\alpha_i\alpha_j=1$ and $\alpha_i+\alpha_j=1$, which implies
  that $\alpha_i$ and $\alpha_j$ are the distinct roots of the equation
  $\alpha^2+\alpha+1=0$, i.e.\ are the distinct $3^{rd}$ roots of unity
  in $\FF$.  But we excluded one of the $3^{rd}$ roots of unity from
  amongst $\alpha_j$, $j=1,2,\ldots,k$, so we obtained a contradiction. Thus at least one of the matrices is invertible.

  If $t=3$ then $r=4-t=1$. In this case, matrix
  $C$ is given by:
  \begin{equation*}
    C=\left(\begin{array}{ccc}
      1 & 1 & 1\\
      \alpha_i & \alpha_j & \alpha_l\\
      \alpha_i^2 & \alpha_j^2 & \alpha_l^2\\
      \alpha_i^3 & \alpha_j^3 & \alpha_l^3\\
      \alpha_i(\alpha_i+1) &  \alpha_j(\alpha_j+1) &  \alpha_l(\alpha_l+1)
    \end{array}\right).
  \end{equation*}
  The submatrices obtained by erasing a single row from $C$ are:
  \begin{align*}
    &C^{5}=\left(\begin{array}{ccc}
      1 & 1 & 1\\
      \alpha_i & \alpha_j & \alpha_l\\
      \alpha_i^2 & \alpha_j^2 & \alpha_l^2\\
      \alpha_i^3 & \alpha_j^3 & \alpha_l^3\\
    \end{array}\right),
    \enskip
    C^{4}=\left(\begin{array}{ccc}
      1 & 1 & 1\\
      \alpha_i & \alpha_j & \alpha_l\\
      \alpha_i^2 & \alpha_j^2 & \alpha_l^2\\
      \alpha_i(\alpha_i+1) &  \alpha_j(\alpha_j+1) &  \alpha_l(\alpha_l+1)
    \end{array}\right),\enskip
    \\
    &C^{3}=\left(\begin{array}{ccc}
      1 & 1 & 1\\
      \alpha_i & \alpha_j & \alpha_l\\
      \alpha_i^3 & \alpha_j^3 & \alpha_l^3\\
      \alpha_i(\alpha_i+1) &  \alpha_j(\alpha_j+1) &  \alpha_l(\alpha_l+1)
    \end{array}\right),
    \enskip
    C^{2}=\left(\begin{array}{ccc}
      1 & 1 & 1\\
      \alpha_i^2 & \alpha_j^2 & \alpha_l^2\\
      \alpha_i^3 & \alpha_j^3 & \alpha_l^3\\
      \alpha_i(\alpha_i+1) &  \alpha_j(\alpha_j+1) &  \alpha_l(\alpha_l+1)
    \end{array}\right),
    \\
    &C^{1}=\left(\begin{array}{ccc}
      \alpha_i & \alpha_j & \alpha_l\\
      \alpha_i^2 & \alpha_j^2 & \alpha_l^2\\
      \alpha_i^3 & \alpha_j^3 & \alpha_l^3\\
      \alpha_i(\alpha_i+1) &  \alpha_j(\alpha_j+1) &  \alpha_l(\alpha_l+1)
    \end{array}\right).
  \end{align*}
  Of these 5 matrices, matrix $C^{5}$ and $C^{4}$ contain the $3\times3$ Vandermonde matrix:
  \begin{equation*}
    \left(\begin{array}{ccc}
      1 & 1 & 1\\
      \alpha_i & \alpha_j & \alpha_l\\
      \alpha_i^2 & \alpha_j^2 & \alpha_l^2\\
    \end{array}\right).
  \end{equation*}
  which has the determinant
  $(\alpha_j-\alpha_i)\,(\alpha_l-\alpha_i)\,(\alpha_l-\alpha_j) \neq 0$, in
  view of the fact that $\alpha_i$ are all distinct. Matrix $C^{1}$ of the 5
  contains a matrix related to Vandermonde,
  \begin{equation*}
    \left(\begin{array}{ccc}
      \alpha_i & \alpha_j & \alpha_l\\
      \alpha_i^2 & \alpha_j^2 & \alpha_l^2\\
      \alpha_i^3 & \alpha_j^3 & \alpha_l^3\\
    \end{array}\right)
  \end{equation*}
  which has the determinant
  $\alpha_i\alpha_j\alpha_l(\alpha_j-\alpha_i)\,(\alpha_l-\alpha_i)\,(\alpha_l-\alpha_j)
  \neq 0$.

  This leaves matrix $C^{3}$ and $C^{2}$. We find, using CAS, that
  \begin{equation*}
    \left|\begin{array}{ccc}
    \alpha_i & \alpha_j & \alpha_l\\
    \alpha_i^3 & \alpha_j^3 & \alpha_l^3\\
    \alpha_i(\alpha_i+1) &  \alpha_j(\alpha_j+1) &  \alpha_l(\alpha_l+1)
    \end{array}\right| = 
    \alpha_i\alpha_j\alpha_l(\alpha_i-\alpha_j)\,(\alpha_i-\alpha_l)\,(\alpha_j-\alpha_l)\neq 0,
  \end{equation*} 
  (this can also be seen by subtracting row 1 from row 3, and then swapping rows 2 and 3; the matrix
  becomes the same as $C^{1}$) and
  \begin{equation*}
    \left|\begin{array}{ccc}
    \alpha_i^2 & \alpha_j^2 & \alpha_l^2\\
    \alpha_i^3 & \alpha_j^3 & \alpha_l^3\\
    \alpha_i(\alpha_i+1) &  \alpha_j(\alpha_j+1) &  \alpha_l(\alpha_l+1)
    \end{array}\right| = 
    \alpha_i\alpha_j\alpha_l(\alpha_i-\alpha_j)\,(\alpha_i-\alpha_l)\,(\alpha_j-\alpha_l)\neq 0.
  \end{equation*}
  (this factorization is valid over a field of any characteristic; we
  can also use the following argument: subtract row 1 from row 3, then
  swap rows 1 and 3; the resulting matrix is now the same as matrix 5).  Since the
  matrix under the above determinant is a submatrix of both matrices $C^{3}$
  and $C^{2}$, they both have rank $3$.

  Finally, when $t=4$, we need to prove that $C$ itself has rank
  $4$. However, we observe that the first four rows of this matrix
  form a $4\times 4$ Vandermonde matrix, which is non-singular in view
  of $\alpha_i$ being distinct.  

  Clearly, any set of $5$ columns of the $P_{5\times k}$ matrix is
  linearly dependent because row $5$ is a sum of row $2$ and
  $3$. Hence, the proof is complete.
\end{proof}
As a corollary of the proof we obtain the following useful criterion:
\begin{theorem}[A criterion of a systematic code to have distance $d+1$]
  \label{thm:distance-of-the-code-is-5-criterion}
  A systematic code with parity matrix $P$ has distance $d+1$ iff
  every matrix $P'$ obtained from $P$ by taking $t\leq d$ columns of $P$
  and deleting $d-t$ rows of $P$ has rank $t$, i.e. $P'$ has a non-singular
  submatrix of size $t\times t$.
\end{theorem}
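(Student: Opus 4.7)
The idea is to reduce the criterion directly to the classical characterization of minimum distance recalled earlier in the section: the code has distance at least $d+1$ iff every set of $d$ columns of the parity check matrix $H$ is linearly independent. Since the code is systematic, $H = [\,P \mid I_r\,]$ with $P$ of size $r\times k$, so the task is purely to translate the linear-independence condition on columns of $H$ into a rank condition on submatrices of $P$.

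First, I would fix an arbitrary selection of $d$ columns of $H$ and split them into $t$ columns taken from $P$, say $P_{i_1},\ldots,P_{i_t}$, and $d-t$ columns taken from $I_r$, say the standard basis vectors $e_{j_1},\ldots,e_{j_{d-t}}$. Form the $r\times d$ matrix $A$ whose columns are these vectors. Since each $e_{j_\ell}$ has a single $1$ in row $j_\ell$ and zeros elsewhere, elementary column operations (subtract suitable multiples of each $e_{j_\ell}$ column from each $P_{i_s}$ column) zero out the entries in rows $j_1,\ldots,j_{d-t}$ of the $P_{i_s}$ columns without touching the identity columns. A row permutation then brings $A$ to block form
\[
A \;\sim\; \begin{pmatrix} I_{d-t} & 0 \\ 0 & B \end{pmatrix},
\]
where $B$ is the $(r-d+t)\times t$ matrix obtained from $P$ by keeping columns $i_1,\ldots,i_t$ and deleting rows $j_1,\ldots,j_{d-t}$, i.e.\ exactly a matrix $P'$ of the type described in the criterion.

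Consequently $\rank A = (d-t) + \rank B$, so the chosen $d$ columns of $H$ are linearly independent iff $\rank B = t$, which, since $B$ has $t$ columns, is the same as $B$ having a nonsingular $t\times t$ submatrix. The key book-keeping step is that, as we range over all $d$-subsets of columns of $H$, the parameters $(t,\{i_1,\ldots,i_t\},\{j_1,\ldots,j_{d-t}\})$ range over all admissible $0\le t \le d$, all $t$-subsets of columns of $P$, and all $(d-t)$-subsets of rows of $P$. Thus the statement ``every $d$-tuple of columns of $H$ is linearly independent'' is equivalent to ``every $P'$ as in the criterion has rank $t$.'' Combined with the cited theorem from \cite{MoonLinearBlockCodes}, this proves the claimed characterization of distance $d+1$.

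The main obstacle is not any single calculation (the reduction step is essentially one line) but rather the clean book-keeping: verifying the bijection between $d$-subsets of columns of $H$ and the triples $(t, \text{columns of }P, \text{rows of }P)$, and handling the boundary cases $t=0$ (the empty matrix, vacuously of rank $0$) and $t=d$ (no identity columns chosen, no rows deleted, so $P'$ is the full $r\times d$ submatrix formed by the selected columns of $P$). One should also note that the statement should be understood as ``distance at least $d+1$'' in general, and becomes exactly $d+1$ when supplemented by the easy observation that some $(d+1)$-tuple of columns of $H$ is linearly dependent — in the application of this criterion to Proposition~\ref{thm:distance-of-the-code-is-5}, this was handled by the final sentence noting that row $5$ of $P_{5\times k}$ equals the sum of rows $2$ and $3$.
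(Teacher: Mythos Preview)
Your proposal is correct and follows essentially the same approach as the paper, which presents this theorem as a corollary of the proof of Proposition~\ref{thm:distance-of-the-code-is-5} without a separate argument: the column-reduction to block form $\begin{pmatrix} I_{d-t} & 0 \\ 0 & B \end{pmatrix}$ is exactly the manipulation carried out there for the special case $d=4$ (see the passage from matrix $A$ to $A^{(2)}$ on page~\pageref{page:C-matrix}). Your observation that the criterion as stated literally characterizes distance \emph{at least} $d+1$, and needs the extra dependence of some $d+1$ columns to pin down equality, is a valid refinement that the paper leaves implicit.
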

\if0
As we have seen, the proof of Proposition~\ref{thm:distance-of-the-code-is-5}
focused on identifying non-singular submatrices of $P$.
The following theorem makes the connection with minimum distance decoding.
\begin{theorem}[A criterion for an error vector to have minimum weight]
  \label{thm:minimum-weight criterion}
  Let $P$ be a parity matrix with $k$ columns and $q$ rows, $H=[P|I]$
  be the associated parity check matrix, and let $s\neq 0$ be a
  fixed, non-trivial syndrome vector.
  Let $e$ be a solution to the equation
  \[ H\e = s  \]
  with minimum weight. Let
  \[ I=\{i\in\{1,2,\ldots,k\}\,:\, e_i\neq 0\}\]
  Let $Q$ be the minor matrix of $P$, obtained from $P$ by taking
  columns with indices in $I$.
  Then the matrix $Q$ has nullity $0$. As a corollary, $Q$ contains
  a square, non-singular matrix with the same number of columns.
\end{theorem}
\begin{proof}
  If the nullity of $Q$ is not $0$, there exists a vector $f\neq 0$ in the
  nullspace of $H$, so that for every $t\in\FF$:
  \[ H\,(e+t\cdot f) = s \]
  Moreover, $H\,f=0$ implies $P\,f_{data}+f_{parity} = 0$. If $f_{data} = 0$
  then $f_{parity}$ is also $0$; hence $f_{data}\neq 0$.
  This implies that for some $t$ the vector $e+t\cdot f$ has a $0$
  data component.
\end{proof}
\begin{remark}[Maximal size non-singular matrices may not be optimal]
  One may think that large non-singular matrices of $P$ will
  provide the minimum weight solution. This may not be the case: the minimum
  weight solution may be obtained by considering a non-singular matrix
  of any size.
\end{remark}
\fi

\section{Recovery from up to $4$ Erasures (known locations errors)}
\label{sec:erasures}
The following theorem addresses the situation when up to $4$ drives
\emph{at known locations} have been erased (or corrupted).
\begin{theorem}[Correctability of up to $4$ erasures]
\label{thm:correctability-of-four-erasures}
The code with generator matrix $G$ defined by \eqref{eqn:generator-matrix}
allows recovery from up to $4$ drive failures at known locations.
\end{theorem}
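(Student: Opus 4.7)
The plan is to deduce this result directly from the distance-$5$ property of the code established in Proposition~\ref{thm:distance-of-the-code-is-5}. The general coding-theoretic fact quoted earlier (see the discussion following Theorem~\ref{thm:dimension-length-and-distance}), namely that an $[N,K,D]$ linear code corrects $E$ errors and $Z$ known erasures whenever $Z+2E\leq D-1$, specializes in the pure-erasure case $E=0$ to $Z\leq D-1=4$. So the theorem is essentially a corollary of Proposition~\ref{thm:distance-of-the-code-is-5}, and the proof's task is simply to spell out the recovery mechanism.

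The first step is to set up the algebraic recovery problem. Let $I\subseteq\{1,\ldots,k+5\}$ be the set of erased positions, with $|I|=Z\leq 4$, and write $H_I$ for the submatrix of $H$ formed from the columns indexed by $I$, and $H_{I^c}$ for the complementary submatrix. Split the transmitted codeword $t=G\cdot m$ into the known part $t_{I^c}$ (stored on the surviving drives) and the unknown part $t_I$. From $H\cdot t=0$ and the fact that the field has characteristic $2$ we obtain the linear system
\begin{equation*}
  H_I\cdot t_I = H_{I^c}\cdot t_{I^c},
\end{equation*}
whose right-hand side is fully computable from surviving data.

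The second step, which carries the content of the argument, is to observe that $H_I$ has linearly independent columns. This is immediate from Theorem~\ref{thm:distance-of-the-code-is-5-criterion} combined with the distance-$5$ property: any $Z\leq 4$ columns of $H$ are linearly independent. Consequently $H_I$ has rank $Z$, so the displayed system has a unique solution $t_I$, which can be obtained by any standard method, e.g.\ by selecting $Z$ rows of $H_I$ forming an invertible $Z\times Z$ block and inverting it.

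There is essentially no obstacle here; all the real work was done in the distance computation. The only remaining practical concern---given $I$, identifying a size-$Z$ non-singular submatrix of $H_I$ so that a concrete inversion can be carried out---is a matter for the decoding algorithm developed in Section~\ref{sec:decoding-algorithm}, not for the existence statement that this theorem asserts.
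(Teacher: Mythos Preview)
Your proof is correct and follows essentially the same approach as the paper's: both reduce the recovery problem to a linear system whose coefficient matrix is the submatrix of $H$ on the erased columns, and both invoke the distance-$5$ property to guarantee that any $\leq 4$ such columns are linearly independent, yielding a unique solution. The only cosmetic difference is that the paper phrases the system in terms of the error vector $e$ and syndrome $s=H\cdot r$ (with erased entries of $r$ set to zero), whereas you work directly with the unknown codeword entries via $H_I\cdot t_I = H_{I^c}\cdot t_{I^c}$; these formulations are equivalent since $s=H\cdot r=H_{I^c}\cdot t_{I^c}$ when the erased positions of $r$ are zeroed out.
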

\begin{proof}
  The code is a systematic code with distance $d=5$.  The proof given
  below works for an arbitrary code with distance $d$ and parity check
  matrix $H$. Hence, any $d-1$ columns of $H$ are linearly
  independent.  If the locations of the erased drives are
  $i_1,i_2,\ldots,i_q $ then $e_i=0$ if
  $i\notin\{i_1,i_2,\ldots,i_q\}$.  Therefore
  \[ s= H\,e = \sum_{m=1}^q H_{i_m}e_{i_m} . \]
  Let $H'=\left(\begin{array}{c|c|c|c}H_{i_1}&H_{i_2}&\ldots&H_{i_q}\end{array}\right)$
  be the submatrix of $H$.
  If $q\leq d-1$ then there is a subset of $q\leq d-1$ rows of $H'$ such that the
  resulting matrix, which we will call $H''$, is non-singular.
  In view of 
  \[ H''\cdot e' = s' \]
  where $s'$ is obtained from $e$ and $s$ by selecting the same rows,
  and
  $e'=\left(\begin{array}{cccc}e_{i_1}&e_{i_2}&\ldots&e_{i_q}\end{array}\right)$,
  has a unique solution $e'$. Clearly, we can complete $e'$ to $e$ in
  a unique way, by setting
  \begin{equation*} 
    e_i=\begin{cases} 
    e'_m &\text{if $i=i_m$ for some $m$, $1\leq m\leq q$,}\\
    0 &\text{otherwise.}
    \end{cases}
  \end{equation*}
  The transmitted vector $t$ is found from $r=t+e$, yielding $t=r-e=r$.
  Note that $t=r+e$ if the field $\FF$ has characteristic $2$.
\end{proof}
\begin{remark}[On the choice of a non-singular submatrix]
  The choice of a specific $4\times 4$ non-singular submatrix in the
  proof of Theorem~\ref{thm:correctability-of-four-erasures} was
  further discussed, and made explicit, in the proof of
  Proposition~\ref{thm:distance-of-the-code-is-5}.
\end{remark}

\subsection{Recovery when positions of errors are known:}
Let us state a general principle which works when positions of all
errors are known, and so $E=0$.  Suppose that we have a
systematic code of distance $d + 1$ and we have a failure of $t$ data
disks at known locations (known erasures).  Let
$I\subseteq\{1,2,\ldots k\}$ be the set of data error locations.  In
addition, let $J$ be a subset of no more than $d-t$ known parity
error locations.  We know that a submatrix $P'$ of $P$ (the parity
matrix) obtained by taking only $t$ columns with indices in $I$ and
deleting $d-t$ rows of $P$ contains a $t\times t$ submatrix matrix
$P''$ which is non-singular (Theorem
\ref{thm:distance-of-the-code-is-5-criterion}). Thus we may choose
$P''$ whose row indices are not in $J$, i.e. do not correspond to
known parity errors.  Let the set of row indices be $K$, where $|K|=t$
and $K\cap J=\emptyset$.  We can use $P''$ to find the error values of
data errors. Then we may find the parity error values for all parities
$j\notin K$.  We simply use the equation:
\[ e_{k+j} = s_j - \sum_{i\in I} p_{j,i} e_i \]
These values, $e_{k+j}$, $j\notin K$ may or may not be $0$, which
determines the actual number of parity errors. In order to satisfy the
assumptions about parity errors, $e_{k+j}=0$ for $j\notin J$ is
required. Hence, when all errors are erasures, the decoding problem is
solved by linear algebra methods. As we will see, in other cases
non-linear methods are required, requiring sometimes delicate analysis
of systems of polynomial equations.

\section{Solving Quadratic Equations over $GF\left(2\right)$}
\label{sec:quadratic-equation}
The equation $a\,x^2+b\,x+c=0$ is solved by the quadratic formula
$$x_{1,2}=\frac{-b\pm\sqrt{b^2-4\,a\,c}}{2\,a}$$ over any field of
characteristic $\neq 2$. If the characteristic is $2$, the quadratic
formula obviously cannot work as the denominator is $0$. In the current
section we develop a replacement for the quadratic formula, which will allow us
to solve quadratic equations.
\begin{lemma}[On the difference of the roots of a quadratic equation]
  \label{thm:difference-of-roots-of-quadratic-equation}
  Let $a, b, c$ be constants in a Galois field $\FF$ of characteristic $2$
  and let $x$ be a variable. If $x_1\in\FF$ is
  a root of the quadratic equation
  \begin{equation}
    \label{eqn:quadratic-equation}
    a\,x^2+b\,x+c=0,\quad a\neq 0
  \end{equation}
  then the second root is given by the equation:
  \begin{equation*}
    x_2=x_1+\frac{b}{a}.
  \end{equation*}
\end{lemma}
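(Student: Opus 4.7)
The plan is to exploit two features of characteristic $2$: first, that $-1 = 1$, so Vieta's formulas read particularly simply; and second, the Frobenius identity, which lets me verify the claim by direct substitution.

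My preferred approach is the Vieta route. I would write the polynomial $a x^2 + b x + c$ as $a(x-x_1)(x-x_2) = a(x+x_1)(x+x_2)$ using $-1 = 1$, expand to obtain $a x^2 + a(x_1 + x_2) x + a x_1 x_2$, and compare coefficients. The $x$-coefficient gives $b = a(x_1 + x_2)$, and since $a \neq 0$ I can divide to conclude $x_1 + x_2 = b/a$, hence $x_2 = x_1 + b/a$ (again using characteristic $2$ to move $x_1$ across the equals sign). The only subtlety is that this presupposes the polynomial splits over $\FF$, so I should phrase the argument instead as: define $x_2 := x_1 + b/a$ and verify it is a root; then, since a quadratic can have at most two roots over any field, $x_1$ and $x_2$ are the full list (and they coincide precisely when $b = 0$).

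The verification goes as follows. Substituting $x_2 = x_1 + b/a$ into $a x^2 + b x + c$ and applying the Frobenius identity $(u+v)^2 = u^2 + v^2$:
\begin{equation*}
  a\,(x_1 + b/a)^2 + b\,(x_1 + b/a) + c
  = a\,x_1^2 + a\cdot b^2/a^2 + b\,x_1 + b^2/a + c.
\end{equation*}
The two middle terms both equal $b^2/a$ and cancel in characteristic $2$, leaving $a\,x_1^2 + b\,x_1 + c$, which is $0$ by hypothesis. Thus $x_2$ is indeed a root.

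There is really no hard step here; the only thing to be careful about is the default habit of writing $x_1 + x_2 = -b/a$ and forgetting that the sign disappears. I would present the proof in the compact ``define and verify'' form above, since it avoids any appeal to splitting and makes the role of Frobenius fully explicit — a point worth emphasizing, given that the next section presumably uses this lemma to reduce solving a quadratic over $GF(2^m)$ to finding a single root.
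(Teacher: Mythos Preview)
Your proof is correct and follows essentially the same approach as the paper: define $x_2 = x_1 + b/a$ and verify directly, using the Frobenius identity, that $a x_2^2 + b x_2 + c$ reduces to $a x_1^2 + b x_1 + c = 0$. The Vieta motivation you give is extra context the paper omits, but the actual argument is the same direct substitution.
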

\begin{proof}
  By direct calculation:
  \begin{align*}
    a\,x_2^2+b\,x_2+c&=a\left(x_1+\frac{b}{a}\right)^2+b(x_1+\frac{b}{a})+c\\
    &=a\left(x_1^2+\frac{b^2}{a^2}\right)+b\left(x_1+\frac{b}{a}\right)+c\\
    &=a\,x_1^2+\frac{b^2}{a}+b\,x_1+\frac{b^2}{a}+c\\
    &=\left(a\,x_1^2+b\,x_1+c\right)+2\frac{b^2}{a}=0.
  \end{align*}
\end{proof}
We shall focus on the algorithmic aspects of solving the quadratic
equation.  An obvious algorithm over a finite field is obtained by
trying all the elements until we find a root, i.e.\ perform a search.
The cost of this grows as the size of the field grows. In this section
we will develop an algorithm which uses constant time and linear
memory.

Our first observation is that if $b=0$ then the quadratic equation
becomes $a\,x^2+c=0$, or 
\begin{equation}
  \label{eqn:quadratic-no-linear-term}
  x^2=\frac{c}{a}.
\end{equation}
If $c=0$ then $x=0$.
if $c\neq 0$ then we need to compute the square root. The
Frobenius map $J:\FF\to\FF$ given by
\begin{equation}
  J(x)=x^2
\end{equation}
is an automorphism of $\FF$ and it is a linear map when $\FF$ is treated
as a vector space over the field $GF(2)$. Therefore, solving the equation $J(x)=y$
is tantamount to inverting a matrix over $GF(2)$. If $\FF=GF(2^m)$ then $\FF$
has dimension $m$, and $J$ is a $m\times m$ matrix of elements of $GF(2)$.
The only solution to equation~\eqref{eqn:quadratic-no-linear-term} is thus:
\begin{equation}
  \label{eqn:quadratic-no-linear-term-solution}
  x = J^{-1}\left(\frac{c}{a}\right).
\end{equation}
Of course, matrix $J$ may be precomputed, and matrix multiplication can be used to
find $x$. Alternatively, we can tabulate all square roots, and use a lookup table.

In the remainder of the paper we will write the unique solution to $x^2=y$ as $\sqrt{y}$.

We may assume that $a=1$ (i.e.\ make quadratic equation monic), by
dividing \eqref{eqn:quadratic-equation} by $a$.  As the next step, we
scale $x$, so that $b=1$. Let $x=b\,y$. We obtain:
\begin{align*}
  x^2+b\,x+c &= b^2\,y^2+b^2\,y+c= b^2\,y(y+1)+c,\\
  y(y+1) &= \frac{c}{b^2}.
\end{align*}
Hence, we reduced the quadratic equation~\eqref{eqn:quadratic-equation} to:
\begin{equation}
  \label{eqn:modified-quadratic-equation}
  y(y+1)=d
\end{equation}
where $d\in\FF$ is given and seek $y$. It turns out that this equation is also easy
to solve. If $d=0$ then $y=0$ or $y=1$. If $d\neq 0$ then we write the equation
in terms of the Frobenius automorphism $J$:
\begin{equation*}
  y(y+1)=y^2+y = J(y)+y = (J+I)\,y
\end{equation*}
where $I:\FF\to\FF$ is the identity map. We note that $J+I$ is a
linear transformation over $GF(2)$ given by a square matrix. We can
solve the equation $(J+I)\,y=d$ as a linear system of $m$ equations in
$m$ unknowns over $GF(2)$. We note that
\begin{equation*}
  \ker(J+I)=\{y\,:\, (J+I)\,y=0\} = \{0,1\}\subseteq\FF,
\end{equation*}
as $(J+I)\,y=0$ is equivalent to $y(y+1)=0$. Hence, the kernel is
$1$-dimensional as a linear subspace of the vector space $\FF$ over
the field $GF(2)$. This implies that the solution set of $(J+I)\,y=d$ is
a $1$-dimensional coset $y_0+\ker(J+I)$, where $y_0$ is a particular
solution, or no solution exists.  As $1$-dimensional subspaces in
characteristic $2$ are $2$-element sets, there are thus exactly $2$
solutions to the equation $y(y+1)=d$ for every $y$. Moreover, if the
two roots are $y_1$ and $y_2$ then $y_1-y_2\in \{0,1\}$
(i.e.\ $y_1-y_2\in\ker(J+I)$). Hence, if $y_1(y_1+1)=d$ then
$y_2=y_1+1$ is the second solution, and thus $y_1y_2=d$.

From the above discussion it follows that $J+I$ has nullity $1$ and
thus by the Rank-Nullity~Theorem its rank is $m-1$, as the dimension
of $\FF$ as a vector space over $GF(2)$ is $m$. Thus $\image(J+I)$ has
codimension $1$. Hence, the equation $y(y+1)=d$ has a solution for
exactly a half of the elements $d\in\FF$. Linear algebra tells 
us that there exists a linear functional $\varphi:\FF\to GF(2)$ such that
\begin{equation*}
  y(y+1)=d\quad \text{has a solution, iff}\quad \varphi(d)=0.
\end{equation*}
Moreover, $\varphi$ is the unique non-zero solution to
$(J^*+I)\varphi=0$, where $J^*:\FF^*\to \FF^*$ is the dual operator of
$J$, acting on the dual space $\FF^*$.  Equivalently, $\varphi$ is a
linear functional such that $\varphi\circ(J+I)=0$.

The above somewhat abstract discussion can be made concrete, if
$\FF=GF(2^m)$, and $D$ is a primitive element satisfying the equation
$g(D)=0$, where $g$ is the chosen primitive polynomial
$g(x)=\sum_{j=0}^{m}\,g_jx^j$ ($g_j\in GF(2)$ for $j=0,1,\ldots,m$ and
$g_m=1$). The set $\{1,D,D^2,\ldots,D^{m-1}\}$ is a basis of $\FF$ as
a vector space over $GF(2)$. This basis is used to identify
$d=\sum_{j=0}^{m-1} d_jD^j$ with a vector $(d_0,d_1,\ldots,d_{m-1})$
in the vector space $GF(2)^m$.  The linear map $J$ is represented with
respect to this basis by the matrix $\left(J_{ij}\right)$,
$i,j=0,1,\ldots,m-1$, where the entries $J_{ij}$ are found from the
formula:
\begin{equation}
  D^{2j}=\sum_{i=0}^{m-1}J_{ij}\,D^{i},\quad i,\;j=0,1,\ldots,m-1.
\end{equation}
Let $z$ be the left eigenvector of $J$ for eigenvalue $1$,
i.e.\ $(J+I)^\tr\,z=0$. In characteristic $2$ this vector is
unique. The functional $\varphi$ is then given by
$\varphi(x)=z^\tr\,x$. The condition of solvability of $(J+I)\,y=d$ is
$\varphi(z)=z^\tr d=0$. If $d=\sum_{j=0}^{m-1} d_jD^j$, where $D$ is
the primitive element generating $\FF$, and
$z=(z_0,z_1,\ldots,z_{m-1})$ then the condition of solvability is
\begin{equation}
  \label{eqn:quadratic-solvability-condition-1}
  \sum_{j=0}^{m-1}z_j\,d_j=0.
\end{equation}
Let $Z=\{j\in\{0,1,\ldots,m\}\,:\, z_j=1\}$ be the set of locations of
all non-zero coordinates of $z$ (counting from $0$). Then the
condition~\ref{eqn:quadratic-solvability-condition-1} can also be
written as:
\begin{equation}
  \label{eqn:quadratic-solvability-condition-2}
  \sum_{j\in Z}d_j=0.
\end{equation}
Thus, we may view the condition of solvability as a kind of parity
check on a subset of the coefficients $d_j\in\{0,1\}$ representing
$d\in\FF$.

\begin{example}[Operator $J+I$ for $m=4$]
  If $\FF=GF(2^4)$, $g(x)=x^4+x+1$, we find that the matrix of $J+I$
  is:
  \begin{equation*}
    J=
    \begin{pmatrix}
      1 & 0 & 1 & 0\\
      0 & 0 & 1 & 0\\
      0 & 1 & 0 & 1\\
      0 & 0 & 0 & 1
    \end{pmatrix},\qquad
    J+I=
    \begin{pmatrix}
      0 & 0 & 1 & 0\\
      0 & 1 & 1 & 0\\
      0 & 1 & 1 & 1\\
      0 & 0 & 0 & 0
    \end{pmatrix}.
  \end{equation*}
  If $D$ is the primitive element then the columns of $J$ are the
  coefficients of polynomials obtained by formally dividing
  $1,D^2,D^4,D^6$ by $g(D)$ and writing the coefficients of the
  remainder in ascending order of powers. Since $1,D^2$ are powers of
  $D$ below the degree of the primitive polynomial, they yield first
  two columns $(1,0,0,0)$ and $(0,0,1,0)$. Furthermore, long division
  over $GF(2)$ yields:
  \begin{align*}
    D^4 &= D+1 \mod g(D)\\
    D^6 &= D^2\cdot D^4=D^2\cdot (D+1)=D^3+D^2 \mod g(D).
  \end{align*}
  Hence, the $3^{rd}$ column of $J$ is $(1,1,0,0)$ and the $4^{th}$ is
  $(0,0,1,1)$.  We find $z$ by solving $(J+I)^\tr z=0$. We observe
  that $J+I$ has a row of zeros, so $(J+I)^\tr$ has a column of zeros, the
  $4^{th}$ column. Hence $z=(0,0,0,1)$ is a solution, and, on general
  grounds, this solution is unique. Hence, the condition of
  solvability of $y(y+1)=d$, where $d=d_0+d_1\,D+d_2\,D^2+d_3\,D^3$,
  is: $d_3=0$.
\end{example}
\begin{example}[Solvability for $m=8$ and other values of $m$]
  The most commonly used field is $\FF=GF(2^8)$ with
  $g(x)=x^8+x^4+x^3+x^2+1$. The condition of solvability of
  $y(y+1)=d$, where $d=\sum_{j=0}^7d_j\,D^j$, is $d_5=0$ and is found
  by the same approach. For some other values of $m$, we obtain: $m=5$
  with $g(x)=x^5+x^2+1$ yields $d_1+d_3=0$, $m=6$ with $g(x)=x^6+x+1$
  yields $d_5=0$. The case of $m=5$ demonstrates that the parity check
  $z^\tr d=0$ may involve more than $1$ coefficient $d_j$.
\end{example}

\section{The Decoding Algorithm}
\label{sec:decoding-algorithm}
In this section we will describe in detail the decoding algorithm for
the code identified by the generator matrix $G$ and parity check matrix
$H$, given by equations \eqref{eqn:generator-matrix},
\eqref{eqn:parity-check-matrix} and \eqref{eqn:parity-matrix}.

The decoder algorithm depends on the number of zeros and the patterns
we have noticed on the elements of the syndromes. We keep it in mind that
$\alpha_i$ are non-zero, distinct elements of the Galois field $\FF$.

Given the parity check matrix $H$ and the transmitted vector $t$,
error vector $e$ and received message vector $r=t+e$, we consider
the syndrome vector:
\begin{equation*}
  s=H\cdot r = H\cdot(t+e)=H\cdot e =
  \begin{pmatrix}
    s_1 &
    s_2 &
    s_3 &
    s_4 &
    s_5
  \end{pmatrix}^\tr
\end{equation*}

\subsection{The case of the zero syndrome vector}
Let us dispose of the easiest case of decoding first, that of $s=0$. 
\begin{lemma}[On the zero syndrome vector]
  If all the entries of the syndrome are zeros, then either we have no
  errors or we have silent data corruption that is not detected.
  Silent data corruption is possible when the number of errors is at
  least $5$.
\end{lemma}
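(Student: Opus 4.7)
The claim is a direct corollary of the minimum distance of the code, already established in Proposition~\ref{thm:distance-of-the-code-is-5}. The plan is to reformulate "zero syndrome" as "error vector is a codeword" and then invoke the distance bound.

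First, I would observe that $s = H\cdot e = 0$ exactly says that $e$ lies in the kernel of $H$, i.e., $e$ is itself a codeword of $C$. Hence the decoder, which has access only to the syndrome, cannot distinguish the true transmitted vector $t$ from $t+e$; this is precisely the definition of an undetected error (silent data corruption) when $e \neq 0$, and of no error when $e=0$.

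Second, by Proposition~\ref{thm:distance-of-the-code-is-5}, every non-zero codeword has weight at least $5$. So if $s=0$ then either $e=0$ (no errors) or $\wt(e)\geq 5$, which gives the first sentence of the lemma.

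Finally, for the second sentence I need to exhibit a non-zero codeword of weight exactly $5$, demonstrating that silent data corruption with $5$ errors is actually achievable (not merely permitted by the distance bound). Here I would use the structural observation already recorded in the paper: row $5$ of $P_{5\times k}$ equals row $2$ plus row $3$, so $\rank P_{5\times k}\leq 4$. Consequently, any $5$ columns $P_{i_1},\ldots,P_{i_5}$ of $P$ are linearly dependent, yielding coefficients $c_1,\ldots,c_5\in\FF$, not all zero, with $\sum_{m=1}^{5} c_m P_{i_m}=0$. Setting $e_{i_m}=c_m$ on the data side and $e_{\text{parity}}=0$ produces a codeword whose syndrome is zero. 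Moreover, none of the $c_m$ can vanish: if some $c_m=0$, we would obtain a non-trivial linear dependency among $4$ columns of $P$, contradicting the fact (established inside the proof of Proposition~\ref{thm:distance-of-the-code-is-5}) that every $4$ columns of $H$, and in particular every $4$ columns of $P$, are linearly independent. Thus $\wt(e)=5$, which shows that silent data corruption already occurs at $5$ errors.

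There is no real obstacle here; the work was done in the minimum-distance proposition. The only care needed is the short argument that the weight-$5$ codeword actually exists with all five nonzero coefficients, so that the bound "at least $5$" in the lemma's second sentence is sharp.
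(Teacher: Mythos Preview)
Your proof is correct and follows essentially the same route as the paper: both recast $s=0$ as ``$e$ is a codeword'' and then invoke the minimum distance $d=5$ from Proposition~\ref{thm:distance-of-the-code-is-5}. You go a step further than the paper's own argument by explicitly constructing a weight-$5$ codeword to show that the bound is sharp; the paper's proof of this lemma omits that construction, since Proposition~\ref{thm:distance-of-the-code-is-5} already establishes that the distance is exactly $5$ (not merely at least $5$), which implicitly supplies the required weight-$5$ codeword.
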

\begin{proof}
  Since we have $H\cdot r=H\cdot e=0$, $r$ is a valid codeword
  (belongs to the columnspace of $G$).  If $r\neq t$ then we must have
  $H\cdot r = H\cdot t$, i.e.\ $H\cdot (r-t)=0$. Therefore, $r-t$ is a
  vector of weight at least $5$.  i.e.\ at least $5$ errors occurred.
\end{proof}

\subsection{A brief survey of syndrome decoding}
Syndrome decoding depends on the following observation: if
$i_1,i_2,\ldots,i_q$ are locations of the failed drives then we may be
able to determine the non-zero error values $e_{i_j}$ , and study a subsystem
of $H\cdot e = s$
\begin{equation}
\label{eqn:location-restricted-system}
\left(\begin{array}{c|c|c|c}
  H_{i_1} & H_{i_2} & \ldots & H_{i_q}
\end{array}\right)
\cdot 
\left(\begin{array}{c}
  e_{i_1} \\ e_{i_2} \\ \vdots \\ e_{i_q}
\end{array}\right)
= s
\end{equation}
where $H_{i}$ denotes the $i$-th column of $H$. We also use the
consequence of the fact that the code is systematic, that
\begin{equation*}
  H_{i}=\begin{cases}
  P_{i} & \text{$1\leq i\leq k$},\\
  I_{j} & \text{$i=k+j$, $j=1,2,\ldots,5$}.
  \end{cases}
\end{equation*}
where $P_{i}$ denotes the $i$-th column of the parity matrix $P_{5\times k}$, and
$I_{j}$ denotes the $j$-th column of the identity matrix $I_{5\times 5}$.
We recall that
\begin{equation*}
  P_{i}=
  \left(\begin{array}{c}
    1 \\ \alpha_i \\ \alpha_i^2 \\ \alpha_i^3 \\ \alpha_i(\alpha_i+1)
  \end{array}\right),
  \quad
  I_1=\left(\begin{array}{c}
    1 \\ 0 \\ 0 \\ 0 \\ 0
  \end{array}\right),
  \quad
  I_2=\left(\begin{array}{c}
    0 \\ 1 \\ 0 \\ 0 \\ 0
  \end{array}\right),
  \quad
  \ldots,
  \quad
  I_5=\left(\begin{array}{c}
    0 \\ 0 \\ 0 \\ 0 \\ 1
  \end{array}\right).
\end{equation*}
If the number of failed data drives is $r\leq q$ then we can split the
set of locations into $\{i_1,i_2\ldots,i_r\}$ and,
$\{k+j_1,k+j_2,\ldots,k+j_{q-r}\}$, where $j_l=i_{k+l}-k$ for
$l=1,2,\ldots,q-r$. Therefore, the linear
system~\eqref{eqn:location-restricted-system}
can be further specialized for systematic codes as:
\begin{equation}
  \label{eqn:specialized-linear-system-full-form}
\left(\begin{array}{c|c|c|c}
  P_{i_1} & P_{i_2} & \ldots & P_{i_r}
\end{array}\right)
\cdot 
\left(\begin{array}{c}
  e_{i_1} \\ e_{i_2} \\ \vdots \\ e_{i_r}
\end{array}\right)
+
\left(\begin{array}{ccc}
  e_{j_1} \\ e_{j_2} \\ \vdots \\ e_{j_{q-r}}
\end{array}\right)
= s.
\end{equation}
We will abbreviate this system to 
\begin{equation}
  \label{eqn:specialized-linear-system}
  \widetilde{P}\cdot e_{data} + e_{parity} = s
\end{equation}
where $\widetilde{P}$ is the submatrix obtained from $P_{5\times k}$ by keeping
only columns at failed drive locations, $i_1,i_2,\ldots,i_r$.

\subsection{The case of two failed parity drives}
The next observation is that, since we are interested only in
reconstructing of only up to 2 drives, if there is a failed parity
drive $j$ then the vector $\tilde{s}=s-e_{k+j}I_j$  is either $0$
or is proportional to one of the vectors $P_i$ or $I_l$ ($l\neq j$).
In the latter case,
\begin{equation}
  0=\tilde{s}-e_{k+l}I_l =s- e_{k+j}I_j-e_{k+l}I_l.
\end{equation}
i.e.\ $s=e_{k+j}I_j+e_{k+l}I_l$. This means that the failed parity drives
are $j$ and $l$, and their error values $e_{k+j}$, $e_{k+l}$ are found
from these simple equations:
\begin{align*}
  e_{k+j} &= s_{k+j},\\
  e_{k+l} &= s_{k+l}.\\
\end{align*}
Hence, we disposed of the case when both failed drives are parity
drives.

\subsection{Failure of one parity and one data drives}
Let us suppose that the failed drives are the $i$-th data
drive $i\leq k$ and $(k+j)$-th (parity) drive.  We will simply say that the $j$-th parity drive failed.
We have the following equation relating errors and syndromes
\begin{equation}
e_i\,P_{i}+e_{k+j}I_j=s
\end{equation}
where $I_j$ is the $j$-th column of the identity matrix
$I_{5\times5}$, and $e_i\neq 0$.  Thus, $e_i\alpha_i^{l-1}=s_l$ for
$l\neq j$, $l=1,2,3,4$, and $e_5\alpha_i(\alpha_i+1)=s_5$ if $5\neq
j$.  In particular $s_l\neq 0$ for $l=1,2,3,4$, $l\neq j$. Thus, if
$s_l=0$ for some $l\in\{1,2,3,4\}$ then automatically $j=l$.

It is also true that one of the following holds:
\begin{enumerate}
\item $\alpha_i\neq 1$ and $s_5=e_i\alpha_i(\alpha_i+1)$;
\item $\alpha_i=1$ and $s_5=0$.
\end{enumerate}
Using this information, we may find $\alpha_i$ as follows:
\begin{enumerate}
\item If $j=1$ then $\alpha_i=s_3/s_2$, $s_4/s_3=\alpha_i=s_3/s_2$ and $s_5/\alpha_i=s_5/(s_3/s_2)=s_5\,s_2/s_3=\alpha_i+1=s_3/s_2+1$.
\item If $j=2$ then $\alpha_i=s_4/s_3$, $s_3/s_1=\alpha_i^2$, $s_5/\alpha_i=s_5/(s_4/s_3)=s_5\,s_3/s_4=\alpha_i+1=s_4/s_3+1$; 
\item If $j=3,4,5$ then $\alpha_i=s_2/s_1$; $s_l/s_1 = \alpha_i^{l-1} = (s_2/s_1)^{l-1}$ if $l\neq j$, $l<5$; also $s_5/s_2=\alpha_i+1=(s_2/s_1)+1$ if $j\neq 5$;
\end{enumerate}
Once we have found $\alpha_i$, we set $e_l=s_l/\alpha_i^{l-1}$ using
one of the $l$ values found.  Then we set
$e_{k+j}=s_j-\alpha_i^{j-1}e_i$ if $j\leq 4$, or
$e_{k+5}=s_5-\alpha_i(\alpha_i+1)\,e_i$ if $j=5$.

Algorithm~\ref{alg:one-parity-one-data-location} implements the above
method for finding the location of the failed parity and data drives,
based on the syndrome vector $s$. It solves a slightly more general
equation:
\begin{equation}
  \label{eqn:one-parity-one-data-location}
  x\,\P{\rho}+y\,I_j = s.
\end{equation}
It not only finds $x$ and $y$, but also $j$ and $\rho$, which is
crucial to finding the locations of the failed drives.  The algorithm
returns the quadruple $(j,\rho,x,y)$.
Algorithm~\ref{alg:one-parity-one-data-location} rejects solutions in
which $x=0$ or $y=0$.  Thus, it does not handle the (easy) case
when $s$ has weight $1$, which should be treated separately. Hence,
the input vector $s\in\FF^5$ should have weight at least $2$. It
should be noted that for some syndrome vectors $s$
\emph{two~solutions} of
equation~\eqref{eqn:one-parity-one-data-location} exist and have
$\rho\neq 0$. This can only occur when $\rho$ is a cubic root of unity
$\neq 1$. Since we excluded one of the cubic roots from the set
$\alpha$, the $\rho$ which belongs to $\alpha$ is
unique. Algorithm~\ref{alg:one-parity-one-data-location} needs the
explicit knowledge of the excluded root, so that it can omit it in its
search for solutions. We pass the excluded cubic root of unity in the
second argument to the function
\textproc{LocateFailedParityAndData}{$(s,X)$}, where $X$ should be either
empty set or a 1-element set containing the excluded root.

Algorithm~\ref{alg:one-parity-one-data-recovery} uses $j$, $\rho$, $x$ and $y$
found by Algorithm~\ref{alg:one-parity-one-data-location} to compute
the error vector, as indicated above. 

\begin{algorithm}[htb]
  \caption{\label{alg:one-parity-one-data-location} This algorithm
    yields the solution $(j,\rho,x,y)$ of the equation
    $x\,\P{\rho}+y\, I_j=s$, where $\P{\rho}$ is a vector given by
    equation~\eqref{eqn:P-rho}, $I_j$ is the $j$-th column of
    $I_{5\times 5}$ and $x,y\in\FF$.  The first input is a vector
    $s\in \FF^{5}$.  The second input is a set (possibly empty) $X$,
    consisting of cubic roots of unity which are not acceptable values
    of $\rho$. If there is a (unique) solution such that $x\neq 0$ and
    $\rho\neq 0$, the quadruple $(j,\rho,x,y)$ is returned.  If there
    is no solution with these properties, $j=0$ and $(0,0,0,0)$ is
    returned. In particular, if there is a solution such that $y=0$,
    i.e.\ $x\,\P{\rho}=s$, then $j=6$ and the quadruple $(6,\rho, x,
    0)$ is returned.  If $s$ is of weight $1$ then there is a solution
    to $y\,I_j=s$, but this case is also treated as failure, $j$ is
    set to $0$ and $(0,0,0,0)$ is returned. Any $\rho$ which may be
    otherwise be a solution, but belongs to $X$, will not be
    considered a solution, and another $\rho$ will be tried.  }
  \begin{algorithmic}[1]
    \Function{LocateFailedParityAndData}{$s$,$X$}
    \State $j\gets 0$
    \State $\rho\gets 0$\Comment{Zero in $\FF$.}
    \State $x\gets 0$\Comment{Zero in $\FF$.}
    \State $y\gets 0$\Comment{Zero in $\FF$.}
    \For {$m=1,3$}\Comment{If a solution exists, $s_1\cdot s_2\neq 0$ or $s_3\cdot s_4\neq 0$.}
    \If{$s_m=0\Or s_{m+1}=0$}
    \State\Continue\Comment{Hence, if $m=1$, try $m=3$.}
    \EndIf
    \State $\rho\gets s_{m+1}/s_m$\Comment{A candidate for $\rho$, $\rho\neq 0$.}
    \State $P\gets\left(1,\rho,\rho^2,\rho^3,\rho(\rho+1)\right)$\Comment{Note: $P=\P{\rho}$.}
    \If{$P_4=1 \And \rho\in X$}\Comment{Potential $\rho$ is an excluded root of unity.}
    \State \Continue;\Comment{Try another $\rho$ which may exist and is not excluded.}
    \EndIf
    \State $x\gets s_m/P_m$\Comment{A candidate for $x$, $x\neq 0$.}
    \State $failcount\gets 0$\Comment{Initialize failed parity drive count.}
    \For {$t=1,2,3,4,5$, such that $t\neq m, m+1$}
    \If {$s_t\neq x\,P_t$}\Comment{Consistency violation with $s=x\,P$ at position $t$.}
    \State $failcount\gets failcount+1$\Comment{Increment failed parity drive count.}
    \State $j\gets t$\Comment{Make parity drive $t$ a candidate for a failed parity drive.}
    \EndIf
    \EndFor
    \If {$failcount=0$} 
    \State $j\gets 6$ \Comment{No failed parity drive.}
    \State\Goto{one-parity-one-data-location:end}
    \ElsIf{$failcount>1$} \Comment{More than $1$ failure, no solution.}
    \State $j\gets0$\Comment{Signal failure, or try another $\rho$.}
    \Else\Comment{$failcount=1$.}
    \State $y\gets s_j-x\cdot P_j$\Comment{Parity drive $j$ is the only inconsistent parity drive.}
    \State\Goto{one-parity-one-data-location:end}
    \EndIf
    \EndFor
    \State\Return{$(j,\rho,x,y)$} \label{one-parity-one-data-location:end}
    \EndFunction
  \end{algorithmic}
\end{algorithm}

\begin{algorithm}[htb]
  \caption{Recovering from a failure of one parity and one data drive
    at unknown locations, or a single data drive failure at unknown
    location.  The algorithm accepts as input the set $\alpha$ of
    elements of $\FF$ and a syndrome vector $s\in\FF^5$. It finds the
    locations $i$ and $j$ of the failed drives, and determines the
    error vector $e\in\FF^{k+5}$ such that $e_i\,P_i+e_{k+j}I_j=s$.
    Upon success, it returns $(e,\True)$. If no solution exists,
    $(0,\False)$ is returned.  The algorithm handles correctly the
    case of a single data drive failure, by finding a solution to
    $e_i\,P_i=s$ and setting $e_{k+j}=0$ for $j=1,2,3,4,5$.
    \label{alg:one-parity-one-data-recovery}}
  \begin{algorithmic}[1]
    \Function{RecoverFailedParityAndData}{$\alpha,s$}
    \State $k\gets\Call{NumberOfElements}{\alpha}$
    \State $e\gets 0$           \Comment{$0$ in $\FF^{k+5}$.}
    \State $X\gets\Call{ExcludedCubicRootsOfUnity}{\alpha}$
    \State $(j,\rho,x,y)\gets\Call{LocateFailedParityAndData}{s,X}$
    \If{$j=0$}\Comment{$s$ does not come from a single parity, single data drive failure.}
    \State \Return{$(e,\False)$}\Comment{Still, $e=0$.}
    \EndIf
    \State $i\gets\Call{Lookup}{\alpha,\rho}$\Comment{Find $i$ such that $\alpha_i=\rho$}
    \If {$i=\emptyset$}
    \State \Return{$(e,\False)$}\Comment{Still, $e=0$.}
    \EndIf
    \State $e_i\gets x$
    \If {$j\leq 5$}
    \State $e_{k+j}\gets y$
    \EndIf
    \State \Return $(e,\True)$
    \EndFunction
  \end{algorithmic}
\end{algorithm}

\subsection{Failure of two data drives}
The only cases left to consider are those in which the only failed
drives are data drives. Not surprisingly, this is the most delicate
case to analyze.

We will somewhat relax the above assumption, by assuming only that up
to $2$ data drives have failed. If the locations of the failed drives
(if any) are at locations $i$ and $j$, the linear
system~\eqref{eqn:specialized-linear-system} is:
\begin{equation}
  \label{eqn:two-drive-linear-system}
  \left(\begin{array}{cc}
    1                   & 1                   \\
    \alpha_i            & \alpha_j            \\
    \alpha_i^2          & \alpha_j^2          \\
    \alpha_i^3          & \alpha_j^3          \\ 
    \alpha_i(\alpha_i+1)& \alpha_j(\alpha_j+1)
  \end{array}\right)
  \cdot
  \left(\begin{array}{c}
    e_{i}\\
    e_{j}\\
  \end{array}\right)
  = 
  \begin{pmatrix}
    s_1 \\
    s_2 \\
    s_3 \\
    s_4 \\
    s_5
  \end{pmatrix}
\end{equation}
The problem reduces to the following: given $s$, find the locations
$i$ and $j$, and the error values $e_i$ and $e_j$.

The most helpful result comes from linear algebra:
\begin{theorem}[A criterion of solvability of a linear system]
  A linear system $A\cdot x=b$ has a solution iff $N^Tb=0$, where $N$
  is a matrix whose columns form a basis of the nullspace of $A^\tr$.
\end{theorem}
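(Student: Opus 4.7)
The plan is to recognize this as the Fredholm alternative (orthogonality between image and cokernel), valid over any field. I will translate ``$Ax = b$ has a solution'' into ``$b \in \image(A)$'', and show that $\image(A) = \{y : N^\tr y = 0\}$, whence $N^\tr b = 0$ is equivalent to solvability.

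First I would prove the easy inclusion $\image(A) \subseteq \{y : N^\tr y = 0\}$. If $b = A x$ for some $x$, and $n$ is any column of $N$, then $n \in \ker(A^\tr)$ means $A^\tr n = 0$, equivalently $n^\tr A = 0$, so $n^\tr b = n^\tr A x = 0$. Stacking over all columns of $N$ gives $N^\tr b = 0$. This direction is purely a computation and handles the ``only if'' part.

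The reverse inclusion is the main obstacle, but it dissolves via a dimension count. Let $A$ be $m \times n$. By the Rank-Nullity Theorem applied to $A^\tr : \FF^m \to \FF^n$,
\begin{equation*}
\dim \ker(A^\tr) = m - \rank(A^\tr) = m - \rank(A) = m - \dim \image(A).
\end{equation*}
Since the columns of $N$ form a basis of $\ker(A^\tr)$, they are linearly independent, so $\rank(N^\tr) = \rank(N) = \dim \ker(A^\tr)$. Hence the solution space of $N^\tr y = 0$, viewed as a subspace of $\FF^m$, has dimension $m - \rank(N^\tr) = m - \dim \ker(A^\tr) = \dim \image(A)$. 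Combining with the first inclusion and equality of dimensions forces $\image(A) = \{y : N^\tr y = 0\}$, which gives the ``if'' direction.

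Finally, I would remark that nothing in the argument uses anything special about $\FF$: the Rank-Nullity Theorem and the identity $\rank(A) = \rank(A^\tr)$ hold over any field, including $GF(2^m)$, so the criterion applies in the setting of the paper. The proof does not require an inner product structure; ``$N^\tr b = 0$'' is just shorthand for the linear conditions imposed by a basis of $\ker(A^\tr)$.
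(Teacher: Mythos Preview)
Your proof is correct. The paper does not actually prove this theorem at all; it merely restates it in the form $\image(A)^\perp = \ker(A^\tr)$ and moves on, treating it as a standard fact from linear algebra. Your dimension-count argument (inclusion $\image(A)\subseteq\ker(N^\tr)$ plus Rank--Nullity to force equality) is a complete and standard proof, and your observation that no inner product is needed is exactly the point relevant to working over $GF(2^m)$.
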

This theorem is typically stated as $\image(A)^\perp = \ker(A^\tr)$.

We proceed to calculate the basis of the nullspace of $A^\tr$. Using
elementary matrices, we find the reduced row echelon form without row
exchanges:
\begin{align*}
  &\begin{pmatrix}
    1 & \alpha_i \\
    0 & 1
  \end{pmatrix}
  \begin{pmatrix}
    1 & 0 \\
    0 & \alpha_j+\alpha_i
  \end{pmatrix}^{-1}
  \begin{pmatrix}
    1 & 0 \\
    1 & 1
  \end{pmatrix}
  \begin{pmatrix}
    1 & \alpha_i & \alpha_i^2 & \alpha_i^3 & \alpha_i(\alpha_i+1)\\
    1 & \alpha_j & \alpha_j^2 & \alpha_j^3 & \alpha_j(\alpha_j+1)
  \end{pmatrix}\\
  &=
  \begin{pmatrix}
    1 & 0 &\alpha_i\,\alpha_j &\alpha_i\,\alpha_j(\alpha_i+\alpha_j) &\alpha_i\,\alpha_j\\
    0 & 1 & \alpha_j+\alpha_i  & \alpha_i^2+\alpha_i\,\alpha_j+\alpha_j^2& \alpha_i+\alpha_j+1
  \end{pmatrix}
\end{align*}
A simple rearrangement of the entries of this matrix yields:
\begin{equation*}
  N=\begin{pmatrix}
    \alpha_i\,\alpha_j    &   \alpha_i\,\alpha_j(\alpha_i+\alpha_j)  &  \alpha_i\,\alpha_j \\
    \alpha_i+\alpha_j     &  \alpha_i^2+\alpha_i\,\alpha_j+\alpha_j^2    & \alpha_i+\alpha_j+1 \\
    1                    &   0                                          &  0                  \\
    0                    &   1                                          &  0                  \\ 
    0                    &   0                                          &  1          
  \end{pmatrix}.
\end{equation*}
Thus $\image(N)=\ker(A^\tr)$. Now the sufficient and necessary condition of solvability of $A\cdot x=b$ is $N^\tr b=0$,
or, after some simple transformations:
\begin{equation*}
  \left\{
  \begin{array}{ccl}
    s_{3} &=& s_1\,\alpha_i\,\alpha_j + s_2(\alpha_i+\alpha_j)\\
    s_{4} &=& s_1\alpha_i\,\alpha_{j}(\alpha_i+\alpha_j)+s_2(\alpha_i^2+\alpha_i\,\alpha_j+\alpha_j^2) \\
    s_{5}-s_2 &=& s_1\,\alpha_i\,\alpha_j+s_2(\alpha_i+\alpha_j)
  \end{array}\right.
\end{equation*}
Turning to the first and last equation in this system, we observe that
they have the same right-hand side. Thus, a necessary condition of
solvability is $s_5-s_2=s_3$ or $s_2+s_5+s_3=0$. We note that this is
the equation which involves the parity check involving only the parity
drives: the $5^{th}$ row of the parity matrix is the sum of the $2^{nd}$
and $3^{rd}$ rows.

If $s_2+s_5+s_3=0$ then the last equation can be discarded, and we obtain the following system:
\begin{equation*}
  \left\{
  \begin{array}{ccl}
    s_{3} &=& s_1\,\alpha_i\,\alpha_j+s_2(\alpha_i+\alpha_j)\\
    s_{4} &=& s_1\alpha_i\,\alpha_{j}(\alpha_i+\alpha_j)+s_2(\alpha_i^2+\alpha_i\,\alpha_j+\alpha_j^2)
  \end{array}\right.
\end{equation*}
from which we are able to find $\alpha_i$ and $\alpha_j$. We can express
this system in terms of the symmetric polynomials:
\begin{equation}
  \label{eqn:symmetric-polynomial-substitution}
  \left\{
  \begin{array}{ccc}
    \sigma_1 &=& \alpha_i +\alpha_j,\\
    \sigma_2 &=& \alpha_i\cdot\alpha_j.
  \end{array}\right.
\end{equation}
It should be noted that we are only interested in solution where $\alpha_i\neq\alpha_j$, so $\sigma_1\neq 0$.
Also, $\sigma_2=\alpha_i\alpha_j\neq0$ as all $\alpha_i$ are non-zero. In short,
we are also interested in vectors $(\sigma_1,\sigma_2)\neq 0$.

We utilize the Frobenius identity to rewrite $\alpha_i^2+\alpha_j^2$ in the second equation as
$\sigma_1^2$. We obtain
\begin{equation*}
  \left\{
  \begin{array}{ccl}
    s_{3} &=& s_1\,\sigma_2+s_2\,\sigma_1 \\
    s_{4} &=& s_1\,\sigma_1\,\sigma_2+s_2\,(\sigma_1^2+\sigma_2)
  \end{array}\right.
\end{equation*}
The second equation can also be written as
$s_{4}=(s_1\,\sigma_2+s_2\,\sigma_1)\,\sigma_1 +
s_2\,\sigma_2=s_3\,\sigma_1 + s_2\,\sigma_2$. Using the first
equation, we obtain the system:
\begin{equation*}
  \left\{
  \begin{array}{ccl}
    s_{3} &=& s_1\,\sigma_2 + s_2\,\sigma_1 \\
    s_{4} &=& s_3\,\sigma_1 + s_2\,\sigma_2 
  \end{array}\right.
\end{equation*}
This is a linear system for $(\sigma_1,\sigma_2)$ which in matrix form is:
\begin{equation}
  \label{eqn:symmetrized-system-2}
  \begin{pmatrix}
    s_2 & s_1 \\
    s_3 & s_2
  \end{pmatrix}
  \begin{pmatrix}
    \sigma_1\\
    \sigma_2
  \end{pmatrix}
  =
  \begin{pmatrix}
    s_3 \\
    s_4
  \end{pmatrix}
\end{equation}
The determinant of the matrix of the system is $D=s_2^2-s_1\,s_3$. Therefore, we
have two distinct cases: $D=0$ and $D\neq 0$. Let us analyze the singular
case $D=0$ first.
\begin{lemma}[On the singular case $D=0$]
  Let us assume that the system~\eqref{eqn:two-drive-linear-system} is
  consistent (has a solution).  Let $D=s_2^2-s_1\,s_3$ be the
  determinant of the matrix of the system
  ~\eqref{eqn:symmetrized-system-2}.  If $D=0$, where then either $s=0$
  or $s$ comes from a single failed data drive $i$ and $s=s_1\,P_i$.
\end{lemma}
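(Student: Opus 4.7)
The plan is to work directly from the assumed decomposition $s = e_i P_i + e_j P_j$ (where $i,j$ index the up to two failed data drives, and we allow $e_i$ or $e_j$ to vanish), and compute the determinant $D = s_2^2 - s_1 s_3$ in closed form in terms of $e_i, e_j, \alpha_i, \alpha_j$. Using the explicit formulas $s_1 = e_i + e_j$, $s_2 = e_i\alpha_i + e_j\alpha_j$, $s_3 = e_i\alpha_i^2 + e_j\alpha_j^2$ and the Frobenius identity in characteristic $2$, the expansion of $s_2^2$ collapses to $e_i^2\alpha_i^2 + e_j^2\alpha_j^2$, while $s_1 s_3$ contributes the cross terms $e_i e_j(\alpha_i^2 + \alpha_j^2)$ on top of the same diagonal terms. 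The diagonal terms cancel (since we are in characteristic $2$), leaving
\begin{equation*}
  D = e_i e_j\,(\alpha_i^2 + \alpha_j^2) = e_i e_j\,(\alpha_i + \alpha_j)^2,
\end{equation*}
again by Frobenius.

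From here the conclusion is almost immediate. Because the two failed-drive locations are distinct and all $\alpha_\ell$ are distinct nonzero field elements, $\alpha_i + \alpha_j \ne 0$, so $D = 0$ forces $e_i e_j = 0$. Without loss of generality suppose $e_j = 0$. Then the syndrome reduces to $s = e_i\,P_i$, which is either $0$ (if also $e_i = 0$) or a nonzero multiple of a single parity column $P_i$. In the latter case the first coordinate gives $s_1 = e_i\cdot 1 = e_i \ne 0$, so $s = s_1 P_i$, matching the statement of the lemma after relabeling the single surviving failed-drive index as $i$.

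The only step that requires any care is the characteristic-$2$ cancellation in the computation of $D$; everything else is bookkeeping. I would present the calculation as a short aligned display, invoke the Frobenius identity explicitly once for $s_2^2$ and once for $\alpha_i^2 + \alpha_j^2$, and then conclude by the case split on which of $e_i, e_j$ vanishes. No nonlinear solving or case analysis of the syndrome entries themselves is needed, because the consistency assumption already lets us parametrize $s$ by $(e_i, e_j, \alpha_i, \alpha_j)$.
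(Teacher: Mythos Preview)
Your proof is correct and considerably more direct than the paper's. The key observation---that $D = e_i e_j (\alpha_i + \alpha_j)^2$ once one substitutes the parametrization $s = e_i P_i + e_j P_j$ coming from the consistency hypothesis---is a nice shortcut that the paper does not take.

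The paper instead works ``from the syndrome side'': it argues via Cramer's rule that $D=0$ together with consistency of~\eqref{eqn:symmetrized-system-2} forces the auxiliary determinants $D_1 = s_2 s_3 - s_1 s_4$ and $D_2 = s_2 s_4 - s_3^2$ to vanish as well, giving three quadratic relations among $s_1,s_2,s_3,s_4$. A case analysis on which syndromes are zero then shows that either $s=0$ or the ratios $s_2/s_1 = s_3/s_2 = s_4/s_3$ all coincide, so $s = s_1 \P{\rho}$ for $\rho = s_2/s_1$. Finally a separate Vandermonde argument is needed to rule out $\rho \notin \{\alpha_1,\ldots,\alpha_k\}$. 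Your approach bypasses all of this: by starting from the solution $(e_i,e_j)$ rather than the syndromes, you never leave the two specific columns $P_i,P_j$, so the question of whether $\rho$ equals some $\alpha_\ell$ never arises. The paper's route has the mild advantage of producing explicit syndrome identities (which feed into the decoding algorithm), but for the lemma as stated your argument is cleaner.
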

\begin{proof}
  The necessary condition for a solution
  of~\eqref{eqn:symmetrized-system-2} to exist is that the two
  determinants $D_1$ and $D_2$ are also zero, where $D_1$ and $D_2$ come
  from Cramer's rule vanish:
  \begin{equation*}
    D_1=\left| \begin{matrix} s_3 & s_1 \\ s_4 & s_2 \end{matrix}\right|=s_2\,s_3-s_1\,s_4=0,
    \qquad
    D_2=\left| \begin{matrix} s_2 & s_3 \\ s_3 & s_4 \end{matrix}\right|=s_2\,s_4-s_3^2=0.
  \end{equation*}
  Thus, we have a system of equations:
  \begin{equation}
    \label{eqn:degeneracy-condition}
    \left\{
    \begin{array}{ccc}
      s_1\,s_3  &=& s_2^2  \\
      s_2\,s_3  &=& s_1\,s_4 \\
      s_2\,s_4  &=& s_3^2
    \end{array}\right.
  \end{equation}
  Let us analyze possible solutions of~\eqref{eqn:degeneracy-condition}. 

  If $s_1=0$ then $s_2=s_3=0$. Also, $s_5=s_2+s_3=0$. Hence, $s_4$ can
  be the only non-zero syndrome.  The first two equations of the
  system~\eqref{eqn:two-drive-linear-system}  
  \begin{equation*}
    \left(\begin{array}{cc}
      1                   & 1                   \\
      \alpha_i            & \alpha_j            \\
    \end{array}\right)
    \cdot
    \left(\begin{array}{c}
      e_{i}\\
      e_{j}\\
    \end{array}\right)
    = 
    \begin{pmatrix}
      0 \\ 0
    \end{pmatrix}
  \end{equation*}
  with determinant
  $\alpha_j-\alpha_i\neq 0$ imply that $e_i=e_j=0$ (note: $\alpha_i\neq\alpha_j$). Hence,
  $s_4=0$ and $s=0$.

  Let us thus assume $s_1\neq 0$. If $s_2=0$ then $s_3=s_4=s_5=0$.
  Hence $s_1$ would be the only non-zero syndrome. The second and third
  equation of the system~\eqref{eqn:two-drive-linear-system} 
  \begin{equation*}
    \left(\begin{array}{cc}
      \alpha_i            & \alpha_j            \\
      \alpha_i^2          & \alpha_j^2          \\
    \end{array}\right)
    \cdot
    \left(\begin{array}{c}
      e_{i}\\
      e_{j}\\
    \end{array}\right)
    = 
    \begin{pmatrix}
      0 \\ 0
    \end{pmatrix}
  \end{equation*}
  with determinant
  $\alpha_i\alpha_j^2-\alpha_j\alpha_i^2=\alpha_i\alpha_j(\alpha_i-\alpha_j)\neq0$
  imply that $e_i=e_j=0$ (note: $\alpha_i\neq\alpha_j$).

  If $s_1\neq 0$ and $s_2\neq 0$ then $s_3\neq 0$ and $s_4\neq0$. Hence,
  $s_1$, $s_2$, $s_3$ and $s_4$ are all non-zero. We find from the first
  equation of~\eqref{eqn:degeneracy-condition} that
  $s_3=s_2^2/s_1$. Plugging into the second and third equations, we get
  $s_2(s_2^2/s_1)=s_1\,s_4$ and $s_2\,s_4=(s_2^2/s_1)^2$. Simplifying,
  $s_2^3=s_1^2\,s_4$ and $s_1^2\,s_4=s_2^3$. Both equations are identical.

  In summary, if the linear system~\eqref{eqn:symmetrized-system-2} is
  singular then either $s=0$ and $e_i=e_j=0$, or $s\neq 0$ and 
  we have both $s_3\,s_1=s_2^2$ and $s_1^2s_4=s_2^3$. In the latter case,
  $s_1$, $s_2$, $s_3$ and $s_4$ are all non-zero. We rewrite these equations
  as:
  \begin{equation*}
    \dfrac{s_2}{s_1}=\dfrac{s_3}{s_2}=\dfrac{s_4}{s_3}
  \end{equation*}
  (Note: $s_4/s_3=(s_2^3/s_1^2)/(s_2^2/s_1)=s_2/s_1$).  Let $\rho$ be
  the common value of these ratios. Then 
  \begin{equation*}
    s_5=s_2+s_3 =  s_1\left((s_2/s_1)+(s_3/s_2)\,(s_2/s_1)\right)=s_1\rho(\rho+1).
  \end{equation*}
  Hence, there is a $\rho\in\FF^*$ such that $s$ is proportional to
  the vector $\P{\rho}$ defined by equation~\eqref{eqn:P-rho}.
  If $\rho=\alpha_i$ for some $i$ then $s=s_1\,P_i$. This implies that
  $e$ has weight $1$ and $e_i=s_1$, and in fact only one data drive
  failed. If $\rho\neq\alpha_i$ for all $\alpha_i$ then $s=s_1\P{\rho}$
  cannot be a linear combination of $P_i$ and $P_j$ for any combination
  of $i$ and $j$. Indeed, the matrix
  \begin{equation*}
    \left(\begin{array}{c|c|c}P_i&P_j&\P{\rho}\end{array}\right)
  \end{equation*}
  contains a non-singular $3\times 3$ Vandermonde matrix, and its columns are thus
  linearly independent. 
\end{proof}

Thus, we may assume $D\neq 0$ and obtain the solution of the
non-singular system~\eqref{eqn:symmetrized-system-2} by Cramer's rule:

\begin{equation}
  \label{eqn:two-drive-solution}
  \left\{
  \begin{array}{ccccc}
    \sigma_1 &=& \dfrac{D_1}{D} &=& \dfrac{s_2\,s_3-s_1\,s_4}{s_2^2-s_1\,s_3},\\
    \sigma_2 &=& \dfrac{D_2}{D} &=& \dfrac{s_2\,s_4-s_3^2}{s_2^2-s_1\,s_3}.
  \end{array}\right.
\end{equation}
Once we have found $\sigma_1$ and $\sigma_2$, we find the roots
$\alpha_i$ and $\alpha_j$, in view of the Vieta identity
$(\zeta-\alpha_1)\,(\zeta-\alpha_2)=\zeta^2-\sigma_1\,\zeta+\sigma_2$,
from the quadratic equation
\begin{equation*}
  \zeta^2-\sigma_1\,\zeta+\sigma_2=0.
\end{equation*}
This yields $i$ and $j$, the locations of the failed drives. The error values
can be obtained from the first two equations of~\eqref{eqn:two-drive-linear-system}:
\begin{equation*}
  \left(\begin{array}{cc}
    1                   & 1                   \\
    \alpha_i            & \alpha_j            \\
  \end{array}\right)
  \cdot
  \left(\begin{array}{c}
    e_{i}\\
    e_{j}\\
  \end{array}\right)
  = 
  \begin{pmatrix}
    s_1 \\ s_2
  \end{pmatrix}
\end{equation*}
Explicitly given, they are:
\begin{equation}
  \label{eqn:two-drive-error-values}
  \left\{
  \begin{array}{ccc}
    e_i &=& \dfrac{s_1\alpha_j-s_2}{\alpha_j-\alpha_i},\\ 
    e_j &=& \dfrac{s_2-\alpha_is_1}{\alpha_j-\alpha_i}.
  \end{array}\right.
\end{equation}

\begin{algorithm}[htb]
  \caption{This algorithm solves the equation $x\,\P{u}+y\,\P{v}=s$,
    where $\P{u},\P{v}$ are given by equation~\eqref{eqn:P-rho}. The
    input is the syndrome vector $s\in\FF^5$. If a solution exists, we
    return the quadruple $(u,v,x,y)$.  The algorithm handles only
    cases requiring $x$ and $y$ to be non-zero, i.e.\  two failed data
    drives. It is assumed that $s_2+s_3+s_5=0$, as otherwise the
    syndrome vector would imply at least one parity drive failure.  We
    also assume $D, D_1, D_2 \neq 0$, as otherwise the
    system~\eqref{eqn:two-drive-linear-system} has been shown to be
    either inconsistent, or $s=0$, or $s$ comes from a single data
    drive failure. This algorithm utilizes
    formulas~\eqref{eqn:two-drive-solution} and~\eqref{eqn:two-drive-error-values}.
    \label{alg:two-data-drive-location}}
  \begin{algorithmic}[1]
    \Function{LocateTwoFailedDataDrives}{$s$}
    \State $u\gets 0$
    \State $v\gets 0$
    \If {$s_2+s_3+s_5\neq 0$} \Comment{This syndrome implies a failure of parity drive.}
    \State\Goto{alg:two-data-drive-location:end}\Comment{Signal failure by returning $u=v=0$.}

    \EndIf
    \State $D\gets s_2^2-s_1\,s_3$ 
    \State $D_1\gets s_2\,s_3-s_1\,s_4$
    \State $D_2\gets s_2\,s_4-s_3^2$
    \If {$D=0$ \Or $D_1=0$ \Or $D_2=0$}\Comment{Not handled by this algorithm.}
    \State\Goto{alg:two-data-drive-location:end}\Comment{Signal failure by returning $u=v=0$.}
    \EndIf
    \State $\sigma_1 \gets D_1/D$\Comment{$\sigma_1\neq 0$}
    \State $\sigma_2 \gets D_2/D$\Comment{$\sigma_2\neq 0$}
    \State $\{u,v\}\gets \Call{SolveQuadraticEquation}{1,-\sigma_1,\sigma_2}$\Comment{Solve $\zeta^2-\sigma_1\,\zeta+\sigma_2=0$.}
    \State $x\gets (v\cdot s_1-s_2)/(v-u)$
    \State $y\gets (s_2-u\cdot s_1)/(v-u)$
    \State \Return{$(u,v,x,y)$}\label{alg:two-data-drive-location:end}
    \EndFunction
  \end{algorithmic}
\end{algorithm}

\begin{algorithm}[htb]
  \caption{Recovery from a failure of two data drives at unknown
    locations.  The input consists of the set $\alpha\subseteq\FF$ and
    the syndrome vector $s\in\FF^5$. Upon success, the algorithm
    returns the error vector $e\in\FF^{k+5}$, where $k$ is the number
    of elements of $\alpha$. The algorithm solves the equation
    $e_iP_i+e_jP_j=s$, when there is a solution with $e_i,e_j\neq 0$,
    i.e.\ the syndrome comes from two failing data drives, but does
    not come from either $s=0$ or a single drive failure (parity or
    data).  If there is no solution satisfying these properties
    $(0,\False)$ is returned.
    \label{alg:two-data-drive-recovery}}

  \begin{algorithmic}[1]
    \Function{RecoverTwoFailedDataDrives}{$\alpha$,$s$}
    \State $(u,v,x,y)\gets\Call{LocateTwoFailedDataDrives}{s}$
    \State $k\gets\Call{NumberOfElements}{\alpha}$
    \State $i\gets\Call{Lookup}{\alpha,u}$
    \State $j\gets\Call{Lookup}{\alpha,v}$
    \State $e\gets 0$\Comment{This is $0\in\FF^{k+5}$.}
    \If {$i\neq \emptyset \And j\neq\emptyset$}
    \State $e_i\gets x$
    \State $e_j\gets y$
    \State \Return ($e$,$\True$)
    \Else
    \State \Return ($e$,$\False$)
    \EndIf
    \EndFunction
  \end{algorithmic}
\end{algorithm}

\section{The Main Algorithm: Recovery from up to $2$ Errors (unknown locations errors)}
\label{sec:two_unknown}
Algorithm~\ref{alg:main} defines the overall flow control structure,
but does little work on its own. It uses several other algorithms
for which we do not define any pseudo-code, as they are straightforward
once a particular implementation strategy is chosen. Here we list
them with their signature and requirements:
\begin{enumerate}
\item A function \textproc{Lookup}{($\alpha$, $\zeta$)}, which returns an
  index $i$, $1\leq i\leq k$, such that $\alpha_i=\zeta$, or $\emptyset$ if
  $\zeta\notin\alpha$. Here
  $\alpha=\{\alpha_1,\alpha_2,\ldots,\alpha_k\}$ is the set of
  elements of the underlying Galois field of characteristic $2$. For
  example, $\alpha$ could be implemented as a map
  $\alpha:\{1,2,\ldots,k\}\to\FF$.
\item A function \textproc{FindNonZeros}{($s$)}, which returns a list of indices
  $i$ of a Galois vector $s\in\FF^{5}$ such that $s_i\neq 0$.
\item A function \textproc{ParityCheckMatrix}{($\alpha$)}, which returns the
  $5\times \left(k+5\right)$ parity check matrix $H$ of our code,
  defined by equation~\eqref{eqn:parity-check-matrix}.
\item A function \textproc{NumerOfElements}{($collection$)}, which returns the
  number of elements of generic collections of objects, such as sets,
  lists and vectors.
\item A function \textproc{SolveQuadraticEquation}{($a$,$b$,$c$)}, which
  returns the two roots of the equation $f(x)=a\,x^2+b\,x+c=0$, where
  $a,b,c\in\FF$ and $x$ is a variable ranging over $\FF$. We outlined
  two algorithms in Section~\ref{sec:quadratic-equation} for doing
  this.
\item A function \textproc{ExcludedCubicRootsOfUnity}{($\alpha$)}, which
  returns the list of roots of the equation $\zeta^2+\zeta+1=0$ which
  are \emph{not} in the set $\alpha$.  This list is possibly empty,
  and has not more than $1$ element, if our exclusion rules are
  observed.
\end{enumerate}

\begin{algorithm}[htb]
  \caption{The decoding algorithm for code defined by the parity
    matrix~\eqref{eqn:parity-matrix}.  The inputs are: a subset
    $\alpha$ of non-zero elements of the Galois field $\FF$ and the
    received vector $r$. The output is either $(t,\True)$, there $t$
    is the transmitted vector, or $(r,\False)$ if corruption of more
    than $2$ drives is detected. If no more than $2$ drives had an
    error, $t$ is guaranteed to be correct.\label{alg:main}}
  \begin{algorithmic}[1]
    \Function{RaidDecode}{$\alpha$,$r$}
    \State $H\gets\Call{ParityCheckMatrix}{\alpha}$ \Comment{Obtain parity check matrix.}
    \State $s\gets H\cdot r$ \Comment{Compute the syndrome}
    \State $lst\gets\Call{FindNonZeros}{s}$ \Comment{Get indices of non-zero elements of $s$.}
    \State $nz\gets\Call{NumberOfElements}{lst}$ \Comment{Find number of non-zeros (weight of $s$).}
    \If {$nz=0$} \Comment{Do nothing, no error detected.}
    \ElsIf {$nz=1$} \Comment{One parity drive failed.}
    \State $i\gets lst_1$
    \State $e_{k+i}\gets s_i$
    \ElsIf {$nz=2$} \Comment{Two parity drives failed.}
    \State $i\gets lst_1$
    \State $j\gets lst_2$
    \State $e_{k+i}\gets s_{i}$
    \State $e_{k+j}\gets s_{j}$
    \Else
    \State $(e,status)\gets\Call{RecoverFailedParityAndData}{\alpha, s}$
    \If {$status=\True$}\Comment{One parity and one data drive failed, and recovered.}
    \State \Goto{alg:main:endmark}
    \EndIf
    \State $(e,status)\gets\Call{RecoverTwoFailedDataDrives}{\alpha, s}$
    \If {$status=\True$}\Comment{Two data drives failed, and recovered.}
    \State \Goto{alg:main:endmark}
    \EndIf
    \State \Return $(r,\False)$\Comment{Return received message and signal failure.}
    \EndIf
    \State $t\gets r+e$\label{alg:main:endmark}\Comment{Compute the transmitted vector $t$ by correcting errors in $r$.}
    \State \Return $(t,\True)$\Comment{Return transmitted vector and signal success.}
    \EndFunction
  \end{algorithmic}
\end{algorithm}

\begin{remark}
Generalization to any field $GF(2^{m})$:

The algebra rules used in the current paper are applicable to any
field $GF(2^{m})$.  For $\FF=GF(2^8)$ the maximum number of drives
supported is $254$.  If we need to build a RAID array with more
than $254$ drives, we can choose $\FF=GF(2^{m})$ with $m>8$ and also
excludes one of the $3^{rd}$ roots of unity if exist, yielding limit of up to
$2^m-1$ or $2^m-2$ possible drives.

For example, If $\FF=GF(2^{16})$, $2^{16}=65,536$. We have to exclude
the zero element and one of the $3^{rd}$ roots of unity and have a
limit of up to to $65,534$ drives.
\qed\\
\end{remark}

\begin{remark}
What if we do not want to exclude one of the two $3^{rd}$ roots of unity?

Let us assume that we are using all of the 255 drives for
$\FF=GF(2^8)$. Then, the algorithm still functions and has a very low
probability of not working correctly!!

This algorithm will fail only in the case of having two failed data drives, whose locations correspond to both of the $3^{rd}$
root of unity, with equal error values.

Therefore, the probability that our algorithm fails 
due to non-exclusion of a $3^{rd}$ root of unity is

\begin{equation*}
  \frac{1}{{2^{m}-1\choose 2}\cdot\left(2^{m}-1\right)}.
\end{equation*}
(This is a conditional probability, under the assumption that failure
indeed occurs.)

For example, if $\FF=GF(256)$, the risk is $\approx10^{-7}$, and for the
$\FF=GF(2^{16})$, the risk would be $\approx 7.1 \times 10^{-15}$.
\qed\\
\end{remark}

\section{Computational Complexity}
\label{sec:complexity}
It is clear that algorithm~\ref{alg:main} \textbf{involves a constant (very
  small) number of Galois field operations} (additions,
multiplications and divisions) over the field $\FF$. If we choose to
solve quadratic equations using Gaussian elimination, the number of
operations in $GF(2)$ is $O((\log|\FF|)^3)$, which is the
computational complexity of Gaussian elimination, while the lookup
table approach is constant time. Thus, the algorithm corrects a single
stripe containing an error in constant time, independent of of the
size of the field, assuming lookup table implementation.

A more-in-depth analysis of computational complexity requires taking
into account the fact that with a growing number of disks we must
also allow the field to grow. If $\FF=GF(2^m)$, and $N = k + 5$ is the
total number of disks in the array, we must use a field for which
the number $Q=2^m$ is equal to $N$ up to $1$ or $2$ disks.

The complexity of a RAID method implementing striping typically is
computed as the time or space required to encode/decode a single
codeword, which is a stripe.  
As an example, RAID~6 requires a fixed number of Galois
field operations (addition, multiplication, division, logarithm
lookup) when decoding a received vector, not counting the computation
of parities or syndromes. This correlates with the number of CPU
cycles and the time required to decode a received vector. The number
of operations does not depend on the number of disks in the array $N$.
Also, constant time access is assumed to array elements. However, as
$N$ increases, it is necessary to use a larger Galois field $GF(Q)$
with $Q>N$. The number of bits $B=\lceil \log_2(Q)\rceil $ per Galois
field element grows, thus requiring more time per Galois field
operation. Addition, which is identical to XOR, is
$O(B)$. Multiplication has complexity
$O(B\cdot\log(B)\cdot \log(\log(B)))$, according to the state of the
art \cite{fast-multiplication}. Being close to $O(B)$, this kind of
complexity is referred to as quasilinear time complexity.  Thus, the
time complexity of RAID~6 decoding is
$O(\log(N)\cdot \log\log(N)\cdot\log\log\log(N))$ rather than
constant, and will be called quasi-logarithmic in the paper.  Some
operations, such as solving a quadratic equation in $GF(Q)$, where
$Q=2^m$, require inverting a matrix with coefficients in $GF(2)$ of
size $O(m)$. The complexity of matrix inversion is $O(m^p)$ where the
best $p\leq 3$, and the best known value of $p$ known today is
$p\approx 2.373$. Thus, the complexity in terms of $Q$ of solving
quadratic equation is $O(\log(Q)^p)$.  It is possible to choose $Q$
arbitrarily large, independently of the number of disks $N$, incurring
large computational cost.  With an optimal choice of $Q$,
$Q\leq 2\times N$ and the computational cost is
$O(\log(N)^p)$. However, the matrix inversion for solving quadratic
equation can be performed only once, with its result stored in a
lookup table of size $O(m^2)$, thus not affecting run time, assuming
lookup time $O(1)$ or even $O(\log m)$, if binary search needs to be
used.  Thus, having an algorithm which performs a constant number of
Galois field operations, and solves a fixed number of quadratic
equations, remains quasi-logarithmic in $N$.

It should be noted that calculating parities for FEC codes requires
$O(N)$ operations ($N$ multiplications, $N$ additions to add the
results). However, the summation step on a parallel
computer can be reduced to $O(\log(N))$ by requiring $N$ parallel
processors and shared memory (PRAM). The $N$ multiplications can be
performed in parallel, in quasi-logarithmic time
$O(\log(N)\cdot \log\log(N)\cdot\log\log\log(N))$.  Therefore, if
error correction can be performed in a fixed number of Galois
operations (not depending on $N$), the overall algorithm remains
quasi-logarithmic on a parallel computer with $N$ processors.

\section{Error Correcting Capabilities for $3$ Failed Drives}
\label{sec:three_drives}
In this section we obtain results on detecting and correcting of $3$
errors. Clearly, for a code of distance $d=5$ code, one can expect to
be able to correct only $\lfloor(d-1)/2\rfloor=2$ errors by a decoder
which searches for the nearest valid codeword (minimum distance
decoder), such as ours. However, since we have $5$ parities, it turns
out that our code has an advantage over a hypothetical code with
$4$ parities, when it comes to detecting and correcting $3$ failing drives.

The main idea is that of list decoding. When the syndrome vector
$s\in\FF^5$ is determined not to be consistent with $2$ errors (any
combination of data and parity errors), we are able to find all
possible vectors $e$ of weight $3$ such that $s=H\,e$. This strategy
may be successful if the set of possible solutions is not too large,
and that there exists an efficient algorithm to compute this set.

Let us consider the case when $3$ data drives failed first.  We note
that $s_2+s_3+s_5=0$ is a necessary condition for a syndrome vector
$s$ to be due to pure data drive failures. Therefore, we can drop
$s_5$ after checking this condition.
\begin{theorem}[On three failed data drives]
  \label{thm:three-data-drives}
  Let us consider the code introduced by
  equations~\eqref{eqn:generator-matrix},
  \eqref{eqn:parity-check-matrix} and \eqref{eqn:parity-matrix}.  Let
  us suppose that a syndrome vector $s=H\cdot e$ comes from failure of
  exactly $3$ data drives at locations $i$, $j$ and $l$, and there is
  no failure of $2$ drives which results in $s$.  Then $s_2+s_3+s_5=0$
  and
  \begin{equation}
    \label{eqn:3-drive-failure-condition}
    s_4-\sigma_1\,s_3+\sigma_2\,s_2-\sigma_3\, s_1=0
  \end{equation}
  where $\sigma_m$, $m=1,2,3$, are the symmetric polynomials of
  $\alpha_i$, $\alpha_j$ and $\alpha_l$:
  \begin{align*}
    \sigma_1&=\alpha_i+\alpha_j+\alpha_l,\\
    \sigma_2&=\alpha_i\,\alpha_j+\alpha_i\,\alpha_l+\alpha_j\,\alpha_l,\\
    \sigma_3&=\alpha_i\,\alpha_j\,\alpha_l.
  \end{align*}
\end{theorem}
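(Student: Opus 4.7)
The plan is to expand the syndrome via $s = H \cdot e$, using the hypothesis that the non-zero components of $e$ sit only at the data positions $i,j,l$. This gives $s = e_i\,P_i + e_j\,P_j + e_l\,P_l$ with $P_m = \P{\alpha_m}$, and from this single identity both conclusions of the theorem will drop out via two separate structural observations.

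For the first identity $s_2 + s_3 + s_5 = 0$, I would just invoke the linear dependence of the rows of $P_{5\times k}$ that the paper has already flagged: row $5$ equals the sum of rows $2$ and $3$, because $\alpha_m^2 + \alpha_m$ is literally the sum of the second and third components of $\P{\alpha_m}$. Since the error vector has no contribution from the identity block of $H$, this row dependence transfers to the entries of $s$, and characteristic $2$ turns the relation into the claimed one.

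For the second identity, I would bring in the \emph{error-locator polynomial}
\begin{equation*}
\Lambda(x) \;=\; (x-\alpha_i)(x-\alpha_j)(x-\alpha_l) \;=\; x^3 - \sigma_1\,x^2 + \sigma_2\,x - \sigma_3.
\end{equation*}
Each of $\alpha_i,\alpha_j,\alpha_l$ is a root of $\Lambda$, so $\alpha_m^3 - \sigma_1\alpha_m^2 + \sigma_2\alpha_m - \sigma_3 = 0$ for $m \in \{i,j,l\}$. Multiplying the $m$-th such identity by $e_m$ and summing over $m$ recognizes the power sums $\sum_m e_m\,\alpha_m^{\,q}$ as exactly $s_1,s_2,s_3,s_4$ for $q=0,1,2,3$, which yields
\begin{equation*}
s_4 - \sigma_1\,s_3 + \sigma_2\,s_2 - \sigma_3\,s_1 \;=\; 0.
\end{equation*}

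The main obstacle, honestly, is that there is essentially none: both identities are necessary conditions that follow from the column structure of $H$ via one standard coding-theoretic manoeuvre (multiplying the locator polynomial identities by the error values and summing). The auxiliary hypothesis that $s$ does \emph{not} also come from a $2$-drive failure plays no role in deriving these necessary conditions; it merely marks out the list-decoding regime in which the theorem is being used and in which subsequent results will presumably invert the relation \eqref{eqn:3-drive-failure-condition}, combined with further syndrome equations, to actually recover $\sigma_1,\sigma_2,\sigma_3$ and hence the locations $i,j,l$.
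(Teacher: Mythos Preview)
Your proof is correct. The first identity $s_2+s_3+s_5=0$ is handled exactly as the paper does, via the built-in linear dependence among the rows of $P_{5\times k}$.

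For the second identity your route genuinely differs from the paper's. The paper views the first four scalar equations of $x\,\P{a}+y\,\P{b}+z\,\P{c}=s$ as an overdetermined linear system in the error values $(x,y,z)$; since the top $3\times 3$ block is Vandermonde with nonzero determinant $D_0$, consistency forces the $4\times 4$ augmented determinant to vanish, and a CAS computation factors that determinant as $D_0\cdot(s_4-\sigma_1 s_3+\sigma_2 s_2-\sigma_3 s_1)$. You instead invoke the error-locator polynomial $\Lambda(x)=(x-\alpha_i)(x-\alpha_j)(x-\alpha_l)$, multiply the vanishing relations $\Lambda(\alpha_m)=0$ by the error values $e_m$, and sum. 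This is the classical key-equation manoeuvre from BCH/Reed--Solomon decoding; it is shorter, avoids the CAS-assisted determinant expansion, and makes transparent why the relation is linear in $(\sigma_1,\sigma_2,\sigma_3)$ with coefficients $(s_3,s_2,s_1)$---a fact the paper exploits immediately afterwards in Theorem~\ref{thm:independent-syndromes}. The paper's determinantal approach, on the other hand, shows directly that the condition is not only necessary but also \emph{sufficient} for the linear system in $(x,y,z)$ to be consistent, which your argument does not address (nor does the theorem statement require it). Your closing remark that the ``no $2$-drive failure'' hypothesis is unused in deriving these identities is accurate.
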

\begin{proof}
  Let us set $a=\alpha_i$, $b=\alpha_j$ and $c=\alpha_l$.  We need to
  study the solutions of the system
  \[ x\cdot\P{a} + y\cdot\P{b} + z\cdot\P{c} = s\]
  which in full form is:
  \begin{align*}
    x+y+z &=s_1,\\
    a\,x+b\,y+c\,z &=s_2,\\
    a^2\,x+b^2\,y+c^2\,z &=s_3,\\
    a^3\,x+b^3\,y+c^3\,z &=s_4,\\
    a(a+1)\,x+b(b+1)\,y+c(c+1)\,z &= s_5.
  \end{align*}
  The last equation, in view of $s_5=s_2+s_3$, can be dropped, as it
  is the sum of the second and third equation.  The first four
  equations form a linear system for $x$, $y$, and $z$.  which is
  overdetermined. Moreover, the first $3$ equations have a coefficient
  matrix which is a $3\times 3$ Vandermonde matrix with determinant
  $D_0=(b-a)\,(c-a)\,(c-b)$. We may assume that $D_0\neq 0$ as $a$, $b$
  and $c$ range over distinct elements of the set $\alpha$. Thus, the
  system is consistent iff the $4\times 4$ augmented coefficient
  matrix has rank $3$.  This determinant can be calculated using CAS,
  and is:
  \[ D=D_0\cdot (s_4-s_3\,\sigma_1+s_2\,\sigma_2-s_1\,\sigma_3) \]
  Hence, the consistency condition is equivalent to equation
  \eqref{eqn:3-drive-failure-condition}.
\end{proof}
Theorem~\ref{thm:three-data-drives} limits the number of triples
$(a,b,c)$ to $k(k-1)/2$, because by trying all combinations
$\{a,b\}\subseteq \alpha$ we find $c$ from
\eqref{eqn:3-drive-failure-condition}. 

In underlying applications to RAID, it may be possible to obtain
several syndromes which allow us to further limit the failed data
drive locations. In fact, most commonly we will have a sample of
several syndrome vectors due to failed drives. We can also obtain a
sample by repeatedly reading and writing suspect stripes of data on
the drives. This approach may quickly succeed. The criterion of
success is given in the next theorem, and it roughly consists in
checking linear independence of syndromes, which is a straightforward
task.
\begin{theorem}
  \label{thm:independent-syndromes}
  Let $s^{(m)}=(s_{j}^{(m)})_{j=1}^5$, $m=1,2,\ldots,M$, be syndrome vectors, i.e.\ any vectors in the range of $H$,
  and 
  \begin{equation*}
    s_{5}^{(m)} = s_{2}^{(m)}+s_{3}^{(m)}\quad\text{for $m=1,2,\ldots,M$}.
  \end{equation*}
  Then the following is true:
  \begin{enumerate}
    \item 
      If $M=3$ and the matrix $(s_{j}^{(m)})_{j,m=1}^3$ is
      non-singular, then the locations $(i,j,l)$ of the failed drives
      which may result in these syndromes can be found by first
      finding $(\sigma_1,\sigma_2,\sigma_3)$ from the linear system:
      \begin{equation*}
        \sum_{j=1}^3 (-1)^{j-1}\,s_{4-j}^{(m)}\,\sigma_j = s_{4}^{(m)},\quad\text{$m=1,2,3$}
      \end{equation*}
      and then solving the cubic equation:
      \begin{equation*}
        (\zeta-a)\,(\zeta-b)\,(\zeta-c)=\zeta^3-\sigma_1\zeta^2+\sigma_2\zeta-\sigma_3=0.
      \end{equation*}
      The solution is the unique triple $(a,b,c)$. We may determine $(i,j,l)$ by
      lookup, equating $\alpha_i=a$, $\alpha_j=b$ and $\alpha_l=c$.
    \item If $M=4$ and the syndrome vectors are linearly independent
      then cannot come from a failure of $\leq 3$ data drives.

    \item If $M=2$ and such syndrome vectors are found such that
      \begin{equation*}
        (s_1^{(1)},s_2^{(1)},s_3^{(1)})\neq (s_1^{(2)},s_2^{(2)},s_3^{(2)})
      \end{equation*}
      where both vectors are non-zero, then the number of triples
      $(i,j,l)$ of failed data drives which can result in those
      syndromes does not exceed $k$.
  \end{enumerate}
\end{theorem}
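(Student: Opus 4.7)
The plan is to reduce each of the three assertions to Theorem~\ref{thm:three-data-drives} applied simultaneously to several syndromes. By that theorem, any syndrome $s^{(m)}$ arising from failures at three data drives whose $\alpha$-values are $a,b,c\in\alpha$ satisfies the single linear relation $s_4^{(m)}-\sigma_1 s_3^{(m)}+\sigma_2 s_2^{(m)}-\sigma_3 s_1^{(m)}=0$ in the elementary symmetric polynomials $\sigma_1,\sigma_2,\sigma_3$ of $a,b,c$. Collecting such relations for several $m$ produces a linear system for $(\sigma_1,\sigma_2,\sigma_3)$; once solved, the triple $\{a,b,c\}$ is recovered as the roots of the cubic $\zeta^3-\sigma_1\zeta^2+\sigma_2\zeta-\sigma_3$, and lookup against $\alpha$ yields the drive indices.

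For part~(1), with $M=3$ the coefficient matrix of the resulting $3\times3$ system has rows $\bigl(s_3^{(m)},s_2^{(m)},s_1^{(m)}\bigr)$ (the sign pattern being trivial in characteristic $2$). This is the matrix $\bigl(s_j^{(m)}\bigr)_{j,m=1}^{3}$ after a column reversal and transposition, operations that preserve invertibility. The hypothesized non-singularity therefore forces $(\sigma_1,\sigma_2,\sigma_3)$ to be uniquely determined, which in turn forces the unordered triple of cubic roots — and hence the drive locations — to be unique.

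For part~(2), all syndromes produced by failures at a single fixed triple of data drives $\{i,j,l\}$ lie in $\mathrm{span}(P_i,P_j,P_l)$, a subspace of $\FF^5$ of dimension at most $3$. Four linearly independent vectors cannot lie in any $3$-dimensional subspace, so no common triple can simultaneously explain all four syndromes.

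Part~(3) is the step I expect to be the main obstacle. The two linear equations on $(\sigma_1,\sigma_2,\sigma_3)$ cut out at most a $1$-dimensional affine subspace $L\subset\FF^3$. For each candidate root $a\in\alpha$, the requirement that $a$ be a root of the associated cubic is one further linear equation $a^2\sigma_1+a\sigma_2+\sigma_3=a^3$. Provided this third equation is independent of the two given ones, it selects a unique point of $L$, and the cubic then determines the companion roots $b,c$ uniquely; hence at most one valid triple can contain any given $a$. Letting $a$ range over $\alpha$ therefore bounds the number of triples by $k$ (in fact by roughly $k/3$ once one accounts for each triple being detected from each of its three members). The subtle step is to rule out the degenerate configurations — the two syndrome equations being proportional, or the $a$-dependent third equation accidentally lying in their span — and show that under the hypothesis that $(s_1^{(1)},s_2^{(1)},s_3^{(1)})$ and $(s_1^{(2)},s_2^{(2)},s_3^{(2)})$ are distinct and nonzero these cases either do not arise or contribute only a negligible set of triples that preserves the $k$-bound.
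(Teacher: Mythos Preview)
Your treatment of part~(1) matches the paper's exactly. For part~(2) you take a genuinely different route: rather than argue, as the paper does, that four linearly independent syndromes render the three-unknown linear system for $(\sigma_1,\sigma_2,\sigma_3)$ inconsistent, you observe directly that every syndrome produced by a fixed data-triple $\{i,j,l\}$ lies in the $3$-dimensional subspace $\mathrm{span}(P_i,P_j,P_l)$, which cannot contain four independent vectors. Both arguments are short and correct; yours is more geometric, the paper's more in line with the linear-system framework used throughout the section.

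For part~(3) you and the paper reach the same intermediate point: the two syndrome relations cut the $(\sigma_1,\sigma_2,\sigma_3)$-space down to an affine set of dimension at most~$1$. The paper's proof then ends abruptly: it simply declares that one of the $\sigma_j$ serves as a free parameter and that letting it range over $\alpha$ yields at most $k$ solutions. Your plan---adjoining, for each candidate $a\in\alpha$, the root-constraint $a^3=\sigma_1 a^2+\sigma_2 a+\sigma_3$ as a third linear equation and counting intersections with the line---is more work than the paper actually carries out. The degeneracy concerns you flag (the two coefficient rows being proportional, or the $a$-dependent equation accidentally lying in their span) are legitimate, but the paper's proof does not address them either; its brevity simply leaves them implicit. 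In sum, your proposal is at least as complete as the paper's own argument, and for part~(3) is in fact more careful about what still needs checking.
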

\begin{proof}
  The case of $3$ syndromes is obvious, in view of our preceding
  analysis. If there are $4$ linearly independent syndromes then there
  is no solution for $(\sigma_1,\sigma_2,\sigma_3)$. If there are $2$
  syndromes are found as described in the theorem then there are $2$
  linearly independent syndromes then the set of triples
  $(\sigma_1,\sigma_2,\sigma_3)$ form a $1$-dimensional affine
  subspace of $\FF^3$. Thus, varying one of the variables $\sigma_1$,
  $\sigma_2$ or $\sigma_3$ (the free variable) over the set $\alpha$
  and finding the other two from the system of linear equations,
  yields not more than $k$ solutions.
\end{proof}
\begin{remark}[On locating and recovery of $3$ failed data drives]
  Based on Theorem~\ref{thm:independent-syndromes} we have a clear
  strategy to locate and correct $3$ failed data drives. We simply
  collect syndromes and observe their projections onto the first
  $3$ coordinates. Once we find $3$ independent vectors in our collection,
  we can locate the failed drives. We can clearly correct the resulting
  errors, as we can correct up to $4$ errors at known locations.
\end{remark}

The relevant bound for all other cases is the subject of our next theorem.
\begin{theorem}[On number of solutions for $3$-drive failure]
  \label{thm:3-drive-list-decoding}
  Let $s\in\FF^5$ be a vector such that there is no vector
  $e\in\FF^{k+5}$ of weight $2$ for which $H\,e=s$. Then the number of
  triples $(i,j,l)$, $1\leq i<j<l \leq k$, such that there is a vector $e$
  of weight $3$, such that
  \begin{enumerate}
    \item $s$ is a syndrome vector for $e$, i.e.\ $H\,e=s$;
    \item $e_m=0$  unless $m\in\{i,j,l\}$ ($1\leq m\leq k+5$);
    \item $l>k$, i.e.\ not all three failed drives are data drives;
  \end{enumerate}
  is not more than $2\,k+4$. 
  For given $s$ and triple $(i,j,l)$, the vector $e$ with
  the above properties is unique.
\end{theorem}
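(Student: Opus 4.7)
The plan is to split the count by the number of parity drives (indices exceeding $k$) in the triple $(i,j,l)$; condition~(3) forces this number to be in $\{1, 2, 3\}$, giving three sub-cases handled by different techniques.

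\textbf{Three parity drives.} Here $s = e_{k+j_1}I_{j_1} + e_{k+j_2}I_{j_2} + e_{k+j_3}I_{j_3}$ forces the support of $s$ to equal $\{j_1, j_2, j_3\}$, so at most one such triple exists (and only if $s$ has weight exactly $3$).

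\textbf{Two parity drives plus one data drive.} For each data index $i$, I count the nonzero $e_i \in \FF$ making $s - e_i P_i$ have weight exactly $2$; its support then gives the two failing parities. The central tool is a difference argument: if $e_i \ne e_i'$ both work, then $(e_i - e_i') P_i$ equals the difference of two weight-$2$ vectors, hence has weight at most $4$. Since $\alpha_i \ne 0$, the column $P_i$ has weight $5$ when $\alpha_i \ne 1$ and weight $4$ when $\alpha_i = 1$, so every non-exceptional index admits at most one valid $e_i$. For the at most one exceptional drive with $\alpha_i = 1$, a closer look at the support structure (valid supports $S \subseteq \{1,2,3,4\}$ with $|S|=2$, pairwise disjoint and union-covering $\{1,2,3,4\}$) forbids three distinct valid $e_i$'s; this caps the exceptional contribution at $2$ and gives Case~B a total bound of $k + 1$.

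\textbf{One parity drive plus two data drives.} Write the triple as $(i, j, k+\ell)$ and set $c := e_{k+\ell}$. The equation $e_i P_i + e_j P_j = s - c\,I_\ell$ says $s - c\,I_\ell$ is a weight-$2$ data-only syndrome, which (by the analysis in Section~\ref{sec:decoding-algorithm}) requires $(s - c\,I_\ell)_2 + (s - c\,I_\ell)_3 + (s - c\,I_\ell)_5 = 0$. This pins $c$ uniquely to $c = s_2 + s_3 + s_5$ when $\ell \in \{2,3,5\}$ and, when $\ell \in \{1,4\}$, reduces to the $c$-free constraint $s_2 + s_3 + s_5 = 0$; the two regimes are mutually exclusive according to whether $s_2 + s_3 + s_5$ vanishes. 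In the nonzero regime, each $\ell \in \{2,3,5\}$ yields at most one pair $\{i,j\}$ by the injectivity of weight-$2$ decoding (Theorem~\ref{thm:injectivity-of-error-to-syndrome-mapping}), giving at most $3$ triples. In the zero regime, consistency of the over-determined linear system in $(e_i, e_j)$, after applying the Frobenius identity, reduces to a linear condition in the symmetric polynomials $(\sigma_1, \sigma_2) := (\alpha_i + \alpha_j,\, \alpha_i \alpha_j)$: namely $s_2 \sigma_2 + s_3 \sigma_1 + s_4 = 0$ for $\ell = 1$, and $s_1 \sigma_2 + s_2 \sigma_1 + s_3 = 0$ for $\ell = 4$. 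For each such constraint, fixing $\alpha_i$ yields a linear equation for $\alpha_j$ with at most one solution, giving at most $\lceil k/2 \rceil$ unordered pairs per $\ell$, and at most $k$ triples overall.

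Summing the three cases, the total is at most $1 + (k+1) + \max(3, k) \le 2k + 4$ for all $k \ge 1$. Uniqueness of $e$ for each admissible triple is immediate, since any three columns of $H$ are linearly independent (a consequence of distance $5$, Proposition~\ref{thm:distance-of-the-code-is-5}), so the system $H_i e_i + H_j e_j + H_l e_l = s$ has at most one solution. The hard part will be the careful verification in the zero-regime of Case~C, where the parity error value $c$ is free and the pair count must be extracted from a linear constraint on $(\sigma_1, \sigma_2)$; a secondary subtlety is the $\alpha_i = 1$ exceptional drive in Case~B, which requires the partition-of-$\{1,2,3,4\}$ argument rather than the weight-$5$ contradiction used for generic $i$.
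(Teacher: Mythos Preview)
Your overall decomposition into three cases (by number of parity indices in the triple) matches the paper's, and your Case~A and Case~B arguments are correct. In Case~B your weight/support argument is cleaner than the paper's approach, which instead fixes the pair of failed parities $(j,l)$ and solves, across ten separate sub-cases summarised in a table, for the data locator~$\rho$; your route avoids that case analysis entirely at the cost of the looser bound $k+1$ in place of a constant.

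There is, however, a genuine gap in Case~C, zero regime. You claim that fixing $\alpha_i$ turns the constraint (e.g.\ $s_1\sigma_2+s_2\sigma_1+s_3=0$ for $\ell=4$) into a linear equation for~$\alpha_j$ with ``at most one solution''. But that equation reads $(s_1\alpha_i+s_2)\,\alpha_j=s_2\alpha_i+s_3$, and the leading coefficient can vanish: when $s_1\neq 0$ it vanishes at the single value $\alpha_i=s_2/s_1$, and if additionally $s_2\alpha_i+s_3=0$ (equivalently $s_2^2=s_1 s_3$) then \emph{every} $\alpha_j$ satisfies the constraint, producing $k-1$ pairs through this one~$\alpha_i$. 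The same phenomenon arises globally when $s_1=s_2=0$: the constraint collapses to $s_3=0$ and, if that holds, is satisfied by all pairs. Your count of $\lceil k/2\rceil$ unordered pairs therefore does not follow from the argument as written. The repair is not hard but must be supplied: in each degenerate situation, solving the first two (Vandermonde) rows for $(e_i,e_j)$ yields $e_j=(s_2-s_1\alpha_i)/(\alpha_j-\alpha_i)=0$ (or $e_i=e_j=0$ in the global case), so the resulting error has weight at most~$2$ and either contradicts the hypothesis or is not a genuine weight-$3$ triple. The paper handles exactly this point in its $j=4$ and $j=1$ sub-cases, ultimately bounding each by $k-1$.
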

\begin{proof}
  Cases of failure of 3 drives can be divided according to the
  number of failed parity drives. 

  If $3$ parity drives fail, at positions $k+i$, $k+j$, $k+m$, where
  $1\leq~i<j<m\leq 5$ the equation is
  \[ x\cdot I_i + y\cdot I_j + z\cdot I_m=s.\]
  This equation implies that $s$ has weight $3$ and syndromes $i$, $j$
  and $m$ are the non-zero syndromes. Moreover, $x=s_i$, $y=s_j$ and
  $z=s_m$. Thus, for every $s$ of weight $3$ there exists a unique
  solution of this type. The error vector satisfies $e_{k+i}=x$,
  $e_{k+j}=y$ and $e_{k+m}=z$, and $e_l=0$ for $l\notin\{i,j,m\}$.

\begin{table}
  \caption{\label{tab:fixing-1-data-2-parity} Systems of equations for
    recovery from a failure of $3$ drives, one data and $2$ parity at
    location $k+j$, $k+l$, $j,l=1,2,3,4,5$, $j<l$. The equations in the third column
    contain $\rho$. The fourth column contains a constraint obtained by eliminationg $\rho$
    from equations in column $2$.}
  \begin{center}
      \begin{tabular}{||c|c||l|l||}
        \hline\hline
        $j$ & $l$ & System of equations for $\rho$ & Constraints on $s$\\
        \hline\hline
\multirow{2}{*}{1} & \multirow{2}{*}{2} & $ s_{5}\,\rho+s_{4}+s_{3}$ & \multirow{2}{*}{$ s_{4}\,s_{5}+s_{3}\,s_{4}+s_{3}^2$} \\ 
                   &                    & $ s_{3}\,\rho+s_{4}   $ & \\ 
\hline
\multirow{2}{*}{1} & \multirow{2}{*}{3} & $ s_{2}\,\rho+s_{5}+s_{2} $ & \multirow{2}{*}{$s_{5}^2+s_{2}\,s_{4}+s_{2}^2 $}  \\ 
                   &                    & $ s_{5}\,\rho+s_{5}+s_{4}+s_{2} $ &  \\ 
\hline
\multirow{1}{*}{1} & \multirow{1}{*}{4} & $ (s_{5}+s_{3})\,\rho+s_{3}   $ & \multirow{1}{*}{$  s_{5}+s_{3}+s_{2}   $}  \\ 
\hline
\multirow{2}{*}{1} & \multirow{2}{*}{5} & $ s_{2}\,\rho+s_{3} $ & \multirow{2}{*}{$s_{2}\,s_{4}+s_{3}^2 $}  \\ 
                   &                    & $ s_{3}\,\rho+s_{4} $ & \\ 
\hline
\multirow{3}{*}{2} & \multirow{3}{*}{3} & $ (s_{5}+s_{1})\,\rho+s_{5}+s_{4} $ & \multirow{3}{*}{$  s_{5}^3+s_{1}\,s_{4}\,s_{5}+s_{1}\,s_{4}^2+s_{1}^2\,s_{4} $}  \\ 
                   &                    & $ s_{5}\,\rho^2+s_{4}\,\rho+s_{4} $ & \\ 
                   &                    & $ (s_{5}^2+s_{4}\,s_{5})\,\rho+s_{4}^2+s_{1}\,s_{4}   $ & \\ 
\hline
\multirow{2}{*}{2} & \multirow{2}{*}{4} & $ s_{1}\,\rho+s_{5}+s_{3} $ & \multirow{2}{*}{$  s_{5}^2+s_{3}^2+s_{1}\,s_{3}   $}  \\ 
                   &                    & $(s_{5}+s_{3})\,\rho+s_{3}   $ &\\ 
\hline
\multirow{3}{*}{2} & \multirow{3}{*}{5} & $ s_{1}\,\rho^2+s_{3} $ & \multirow{3}{*}{$  s_{1}\,s_{4}^2+s_{3}^3   $}  \\ 
                   &                    & $ s_{3}\,\rho+s_{4} $ & \\ 
                   &                    & $ s_{1}\,s_{4}\,\rho+s_{3}^2   $ & \\ 
\hline
\multirow{2}{*}{3} & \multirow{2}{*}{4} & $ s_{1}\,\rho+s_{2} $ & \multirow{2}{*}{$  s_{1}\,s_{5}+s_{2}^2+s_{1}\,s_{2}   $}  \\ 
                   &                    & $ s_{2}\,\rho+s_{5}+s_{2}   $ & \\ 
\hline
\multirow{3}{*}{3} & \multirow{3}{*}{5} & $ s_{1}\,\rho+s_{2} $ & $  s_{1}^2\,s_{4}+s_{2}^3  $  \\ 
                   &                    & $ s_{2}\,\rho^2+s_{4} $ & \\ 
                   &                    & $ s_{2}^2\,\rho+s_{1}\,s_{4}   $ & \\ 
\hline
\multirow{2}{*}{4} & \multirow{2}{*}{5} & $ s_{1}\,\rho+s_{2} $ & \multirow{2}{*}{$  s_{1}\,s_{3}+s_{2}^2   $}  \\ 
                   &                    & $ s_{2}\,\rho+s_{3} $ & \\ 
        \hline\hline
      \end{tabular}
  \end{center}
\end{table}

  If $2$ parity drives fail, at positions $k+j$, $k+l$, where
  $1\leq~j<l\leq 5$, along with data drive $i$ then we have
  \[ x\cdot\P{\rho}+y\cdot I_j+z\cdot I_m =s \]
  where $\rho=\alpha_i$. Let $\P{\rho}_{j,m}$ be the vector $\P{\rho}$ with
  entries $j$ and $l$ erased. Thus
  \[ x\cdot\P{\rho}_{j,l}=s^{(j,l)} \]
  where $s^{(j,l)}$ is the syndrome vector $s$ with entries $j$ and
  $m$ erased.  This gives us $3$ equations for $x$ and $\rho$, which
  should allow us to solve the
  problem. Table \ref{tab:fixing-1-data-2-parity} contains the results of
  careful analysis of all cases for distinct pairs $(j,l)$. As we can
  see, in each case we have multiple (two or three) equations which
  $\rho$ satisfies, with coefficients dependent on the syndromes
  $s_t$, $t=1,2,\ldots,5$ (column $3$ of the table). We also can
  obtain constraints on the syndromes by eliminating $\rho$ from the
  equations in column $2$. These are listed in column $3$. As we can
  see, in each case there is exactly one constraint.  A lengthy
  analysis shows that $\rho$ is unique, except for the degenerate
  situation, when $s$ is a syndrome vector for an error vector of
  weight $\leq 2$. The arguments are straightforward but lengthy, and
  are omitted. We only mention the case $j=2$ and $l=3$ as it is
  different from other cases in one respect, that it relies upon the
  exclusion rule for cubic roots of unity. If $\rho$ is non-unique
  then $s_5+s_1=0$, $s_5+s_4=0$, $s_5+s_4\,s_5=0$, $s_4^2+s_1\,s_4=0$
  and $s_5^3+s_1\,s_4\,s_5+s_1\,s_4^2+s_1^2\,s_4=0$. These equations
  imply $s_1=s_4$. Also $s_5\,(s_5+s_4)=0$. If $s_5=0$ then also $s_1=0$
  and $s_4=0$, Thus $s$ has weight $2$, which is a syndrome vector for
  $\leq 2$ failed parity drives with numbers in the set
  $\{2,3\}$. Thus, we may assume $s_5\neq 0$. Then $s_5=s_4$ and thus
  $s_1=s_4=s_5$. The second equation in column $2$ reduces to
  $s_5(\rho^2+\rho+1)=0$, which implies $\rho^2+\rho+1=0$. Therefore
  $\rho$ is a cubic root of unity $\neq 1$. It must therefore be the
  root of unity different from the excluded one. This makes $\rho$
  unique.

  Hence, there is at most one solution with $2$ failed data and $1$
  failed parity drive.

  If $1$ parity drive fails at position $j$, along with 2 data drives,
  we have
  \begin{equation*}
    x\cdot\P{a}+ y\cdot\P{b}+z\cdot I_j=s. 
  \end{equation*}

\begin{table}[htb]
  \caption{\label{tab:fixing-2-data-1-parity} Systems of equations for
    recovery from a failure of $3$ drives, two data and $1$ parity at
    location $k+j$, $j=1,2,3,4,5$. The equations are expressed in
    terms of symmetric polynomials  $\sigma_1=a+b$ and $\sigma_2=a\cdot b$.}
  \begin{center}
      \begin{tabular}{||c||l|l||}
        \hline\hline
        $j$ & System of equations for $\sigma_1$ and $\sigma_2$ & Constraint on $s$\\
\hline\hline
\multirow{3}{*}{5} & $  s_{3}+\sigma_{1}\,s_{2}+s_{1}\,\sigma_{2} $ &\\
& $s_{4}+\sigma_{1}\,s_{3}+\sigma_{2}\,s_{2} $ &\\
& $s_{2}\,s_{4}+s_{3}^2+\sigma_{2}\,\left(s_{1}\,s_{3}+s_{2}^2\right) $ &\\ 
\hline
\multirow{1}{*}{4} & $\sigma_{1}\,\left(s_{5}+s_{3}\right)+s_{3}+s_{1}\,\sigma_{2} $ & $  s_{5}+s_{3}+s_{2} $ \\ 
\hline
\multirow{3}{*}{3} & $  s_{5}+\sigma_{1}\,s_{2}+s_{2}+s_{1}\,\sigma_{2} $&\\
& $\sigma_{1}\,s_{5}+s_{5}+s_{4}+\sigma_{2}\,\left(s_{2}+s_{1}\right)+s_{2} $&\\
&$s_{5}^2+\sigma_{2}\,\left(s_{1}\,s_{5}+s_{2}^2+s_{1}\,s_{2}\right)+s_{2}\,s_{4}+s_{2}^2 $& \\ 
\hline
\multirow{3}{*}{2} & $ \sigma_{2}\,\left(s_{5}+s_{3}+s_{1}\right)+\sigma_{1}\,s_{5}+s_{4}+s_{3} $ &\\
&$\sigma_{2}\,\left(s_{5}^2+s_{3}\,\left(s_{3}+s_{1}\right)\right)+s_{4}\,s_{5}+s_{3}\,s_{4}+s_{3}^2 $ &\\
&$\sigma_{2}\,\left(s_{5}+s_{3}\right)+s_{4}+\sigma_{1}\,s_{3} $ &\\ 
\hline
\multirow{1}{*}{1} & $\sigma_{2}\,\left(s_{5}+s_{3}\right)+s_{4}+\sigma_{1}\,s_{3}  $ & $s_{5}+s_{3}+s_{2} $\\
\hline\hline
      \end{tabular}
  \end{center}
\end{table}

  For fixed $j$, this is a system of equations for $x$, $y$, $a$, $b$
  and $z$, i.e.\ $5$ equations in $5$ unknowns, i.e.\ the problem is
  well-posed.  The case is thus subdivided into subcases according to
  the value of $j$. We preprocessed the equations with CAS, by first
  erasing equation in row $j$ (which eliminates $z$), and then eliminating variables $x$ and $y$.
  Also, since the system is symmetric with respect to $a$ and $b$, we expressed the equations
  in terms of symmetric polynomials $\sigma_1=a+b$ and $\sigma_2=a\cdot b$. The
  result is in Table~\ref{tab:fixing-2-data-1-parity}. It should be noted
  that in each case we have a linear system of equations for $(\sigma_1,\sigma_2)$.
  Each solution of the linear system yields a single solution $(a,b)$ up to swapping $a$ and $b$.
  We proceed to more precisely determine the number of solutions.
  The analysis of subcases for $j=5,4,3,2,1$ is as follows:
  \subsection*{Case $j=5$}
  There are 3 linear equations for $(\sigma_1,\sigma_2)$. The solution of the system is non-unique iff
  $s_1\,s_3+s_2^2=0$ and $s_2\,s_4+s_3^3=0$. This system of equations can also be written as 
  \begin{align*}
    s_1\,s_3 &= s_2^2,\\
    s_2\,s_4 &= s_3^2.
  \end{align*}
  If $s_2=0$ then $s_3=0$, and $s_4=0$. Thus $s$ has weight at most
  $2$, and it matches the case of $2$ failed parity drives, which is a
  contradiction. Therefore, $s_2\neq 0$ and $s_1\neq 0$, $s_3\neq 0$,
  $s_4=s_3^2/s_2\neq 0$. Let us define $\rho=s_2/s_1\neq 0$.  We have
  $s_2=s_1\,\rho$, $s_3=s_2^2/s_1=s_1\rho^2$,
  $s_4=s_3^2/s_2=s_1\,\rho^3$.  Hence, $s=x\,\P{\rho}+z\,I_5$, where
  $x=s_1$, for some $z$. This is also a contradiction. Hence, we
  may assume that the linear system for $j=5$ is non-singular.
  Hence, there is a unique solution $(\sigma_1,\sigma_2)$. 

  We conclude that there exists a unique solution with $j=5$.

  \subsection*{Case $j=4$} 
  The first equation, $s_2+s_3+s_5=0$, is a necessary condition on the
  syndromes for this case to be possible. The second equation yields a
  relation between $a$ and $b$, more precisely, a linear relationship
  between the symmetric polynomials $\sigma_1$ and $\sigma_2$, which
  can be rewritten as $s_1\,\sigma_2+s_2\,\sigma_1+s_3=0$. Unless
  $s_1=s_2=0$ this constraint is non-degenerate. If $s_1=s_2=0$, also
  $s_3=0$. Hence $s_5=s_2+s_3=0$. Hence, $s_4$ can be the only
  non-zero syndrome. In this case $a$ and $b$ are arbitrary. However,
  $x$ and $y$ are determined to be $0$, so no data drives have
  failed. Thus, in contradiction with our assumption, there is only
  one failed drive: parity drive at position $k+4$. Hence, we may
  assume that either $s_1\neq 0$ or $s_2\neq 0$. By letting $b$ assume
  all $k$ values $\alpha_1$, $\alpha_2$, $\ldots$, $\alpha_k$, we
  determine $a$ from the linear equation $(s_1\,b+s_2)\,a=s_3+s_2\cdot
  b$. If $s_1=0$, $a=s_3/s_2+b$. If $s_1\neq 0$ then
  $a=(s_3+s_2\,b)/(s_1\,b+s_2)$ is unique for $b\neq s_3/s_2$. If
  $b=s_3/s_2$ then $a$ is $\neq b$ and otherwise arbitrary.  This
  leads to $2(k-1)$ pairs $(a,b)$.  Also, there is a symmetry: if
  $(a,b)$ is a solution, so is $(b,a)$. This symmetry shows that every
  solution is repeated twice in the above procedure.

  Thus the  number of solutions for $j=4$ is bounded by $k-1$ in total.

  \subsection*{Case $j=3$}
  Condition of non-unique solution is that $s_1\,s_5+s_1\,s_2+s_2^2=0$, or $s_1(s_2+s_5)=s_2^2$,
  and $s_2\,s_4+s_2^2+s_5^2=0$, or $s_2\,(s_2+s_4)=s_5^2$. Thus,
  \begin{align*}
    s_1(s_2+s_5)&=s_2^2,\\
    s_2\,(s_2+s_4)=s_5^2.
  \end{align*}
  Let us suppose that $s_2\neq 0$. Then $s_1\neq 0$ and $s_2+s_5\neq
  0$. Let us define $\rho=s_2/s_1\neq 0$. Then
  $s_1(s_1\rho+s_5)=s_1^2\rho^2$.  Hence
  $s_1\,s_5=s_1^2(\rho^2+\rho)$, or
  $s_5=s_1\,\rho\,(\rho+1)$. Furthermore,
  \begin{equation*}
    s_4 = s_5^2/s_2 - s_2 = s_1^2\,\rho^2\,(\rho^2+1)/(s_1\,\rho)-s_1\,\rho 
    = s_1\,(\rho^3+\rho)-s_1\,\rho=s_1\,\rho^3.
  \end{equation*}
  We thus have proven
  $s=s_1(1,\rho,s_3/s_1,\rho^3,s_1\rho(\rho+1))$. This implies
  $s=x\P{\rho}+z\,I_3$ where $x=s_1$ and $z=s_3-x\rho^3$. This means
  that $s$ matches a solution with just two drives failed, which is a
  contradiction.

  Let us suppose that $s_2=0$. Then $s_1s_5=0$ and $s_1\,s_4=0$.  If
  $s_1\neq 0$ then $s_2=s_5=s_4=0$ and $s$ has weight $\leq 2$ which
  is consistent with $2$ parity drive failure. So $s_1=s_2=0$. Also
  $s_5=0$ and $s_4=0$.  Thus $s=0$, which is consistent with no drive
  failing, which is again a contradiction.

  The solution for $j=3$ is therefore unique.

  \subsection*{Case $j=2$}
  The linear
  system for $\sigma_1$ and $\sigma_2$ is singular iff 
  $s_1\,s_2-(s_3+s_5)^2=0$. Moreover, the coefficients of the last
  equation simultaneously vanish iff 
  $s_5^2+s_3\,(s_3+s_1)=0$ and $s_4\,s_5+s_3\,s_4+s_3^2=0$.
  This system is equivalent to 
  \begin{align*}
    s_3\,(s_1+s_3)=s_5^2,\\
    s_3\,(s_3+s_4)=s_4\,s_5.
  \end{align*}
  Let us suppose $s_5\neq 0$. Then $s_3\neq 0$ and $s_1+s_3\neq 0$. If
  $s_4=0$ then the second equation yields $s_3^2=0$, which implies
  $s_3=s_4=0$. Hence, $\sigma_2\,s_5=0$ by the third equation in
  Table~\ref{tab:fixing-2-data-1-parity}.  But $\sigma_2\neq 0$, as
  only $a,b\neq 0$ are solutions. Therefore $s_5=0$, which contradicts
  our assumption that $s_5\neq 0$.  Hence $s_4\neq 0$. This implies
  $s_3\neq 0$ and $s_3+s_4\neq 0$.  If $s_1=0$ then the first equation
  implies $s_3^2=s_5^2$, and thus $s_3=s_5$ (we used Frobenius
  identity).  Last second equation yields $s_3+s_4=s_4$, i.e.\ $s_3=0$,
  which would be a contradiction. Hence, $s_1\neq 0$. This implies
  that $s_1,s_3,s_4,s_5\neq 0$. Let $\rho=s_4/s_3\neq0$. Then.
  $s_3\,(s_3+s_3\,\rho)=s_3\rho\,s_5$. Thus $s_5=s_3\,(1+\rho^{-1})$.
  Also $s_3(s_1+s_3)=s_5^2=s_3^2(1+\rho^{-2})$. Hence,
  $s_1+s_3=s_3(1+\rho^{-2})$ and $s_1=s_3\rho^{-2}$. Thus
  $s_3=s_1\,\rho^2$. Hence
  $s=s_1(1,s_2/s_1,\rho^2,\rho^3,\rho(\rho+1)$. Again, there is a
  solution to $x\P{\rho}+z\,I_2=0$, which is a contradiction.

  Let us suppose $s_5=0$. But then the system becomes
  \begin{align*}
    s_3^2&=s_1\,s_3,\\
    s_3\,s_4&=s_3^2.
  \end{align*}
  If $s_3\neq 0$ (in addition to $s_5\neq 0$) then $s_1,s_4\neq 0$.
  Again, we define $\rho=s_4/s_3\neq 0$ and obtain
  $s_1=s_3^2/s_4=s_3/\rho^2$.  So
  $s=s_1(1,s_2/s_1,\rho^2,\rho^3,0)$. But also $s_3=s_1$ which implies
  $\rho^2=1$.  Hence, $\rho=1$. This implies that $s=x\P{1}+z\,I_2$
  has a solution, which is a contradiction.

  If $s_3=0$ (and $s_5=0$ by assumption) then $s_4=0$
  by the third equation in Table~\ref{tab:fixing-2-data-1-parity}.
  But then $s$ has weight at most $2$, and it matches $2$ failed
  parity drives, which contradicts our assumptions.

  We proved that all degenerate cases come from syndrome vectors which match
  failure of fewer than $3$ drives.

  The solution for $j=2$ is therefore unique.  

  \subsection*{Case $j=1$}
  In this case, equation $s_5=s_2+s_3$ is required for the solution to
  exist. The second equation yields a linear constraint on
  $(\sigma_1,\sigma_2)$:
  \[ s_2\,\sigma_2 + s_3\,\sigma_1+s_4 =0. \]
  This constraint is consistent and non-trivial unless
  $s_2=s_3=s_4=0$. In this case, also $s_5=s_2+s_3=0$.  Hence
  $s=s_1(1,0,0,0,0)$. But this means $s=s_1\, I_1$, i.e.\ $s$ is
  consistent with one parity drive failure, which is a
  contradiction.

  Therefore, we may assume that the constraint is non-trivial and the
  set of admissible pairs $(\sigma_1,\sigma_2)$ forms a
  $1$-dimensional linear subspace of $\FF^2$.  Moreover, since
  $(\zeta-a)\,(\zeta-b)=\zeta^2-\sigma_1\zeta+\sigma_2=0$, for every
  admissible pair $(\sigma_1,\sigma_2)$ there are at most $1$
  solutions $(a,b)$, up to swapping $a$ and $b$. Hence, the total
  number of solutions is $\leq\cdot|\FF|$.  A sharper estimate is
  obtained by considering setting $b=\alpha_i$, $i=1,2,\ldots,k$. For
  fixed $b$, the constraint is $s_2\,(a\,b)+s_3\,(a+b)+s_4=0$ or
  $(s_2\,b+s_3)\,a+s_4=0$. If $b=s_3/s_2$ and $s_4=0$ then $a$ is
  $\neq b$ and otherwise arbitrary, yielding $k-1$ possible
  solutions. If $b\neq s_3/s_2$, $a=s_4/(s_2\,b+s_3)$ is
  unique. Therefore, the number of pairs $(a,b)$ of this type is again
  $k-1$. Again, due to symmetry, we can eliminate half the pairs $(a,b)$ 
  The total number of solutions of $j=1$ is thus not greater than $k-1$.

  Finally, the total number of solutions with fixed $s$, which may
  match all of the cases, is bounded by:
  \[ 2\,(k-1)+1 + 5 + 3\cdot 1= 2\,k+4. \]
\end{proof}

The techniques of this proof are easily implemented as a collection of 
algorithms. The top-level algorithm is
Algorithm~\ref{alg:recover-three-failed-disks}. This algorithm can be
invoked after trying all $2$-disk failures against the syndrome vector
$s$, to create a list of all matching error vectors of weight $\leq
3$, except for a simultaneous failure of $3$ data drives.

\begin{algorithm}[htb]
  \caption{ An algorithm which produces the augmented matrix
    $C=[A\,|\,b]$ of the linear system $A\cdot \sigma=b$, where
    $\sigma=(\sigma_1,\sigma_2)$ and $\sigma_1=u+v$, $\sigma_2=u\,v$
    are symmetric polynomials of $u$ and $v$. Galois field element
    $cond$ represents the extra constraint value that must be 0 for
    the system to be consistent.  This algorithm is based on
    Table~\ref{tab:fixing-2-data-1-parity} Matrix $C$ has either 1 or
    2 rows.}
  \begin{algorithmic}[1]
    \Function{AugmatrixForParity}{$s$,$j$}
    \If{$j=5$}
    \State $C\gets\begin{pmatrix}
           s_2 & s_1 & s_3\\
           s_3 & s_2 & s_4
         \end{pmatrix}$
    \State $cond\gets 0$
    \ElsIf{$j=4$}
    \State $C\gets\begin{pmatrix}
           s_2 & s_1 & s_3\\
         \end{pmatrix}$
    \State $cond\gets s_2+s_3+s_5$
    \ElsIf{$j=3$}
    \State $C\gets\begin{pmatrix}
           s_2 & s_1 &       s_2+s_5\\
           s_5 & s_1+s_2 &  s_2+s_4+s_5
         \end{pmatrix}$
    \State $cond\gets 0$;
    \ElsIf{$j=2$}
    \State $C\gets\begin{pmatrix}
           s_5 & s_1+s_3+s_5 & s_3+s_4\\
           s_3 & s_3+s_5 &      s_4
         \end{pmatrix}$
    \State $cond\gets 0$
    \ElsIf{$j=1$}
    \State $C\gets\begin{pmatrix} s_3 & s_2 & s_4 \end{pmatrix}$
    \State $cond\gets s_2+s_3+s_5$
    \EndIf
    \State $\Return{(C,cont)}$
    \EndFunction
  \end{algorithmic}
\end{algorithm}
\begin{algorithm}[htb]
  \caption{\label{alg:locate-two-data-for-parity}Solve the equation
    $x\,\P{u}+y\,\P{v}+z\,I_j=s$. Return all solutions in vectors $u$,
    $v$, $x$, $y$ and $z$. Also, return the solution count $cnt$. If $cnt=0$ then
    $u=v=x=y=z=[]$ (empty vector).}
  \begin{algorithmic}[1]
    \Function{LocateTwoDataForParity}{$\alpha$,$s$,$j$}
    \State $u\gets[]$; $v\gets[]$; \Comment{Initialize to empty vectors.}
    \State $x\gets[]$; $y\gets[]$; $z\gets[]$\Comment{Initialize to empty vectors.}
    \State $cnt\gets 0$
    \State $(C,cond)\gets\Call{AugmatrixForParity}{s,j}$
    \State $cnt\gets 0$;
    \If{$cond\neq 0 $}
    \State \Goto{alg:locate-two-data-for-parity:end}
    \EndIf
    \If{$\Call{NumberOfRows}{C}=2$}
    \State $\{u,v\}\gets\Call{LocateTwoDataWhenDetermined}{C}$
    \If{$u=0 \And v=0$}
    \State \Goto{alg:locate-two-data-for-parity:end}
    \EndIf
    \State $(x,y,z)\gets\Call{CalculateCoefficients}{u,v,s,j}$;
    \State $cnt\gets 1$

    \ElsIf {$\Call{NumberOfRows}{C}=1$}
    \State $(u,v,x,y,z,cnt)\gets\Call{LocateTwoDataWhenUnderdetermined}{\alpha,s,j,C}$
    \EndIf
    \State \Return{$(u,v,x,y,z,cnt)$}
    \label{alg:locate-two-data-for-parity:end} \Comment{No solution exists}
    \EndFunction
  \end{algorithmic}
\end{algorithm}
\begin{algorithm}[htb]
  \caption{\label{alg:locate-two-data-when-determined}An algorithm implementing a helper function for Algorithm~\ref{alg:locate-two-data-for-parity}. This simple algorithm first solves a $2\times 2$ linear system by Cramer's Rule and then solves a quadratic equation to find $u$ and $v$.}
  \begin{algorithmic}[1]
  \Function{LocateTwoDataWhenDetermined}{$C$}
  \State $u\gets 0$; $v\gets 0$ \Comment{Zero in Galois field.}
  \State $D\gets C_{11}\,C_{22}-C_{21}\,C_{12}$
  \State $D_1\gets C_{13}\,C_{22}-C_{23}\,C_{12}$
  \State $D_2\gets C_{11}\,C_{23}-C_{21}\,C_{13}$
  \If{$D=0 \Or D_1=0 \Or D_2=0$}
  \State \Goto{alg:locate-two-data-when-determined:end}
  \EndIf
  \State $\sigma_1\gets D_1/D$
  \State $\sigma_2\gets D_2/D$
  \State $roots\gets\Call{SolveQuadraticEquation}{1,-\sigma_1,\sigma_2}$
  \If{$roots=\emptyset$}
  \State \Goto{alg:locate-two-data-when-determined:end}
  \Else \Comment{In view of $\sigma_1\neq 0$, there are two distinct roots.}
  \State $u\gets roots_1$
  \State $v\gets roots_2$
  \EndIf
  \State \Return{$(u,v)$}\label{alg:locate-two-data-when-determined:end}
  \EndFunction
  \end{algorithmic}
\end{algorithm}
\begin{algorithm}[htb]
  \caption{\label{alg:locate-two-data-when-underdetermined}
    If $C=[A|b]$ and $A=[c_1,c_2]$, $b=c_3$ then the system $A\,\sigma=b$ has only one equation
    $c_1\,\sigma_1+c_2\,\sigma_2=c_3$. We find
    $c_1\,(u+v)+c_2\,u\,v=c_3$,
    $(c_1+c_2\,v)\,u=c_3-c_1\,v$,
    $u=(c_3-c_1\,v)/(c_1+c_2\,v)$.
    The above formula yields unique $u$ when $v\neq -c_1/c_2$.
    Otherwise, $u$ is arbitrary if 
    $c_3-c_1\,(-c_1/c_2)=c_3-c_1^2/c_2=0$.
    Thus, $u$ is arbitrary if $c_3\,c_2=c_1^2$.}
  \begin{algorithmic}[1]
    \Function{LocateTwoDataWhenUnderdetermined}{$\alpha$,$s$,$j$,$c$}
    \State $cnt\gets 0$
    \State $k\gets\Call{NumberOfElements}{\alpha}$
    \For{$l=1,2,\ldots,k$}
    \State $v\gets\alpha_l$
    \State $N\gets c_3-c_1\,v$
    \State $D=c_1+c_2\,v$
    \If{$D\neq 0 \And N\neq 0$}
    \State $u\gets N/D$
    \If{$u<v$}                      \Comment{Sort pairs to avoid duplicates.}
    \State {$(x,y,z)=\Call{CalculateCoefficients}{u,v,s,j}$};
    \State $cnt\gets cnt+1$
    \State $uLst_{cnt}=u$;$vLst_{cnt}=v$; $xLst_{cnt}=x$; $yLst_{cnt}=y$; $zLst_{cnt}=z$
    \EndIf
    \ElsIf{$N=0 \And D=0$}
    \Comment{Now u is arbitrary, the only constraint is $u\neq v$.}
    \For{$u\in\alpha$}
    \If{$u<v$}                  \Comment{ Sort pairs to avoid duplicates.}
    \State $(x,y,z)=\Call{CalculateCoefficients}{u,v,s,j}$                
    \State $cnt\gets cnt+1$
    \State $uLst_{cnt}=u$;$vLst_{cnt}=v$; $xLst_{cnt}=x$; $yLst_{cnt}=y$; $zLst_{cnt}=z$
    \EndIf
    \EndFor
    \EndIf
    \EndFor
    \EndFunction
  \end{algorithmic}
\end{algorithm}

\begin{algorithm}[htb]
  \caption{\label{alg:calculate-coefficients} This helper algorithm
    implements function \textproc{CalculateCoefficients} which finds the
    coefficients $x$, $y$ and $z$ for two data and one parity error.
    A call $(x,y,z)\gets\textproc{CalculateCoefficients}(u,v,s,j)$ solves
    the vector equation in Galois field: $x\,\P{u}+y\,\P{v}+z\,I_j=s$
    The arguments $u$ and $v$ must be non-zero and distinct, and $J$
    must be in the range $1\leq j\leq 5$.}
  \begin{algorithmic}[1]
    \Function{CalculateCoefficients}{$u$,$v$,$s$,$j$}
    \If{$j>2$}
    \Comment{Use rows $1$ and $2$.}
    \State $D\gets v-u$
    \State $x\gets (s_1\,v-s_2)/D$
    \State $y=(s_1\,u-s_2)/D$
    \ElsIf{$j=2$}          \Comment{Use rows $1$ and $3$.}
    \State $D\gets(v-u)^2$
    \State $x\gets (s_1\,v^2-s_3)/D$
    \State $y\gets (s_1\,u^2-s_3)/D$
    \ElsIf{$j=1$}          \Comment{Use rows $2$ and $3$.}
    \State $D\gets u\,v\,(v-u)$
    \State $x\gets(s_2\,v^2-s_3\,v)/D$
    \State $y\gets(s_2\,u^2-s_3\,u)/D$
    \EndIf
    \If{$j<5$}
    \State $z\gets s_j-x\,u^{j-1}-y\,v^{j-1}$
    \Else                          \Comment{$j=5$}
    \State $z\gets s_j-x\,u\,(u+1)-y\,v(v+1)$
    \EndIf
    \EndFunction
  \end{algorithmic}
\end{algorithm}

\begin{algorithm}[htb]
  \caption{\label{alg:recover-data-for-two-parity} 
    Implements recovery of three failed drives, case of 2 parity and 1
    data error. It accepts as arguments the set $\alpha\subset\FF $, the syndrome vector $s\in\FF^5$,
    the index of the broken parities $j$  and $l$ ($1\leq j<l \leq 5$), and the set of excluded cubic roots of unity $X$. It returns
    the error vector $e\in\FF^{k+5}$.}
  \begin{algorithmic}[1]
    \Function{RecoverDataForTwoParity}{$\alpha$,$s$,$j$,$l$,$X$}
    \State $k\gets\Call{NumberOfElements}{\alpha}$; $e\gets []$; $i\gets []$; $\rho\gets 0$ \Comment{$0\in\FF$}
    \If{$j\gets 1$ }
    \If{$l=2 \And s_4\,s_5+s_3\,s_4+s_3^2=0$}
    \If{ $s_3\neq0$}
    \State $\rho\gets s_4/s_3$
    \ElsIf{ $s_5\neq0$ }
    \State $\rho\gets (s_3+s_4)/s_5$
    \EndIf
    \ElsIf{ $ l=3 \And s_5^2+s_2\,s_4+s_2^2=0 $}
    \If{$s_2\neq0 $}
    \State $\rho\gets (s_2+s_5)/s_2$
    \ElsIf{$s_5\neq0$}
    \State $\rho\gets (s_2+s_5+s_5)/s_5$;
    \EndIf
    \ElsIf{$l=4 \And s_2+s_3+s_5=0$}
    \If{$s_2\neq0 $}
    \State $\rho\gets s_3/s_2$
    \EndIf
    \ElsIf{$l=5 \And s_2\neq0 \And s_2\,s_4+s_3^2=0$}
    \State $\rho\gets s_3/s_2$
    \EndIf
    \ElsIf{$j=2 $}
    \If{$l=3 \And (s_5^2+s_1\,s_4)\,s_5+s_1\,s_4\,(s_1+s_4)=0$}
    \If{$s_1+s_5\neq0$}
    \State $\rho\gets (s_4+s_5)/(s_1+s_5)$
    \ElsIf{$ s_5\neq0$}                  \Comment{$s_1=s_5=s_4$ and $\rho^2+\rho+1=0$}
    \If{$X\neq\emptyset$}
    \State $\rho\gets 1-X_1$
    \EndIf
    \EndIf
    \ElsIf{$ l=4 \And (s_3+s_5)^2+s_1\,s_3=0$}
    \If{$s_3+s_5\neq0$}
    \State $\rho\gets s_3/(s_3+s_5)$
    \EndIf
    \ElsIf{$l=5 \And s_1\,s_4^2+s_3^3=0$}
    \If{$s_3\neq0$}
    \State $\rho\gets s_4/s_3$
    \EndIf
    \EndIf
    \algstore{alg-recover-data-for-two-parity-saved}
  \end{algorithmic}
\end{algorithm}

\begin{algorithm}[htb]
  \caption{Algorithm~\ref{alg:recover-data-for-two-parity}, part 2}
  \begin{algorithmic}[1]
    \algrestore{alg-recover-data-for-two-parity-saved}
    \ElsIf{$j=3$}
    \If{$l=4 \And s_1\,(s_5+s_2)+s_2^2=0$}
    \If{$s_2\neq0$}
    \State $\rho\gets (s_2+s_5)/s_2$
    \EndIf
    \ElsIf{$ l=5 \And s_1^2\,s_4+s_2^3=0$}
    \If{$s_2\neq0$}
    \State $\rho\gets s_1\,s_4/s_2^2$
    \EndIf
    \EndIf
    \ElsIf{$j=4$}
    \If{$l=5 \And s_1\,s_3+s_2^2=0$}
    \If{$s_2\neq0$}
    \State $\rho\gets s_3/s_2$
    \EndIf
    \EndIf
    \EndIf
    \If{$\rho=0$}
    \State \Return
    \EndIf
    \State $i\gets\Call{Lookup}{\alpha,\rho}$
    \If{$i\neq\emptyset$}
    \State $p\gets \P{\rho}$
    \State $\{q,t,w\}\gets \{1,2,3,4,5\}\setminus\{j,l\}$ \Comment{The sorted complement of $\{j,l\}$, $q<t<w$.}
    \State $s'\gets (s_{q},s_{t},s_{w})$
    \State $p'\gets (p_{q},p_{t},p_{w})$
    \State $x\gets s_q/p_q$     \Comment{Prospective $x$. Note that $q<5$ implies $p_q\neq 0$, as $\rho\neq 0$.}
    \If{$s_t=x\,p_t \And s_w=x\,p_w$} \Comment{We indeed have a solution.} 
    \State $e\gets 0$          \Comment{$0\in\FF^{k+5}$}
    \State $e_i\gets x$
    \State $e_{k+j}\gets s_j-x\,p_j$
    \State $e_{k+l}\gets s_l-x\,p_l$
    \EndIf
    \EndIf
    \EndFunction
  \end{algorithmic}
\end{algorithm}

\begin{algorithm}[htb]
  \caption{\label{alg:recover-two-data-for-parity} Implements recovery
    of three failed drives, case of 1 parity and 2 data
    errors. Accepts as arguments the set $\alpha\subset\FF$, the
    syndrome vector $s\in\FF^5$, and the broken parity drive index $j$
    ($1\leq j\leq 5$). It returns the error matrix $e$ which has size
    $(k+5)\times cnt$, where $cnt$ is the count of possible error
    vectors which result in $s$. Thus, $H\cdot e=[s|s|\cdots|s]$ where
    $s$ is repeated $cnt$ times. We note that $cnt=0$ is possible,
    when there is no solution meeting our specification.}
  \begin{algorithmic}[1]
    \Function{RecoverTwoDataForParity}{$\alpha$,$s$,$j$}
    \State $(u,v,x,y,z,cnt)\gets\Call{LocateTwoDataForParity}{\alpha,s,j}$
    \State $k\gets\Call{NumberOfElements}{\alpha}$
    \State $e=0$  \Comment{A $(k+5)\times cnt$ matrix of $0\in\FF$.}
    \State $t\gets 0$
    \For{$l=1,2,\ldots,cnt$}
    \State $i_1\gets\Call{Lookup}{\alpha,u_l}$
    \State $i_2\gets\Call{Lookup}{\alpha,v_l}$
    \If{ $i_1=\emptyset \Or i_2=\emptyset$}
    \State \Continue;
    \EndIf
    \State $t\gets t+1$
    \State $e_{i_1,t}\gets x_l$
    \State $e_{i_2,t}\gets y_l$
    \State $e_{k+j,t}\gets z_l$
    \EndFor
    \State $e\gets\Call{SelectColumns}{e,1,t}$\Comment{Keep only columns $1$--$t$.}
    \State \Return{$e$}
    \EndFunction
  \end{algorithmic}
\end{algorithm}

\begin{algorithm}
  \caption{\label{alg:recover-three-failed-disks} Implements recovery
    for $3$-drive failure where at least one of the drives is a parity
    drive. The arguments are: the set $\alpha\subset\FF$, the syndrome
    vector $s\in\FF^5$ and the set of excluded cubic roots of unity,
    $X$ (with $0$ or $1$ element). The result is a matrix $e$ whose
    columns are all possible error vectors satisfying our
    specification. Hence, this algorithm is a partial list decoder for
    all $3$-drive errors, with a notable exception when the $3$ drives
    are data drives.}
  \begin{algorithmic}[1]
    \Function{RecoverThreeFailedDisks}{$\alpha$,$s$,$X$}
    \State $lst\gets\Call{FindNonZeros}{s}$
    \State $nz\gets\Call{NumberOfElements}{lst}$
    \State $k\gets\Call{NumberOfElements}{\alpha}$;
    \State $e\gets[]$
    
    \If{$nz=3$}       \Comment{Find a solution with 3 parity errors.}
    \State $f\gets 0$ \Comment{Vector of $k+5$ zeros in $\FF$.}
    \For{$t\in lst$}
    \State $f_{k+t}\gets s_t$
    \EndFor
    \State $e\gets\Call{AppendColumns}{e,f}$
    \EndIf

    \For{$j=1,2,3,4$} \Comment{Find all solutions with 2 parity errors.}
    \For{$l=j+1,j+2,\ldots,5$}
    \State $f\gets\Call{RecoverDataForTwoParity}{\alpha,s,j,l,X}$
    \State $e\gets\Call{AppendColumns}{e,f}$
    \EndFor
    \EndFor
    \State\Return{$e$}
    
    \For{$j=1,2,3,4,5$} \Comment{Find all solutions with 1 parity error.}
    \State $f\gets\Call{RecoverTwoDataForParity}{\alpha,s,j}$
    \State $e\gets\Call{AppendColumns}{e,f}$
    \EndFor
    \EndFunction
  \end{algorithmic}
\end{algorithm}

\begin{remark}
How can Theorem~\ref{thm:3-drive-list-decoding} be applied?

Recovery of data based on Theorem~\ref{thm:3-drive-list-decoding} will
vary depending on the drive characteristics (note that ``drive'' is
used in a broad sense, to mean any kind of storage device or even a
communications channel).
\end{remark}

\begin{example}[Recovery from a $3$-drive failure]
  Since Theorem~\ref{thm:3-drive-list-decoding} yields a list of
  triples suspected of failure, we can save the relevant data and
  write new data to the suspected locations. Assuming that failed
  drives are the triple $(i,j,l)$, this triple will be repeated for
  each syndrome obtained in the course of this experiment. If one
  assume that can repeatedly generate errors from the failing drives,
  we will be able to find the failed drives by an elimination process.
\end{example}

\section{Degraded Modes}
\label{sec:degraded_mode}
When some drives are removed from a RAID array, we say that the array
is in ``degraded mode''. For example RAID~6 can operate with two drives removed,
but it loses all error detecting and correcting capabilities.
It is implied that the locations of the removed drives are known.
This modifies the recovery problem. We are interested in a maximum
likelihood algorithm, which seeks a solution with the highest
likelihood, which is equivalent to giving priority to solutions
with a smaller number of failed drives. The \emph{a priori} knowledge
that some drives are missing changes the order in which the solutions
are presented. It is assumed that parities are calculating without
the missing drives, or equivalently, the data from the missing
drives is replaced with zeros. Loosely speaking, we treat the missing
drives as having been \emph{erased}.

The algorithm based on the parity check matrix
\eqref{eqn:parity-check-matrix} creates a number of special cases,
depending on the number and role (parity or data) of the missing
drives.

\subsection{A method to handle combinations of erasures and errors}
We recall that equations~\eqref{eqn:specialized-linear-system-full-form}
and ~\eqref{eqn:specialized-linear-system} can be used to perform syndrome decoding.
We will further develop notations helpful in describing decoding
of arbitrary combinations of erasures at known locations and errors at
unknown locations.

Let $I\subseteq{1,2,\ldots,k}$ and $J\subseteq{1,2,\ldots,5}$ by any subsets
satisfying
\[ |I| + |J| \leq 5 \]
where $I\cup J$ is the set of locations of known erasures.
It is easy to see that all solutions to
the equation $H\cdot e = s$ are obtained by first solving
\begin{equation}
\label{eqn:degraded-subsystem}
  P^{J,I}e^{I} = s^{J}
\end{equation}
where $e^{I}$ is the vector obtained from the full error vector by
keeping entries in the set $I$, $P^{J,I}$ is obtained from the parity
matrix $P$ by \textbf{keeping} only columns with indices in the set
$I$ and \textbf{deleting} rows with indices in the set $J$, and
$s^{J}$ is the vector obtained from the syndrome vector $s$ by
\textbf{deleting} entries in the set $J$. In short, equations with
indices in the set $J$ are deleted. After solving equation
\eqref{eqn:degraded-subsystem} we construct $e$ by setting $e_i$ to
the $i$-th entry of $e^{I}$ for $i\in I$ and setting $e_i=0$ for
$i\notin I$, $1\leq i \leq k$. Entries $e_{k+j}$ are then uniquely
determined by the $j$-th equation of  the system
$H\cdot e = s$, which reduces to:
\[ e_{k+j} = s_j - \sum_{i\in I} p_{j,i} e_i, \qquad j\in J. \]
The matrix equation \eqref{eqn:degraded-subsystem} is a system of
\emph{non-linear} equations for the error locators $\rho_i$, and the
error values $x_i$, $i\in I$. It will be analyzed by methods of
algebra. Some of the systems are hard enough to require advanced
methods, such as Gr\"obner basis calculations and elimination
theory. Sometimes the calculations are lengthy enough to be performed
with the aid of a Computer Algebra System (CAS). In our calculations
we used the free, open source CAS Maxima \cite{Maxima}, containing a
Gr\"obner basis package written by one of the authors of this paper
(Rychlik). Let us write down the explicit form of
system~\eqref{eqn:degraded-subsystem}:
\begin{equation}
  \label{eqn:degraded-subsystem-explicit-form}
\left(\begin{array}{cccc}
        p_{j_1,i_1} & p_{j_1,i_2} & \ldots & p_{j_1,i_r} \\
        p_{j_2,i_1} & p_{j_2,i_2} & \ldots & p_{j_2,i_r} \\        
        \vdots     & \vdots    & \ddots & \vdots \\
        p_{j_t, i_1} & p_{j_t,i_2} & \ldots & p_{j_t,i_r} \\
\end{array}\right)
\cdot 
\left(\begin{array}{c}
  e_{i_1} \\ e_{i_2} \\ \vdots \\ e_{i_r}
\end{array}\right)
= \left(\begin{array}{ccc}
  s_{j_1} \\ s_{j_2} \\ \vdots \\ s_{j_t}
\end{array}\right)
\end{equation}
where $P=[p_{j,i}]$ is the parity matrix, $r=|I|$, $t=5-|J|$, and
\begin{align*}
  \{i_1,i_2,\ldots,i_r\} &= I,\\
  \{j_1,j_2,\ldots,j_t\} &= \{1,2,\ldots,5\}\setminus J.
\end{align*}
Clearly, this notation generalizes any systematic code with $5$
parities, and can be further generalized to any number of parities.
Matrix equation~\eqref{eqn:degraded-subsystem-explicit-form} can also
be written using summation notation:
\begin{equation}
  \label{eqn:degraded-subsystem-tensor-form}
  \sum_{\mu=1}^r p_{j_\nu,i_\mu} e_{i_{\mu}} = s_{j_\nu},\qquad\nu=1,2,\ldots,t.
\end{equation}
When $P$ is an algebraic function of locators $\rho_{i}$,
$i=1,2,\ldots,k$ of data, this is an algebraic system. Our matrix is
explicitly an algebraic function of locators, and thus system
\eqref{eqn:degraded-subsystem-tensor-form} is thus explicitly a
polynomial system of equations.  It should be noted that any function
on a vector space $\FF^k$ with values in $\FF$ has a polynomial
representation, and thus the method of reducing the decoding problem
to a system of polynomial equations is universally applicable to all
linear, systematic codes, and even more general classes of
codes. However, the computational complexity of the decoding algorithm
depends on the algebraic complexity of the system (this term used
loosely, as there is no rigorous, universal notion of algebraic
complexity). We recall that for the code given by \eqref{eqn:parity-check-matrix}
we have
\begin{equation}
  p_{j,i} = \begin{cases}
    \alpha_i^{j-1}, & j=1,2,3,4,\\
    \alpha_i(\alpha_i+1), & j=5.
  \end{cases}
\end{equation}
resulting in a system \eqref{eqn:degraded-subsystem-tensor-form} of
total degree at most $4$ in variables $\rho_\mu=\alpha_{i_\mu}$ (error
locators) and $x_\mu=e_{i_\nu}$ (error values), for $i=1,2,\ldots,k$.
By methods of elimination theory, solving these systems reduces to
solving polynomial equations in 1 variable. The practical
implementation of elimination theory in computational algebraic
geometry is provided by Gr\"obner basis \cite{Cox+Others/1991/Ideals}.

It should be noted that unknown location parity errors add discrete variables
$j$ in the range $1\leq j \leq 5$. They cannot be handled by algebraic
methods, or at least are inconvenient to handle. However, we may
simulate such errors by branching (as in branch-and-bound) on all
possible values.

It will be generally advantageous for given $|J|=t$ to consider
the maximum possible $r=|I|$ for which a unique solution of
\eqref{eqn:degraded-subsystem-tensor-form} exists, because
smaller $r$ are special cases obtained by setting some $x_\nu$ to
$0$ and thus are a part of the analysis for the maximum $r$.

Also, it should be noted that there is one case when
\eqref{eqn:degraded-subsystem-tensor-form} is linear, namely when all
data errors are known erasures. In this case all error locators are
known and we solve a linear system for the error values. As this
is done by the standard methods of linear algebra, it should be
considered relatively easy.

Generally, error locators for fixed $r$ are treated identically, and
thus it is beneficial to re-write
\eqref{eqn:degraded-subsystem-tensor-form} in terms of the elementary
symmetric polynomials of the error locators in order to lower the
degree of the system.

\subsection{One parity drive missing}
If $j$, $1\leq j \leq 5$, is the index of the missing parity drive
then the equation $s=H\cdot e$ is analyzed by eliminating
the variable $e_{k+j}$, which reduces to the equation
\[ P^{j}\cdot e^{j} = s^{j} \]
in which the superscript means that row $j$ of the corresponding
matrix has been erased. The above equation involves only data
drives. Two variables are associated with every drive (the locator
$\rho_i$ and the error value $x_i$, $i=1,2,\ldots,k$). Since we have
$4$ equations, we can in principle accommodate two failed data
drives. Thus we consider the equation
\[ x\,\P{u} + y\,\P{v} = s \]
using the notation introduced by \eqref{eqn:P-rho}. The details
of recovery depend on which parity drive is missing.
\subsubsection*{$j=1$} The system of equations in this case is:
\[
  \begin{pmatrix}
    u & v \\
    u^2 & v^2 \\
    u^3 & v^3 \\
    u(u+1) & v(v+1) \\
  \end{pmatrix} \cdot
  \begin{pmatrix} x \\ y \end{pmatrix} =
  \begin{pmatrix} s_2 \\ s_3 \\ s_4 \\ s_5\end{pmatrix}.
\]
In order for this system to be consistent, we have to have
$s_2+s_3=s_5$. Then the last equation is dependent and can be
discarded, resulting in:
\[
  \begin{pmatrix}
    u & v \\
    u^2 & v^2 \\
    u^3 & v^3 \\
  \end{pmatrix} \cdot
  \begin{pmatrix} x \\ y \end{pmatrix} =
  \begin{pmatrix} s_2 \\ s_3 \\ s_4\end{pmatrix}.
\]
With the aid of a CAS, we obtain the system for $u$ and $v$ alone:
\[ s_2\,u\, v + s_3\,(u + v) + s_4 = 0. \]
We use the Vieta substitution $\sigma_1=u+v$ and $\sigma_2=u\cdot v$.
We can write the above equation as a linear relationship between 
$\sigma_1$ and $\sigma_2$:
\[ s_3\,\sigma_1 + s_2\,\sigma_2 = s_4. \]
Note that if this relationship is trivial and consistent then
$s_2=s_3=s_4=0$ and $x=y=0$ is a solution. As $s_2+s_3=s_5$, also
$s_5=0$. Hence, $s$ has weight $\leq 1$ and it has a solution with no
missing data drives, and this is the solution with maximum likelihood.

If $u$ is known then
\[ (s_2\, u + s_3)v = s_3\,u+s_4 \]
Given that $(s_2\cdot u + s_3) \neq 0$, we have a unique solution
\[ v = \frac{s_3\,u+s_4}{s_2\cdot u + s_3} \]
We would like to emphasize that uniqueness does not mean existence.
The equation is inconsistent if $s_3\,u+s_4\neq 0$ and $s_2\cdot u+s_3=0$.
Therefore, non-uniqueness is only possible when
\begin{align*}
  s_2\, u + s_3 &= 0,\\
  s_3\, u + s_4 &= 0.
\end{align*}
In particular $u=s_3/s_2=s_4/s_3$, or $s_2s_4=s_3^2$. This last equation
is another necessary condition to have a solution with parity $j=1$ missing,
and a data drive missing whose locator is $u=s_3/s_2=s_4/s_3$. In particular 
$s_2, s_3, s_4\neq 0$. If this condition is satisfied then $x$ and $y$
are found by linear algebra:
\begin{align*}
  x &= \frac{s_2\,v+s_3}{u(u+v)},\\
  y &= \frac{s_2\,u+s_3}{v(u+v)}.
\end{align*}
This works when $u,v\neq 0$ and $u\neq v$, all of which can be assumed.

Hence, two-data recovery (in addition to missing parity, for a total
of \emph{three} failed drives) does not work (i.e. result in a unique
solution), unless we have another missing data drive. Hence, with only
one drive missing, we can only recover one data drive. The condition
$s_2+s_3=s_5$ serves as a parity check. We can solve the equation with
one failed data drive:
\[ x \P{u} = s \]
without using component $j=1$, which leads to
\[
  x\cdot \begin{pmatrix}
    u \\
    u^2 \\
    u^3 \\
  \end{pmatrix}
  =\begin{pmatrix} s_2 \\ s_3 \\ s_4\end{pmatrix}.
\]
In particular, if $x\neq 0$ then $s_l\neq 0$ for $l=2,3,4$ and:
\begin{align*}
  u &= \frac{s_3}{s_2}\\
  x &= \frac{s_2}{u}.
\end{align*}

\subsubsection*{$j=2$} The system is
\begin{equation*}
  \begin{pmatrix}
    1 & 1 \\
    u^2 & v^2 \\
    u^3 & v^3 \\
    u(u+1) & v(v+1) \\
  \end{pmatrix} \cdot
  \begin{pmatrix} x \\ y \end{pmatrix} =
  \begin{pmatrix} s_1 \\ s_3 \\ s_4 \\ s_5\end{pmatrix}.
\end{equation*}
Elimination using CAS produces these equations, with $\sigma_1=u+v$ and
$\sigma_2=u\cdot v$ being the symmetric polynomials:
\begin{eqnarray}
  \label{eqn:one-parity-missing-j-2}
  \sigma_{1}\,s_{5} + \sigma_{2}(s_{1}+s_{3}+s_{5})&=&s_{4}+ s_{3},\nonumber\\
  \sigma_{2}(s_{5}^2+s_{3}^2+s_{1}\,s_{3})&=&s_{4}\,s_{5}+s_{3}\,s_{4}+s_{3}^2,\\
  \sigma_{1}\,s_{3} + \sigma_{2}(\,s_{5}+s_{3})&=&s_{4}\nonumber.
\end{eqnarray}
This system has a unique solution, up to exchanging $u$ and $v$, unless
\begin{align*}
  s_5^2+s_3^2+s_1\,s_3&=0,\\
  s_4\,s_5+s_3\,s_4+s_3^2&=0
\end{align*}
Simplifying:
\begin{align*}
  (s_5+s_3)^2=s_1\,s_3,\\
  s_4(s_5+s_3)&=s_3^2
\end{align*}
We note that if $s_1=0$ then $s_5+s_3=0$ and also $s_3=0$, and
$s_5=0$. This leaves $s_4$ arbitrary. The
system~\eqref{eqn:one-parity-missing-j-2} reduces to
\begin{align*}
  0 &= s_4,\\
  0 &= 0,\\
  0 &= s_4.
\end{align*}
This implies that all syndromes are $0$, which has a solution with no errors,
which is always most likely. Hence, we may assume that $s_1\neq 0$.
If $s_3=0$ then $s_3+s_5=0$ (assuming $s_1\neq 0$), and thus $s_5=0$.
This leaves $s_1$ and $s_4$ arbitrary. The
system~\eqref{eqn:one-parity-missing-j-2} reduces to
\begin{align*}
  \sigma_2\, s_1  &= s_4,\\
  0 &= 0,\\
  0 &= s_4
\end{align*}
Therefore $s_4=0$. But then $\sigma_2 = 0$, so either $u=0$ or $v=0$. But
this is an invalid locator, so it is rejected. Hence, we assume $s_1\neq 0$
and $s_3\neq 0$. Also, $s_4\neq 0$ and $s_5+s_3\neq 0$, i.e. $s_3\neq s_5$.
Eliminating $s_5+s_3$ we obtain $(s_3^2/s_4)^2=s_1s_3$ or $s_3^3=s_1s_4^2$.
We notice that under the degeneracy condition the first and
third equation of system~\eqref{eqn:one-parity-missing-j-2} form a singular
linear system (by checking the determinant is 0). Hence, the condition indeed yields non-unique solution.

Finally, we obtain the unique solution
\begin{align*}
  \sigma_1 &= \frac{s_3(s_3+s_5)+s_1s_4}{(s_3+s_5)^2 + s_1\,s_3},\\
  \sigma_2 &= \frac{s_4(s_3+s_5) + s_3^2}{(s_3+s_5)^2 + s_1\,s_3}.
\end{align*}
given that $s_3\neq 0$ and $s_1s_3\neq (s_3+s_5)^2$.
As usual, $u$ and $v$ are the roots of the quadratic equation
\[ \zeta^2 - \sigma_1\,\zeta + \sigma_2 = 0. \]
( Again, if another missing drive is known then one of the roots is known
and the solution is unique. We will use this fact later on.)

\subsubsection*{$j=3$} The system in this case is:
\[
  \begin{pmatrix}
    1 & 1 \\
    u & v \\
    u^3 & v^3 \\
    u(u+1) & v(v+1) \\
  \end{pmatrix} \cdot
  \begin{pmatrix} x \\ y \end{pmatrix} =
  \begin{pmatrix} s_1 \\ s_2 \\s_4 \\ s_5\end{pmatrix}.
\]
Using similar methods as in other cases, we obtain the unique solution
\begin{align*}
  \sigma_1 &= \frac{s_2(s_2+s_5)+s_1\,s_4}{s_1(s_2+s_5)+s_2^2},\\
  \sigma_2 &= \frac{(s_2+s_5)^2 + s_2\,s_4}{s_1(s_2+s_5)+s_2^2}
\end{align*}
subject to the condition $s_1(s_2+s_5) + s_2^2\neq 0$.
\subsubsection*{$j=4$}
The system in this case is:
\[
  \begin{pmatrix}
    1 & 1 \\
    u & v \\
    u^2 & v^2 \\
    u(u+1) & v(v+1) \\
  \end{pmatrix} \cdot
  \begin{pmatrix} x \\ y \end{pmatrix} =
  \begin{pmatrix} s_1 \\ s_2 \\s_3 \\ s_5\end{pmatrix}.
\]
In this case, we have a solvability condition $s_1+s_2=s_5$. The last linear
equation drops out, yielding:
\[
  \begin{pmatrix}
    1 & 1 \\
    u & v \\
    u^2 & v^2 \\
  \end{pmatrix} \cdot
  \begin{pmatrix} x \\ y \end{pmatrix} =
  \begin{pmatrix} s_1 \\ s_2 \\s_3\end{pmatrix}.
\]
and the
condition on $\sigma_1$, $\sigma_2$:
\[ s_2 + \sigma_1 s_2 + s_1\sigma_2 = 0 \]
This equation is only useful assuming that we know one of the
two locators $u$ and $v$, i.e. that there is another missing disk.
\subsubsection*{$j=5$}
The system in this case is:
\[
  \begin{pmatrix}
    1 & 1 \\
    u & v \\
    u^2 & v^2 \\
    u^3 & v^3 \\    
  \end{pmatrix} \cdot
  \begin{pmatrix} x \\ y \end{pmatrix} =
  \begin{pmatrix} s_1 \\ s_2 \\s_3 \\s_4 \end{pmatrix}.
\]
The analysis yields the unique solution:
\begin{align*}
\sigma_1 &= \frac{s_1\,s_4 + s_1\,s_3}{s_1\,s_3 + s_2^2},\\
\sigma_2 &= \frac{s_2\,s_4 + s_3^2} {s_1\,s_3 + s_2^2}.
\end{align*}
This is subject to the condition: $s_1\,s_3 + s_2^2 = 0$.

\section{Error Correcting Capabilities for $4$ Failed Drives}
\label{sec:four_drives}
The method is essentially the same as for $3$ disks, so we quickly get
to the point, by establishing notation and analyzing the systems
of algebraic equations covering all cases. We note that
the inequality $Z+2\ E\leq 4$ when $Z=4$, does not allow any errors
at unknown locations. Therefore, the positions of all failed drives
are assumed to be known. The problem of finding error values is then
a linear problem, and all ingredients to solving it are now available
in the proof of Proposition~\ref{thm:distance-of-the-code-is-5}.
It should be noted that our code uses quintuple parity, which
means that with $4$ known erasures the code has still an error detecting
capability, roughly equivalent to $1$ parity check.

\subsection{One parity drive missing}
Five systems of equations are obtained from the general system
involving data error locators $(u_1,u_2,u_3)$ and data error values
$(x_1,x_2,x_3)$, by starting with the basic system
\[
  \begin{pmatrix}
    1   & 1 &1 \\
    u_1 & u_2 & u_2\\
    u_1^2 & u_2^2 & u_3^2\\
    u_1^3 & u_2^3 & u_3^3\\
    u_1(u_1+1) & u_2(u_2+1) & u_3(u_3+1)\\
  \end{pmatrix} \cdot
  \begin{pmatrix} x_1 \\ x_2 \\ x_3 \end{pmatrix} =
  \begin{pmatrix} s_1 \\ s_2 \\ s_3 \\ s_4 \\ s_5\end{pmatrix}.
\]
We consider subsystems obtained by deleting one equation, which is an
overdetermined system with $4$ equations.  We know that the
coefficient matrix after deletion of a row has a $3\times 3$ submatrix
which is non-singular, thus has rank $3$ (see proof of Proposition~\ref{thm:distance-of-the-code-is-5} and Theorem~\ref{thm:distance-of-the-code-is-5-criterion}).
Hence, the consistency condition is that the $4\times 4$ augmented matrix
has rank $3$, i.e. the determinant is $0$. Hence, for $j=1,2,3,4,5$ we
have a single polynomial which is the sufficient condition of consistency.
We thus require for a parity disk $j$ to be the failed parity that
the $4\times 4$ minors of the matrix below be singular:
\[
  \begin{pmatrix}
    1   & 1 &1  & s_1\\
    u_1 & u_2 & u_2 & s_2\\
    u_1^2 & u_2^2 & u_3^2 & s_3\\
    u_1^3 & u_2^3 & u_3^3 & s_4\\
    u_1(u_1+1) & u_2(u_2+1) & u_3(u_3+1) & s_5\\
  \end{pmatrix}
\]
One way to find the $4$ polynomials is to form a matrix by adding
a column of indeterminates $w_j$, $j=1,2,3,4,5$, and considering the determinant:
\[
  \left|\begin{matrix}
    1   & 1 &1  & s_1 & w_1\\
    u_1 & u_2 & u_3 & s_2 & w_2\\
    u_1^2 & u_2^2 & u_3^2 & s_3 & w_3\\
    u_1^3 & u_2^3 & u_3^3 & s_4 & w_4\\
    u_1(u_1+1) & u_2(u_2+1) & u_3(u_3+1) & s_5 & w_5\\
  \end{matrix}\right|.
\]
Then the polynomial equivalent to consistency with parity $j$ error is
the coefficient at $w_j$ in the above determinant. 
Moreover, the coefficients are symmetric functions of $u_1,u_2,u_3$ and
as such can be expressed in terms of elementary symmetric polynomials.
\subsubsection*{Missing parity $j=1$}
With the aid of CAS, we obtain the coefficient at $w_1$:
$$u_{1}\,u_{2}\,\left(u_{2}-u_{1}\right)\,u_{3}\,\left(u_{3}-u_{1}
 \right)\,\left(u_{3}-u_{2}\right)\,\left(s_{5}-s_{3}-s_{2}\right)$$
 Apparently, it is $0$ only if
 \[ s_2 + s_3 +s_5 = 0. \]
\subsubsection*{Missing parity $j=2$}
In this case, coefficient at $w_2$ is:
\begin{align*}
&-\left(u_{2}-u_{1}\right)\,\left(u_{3}-u_{1}\right)\,\left(u_{3}-u_{2}\right)\,\\
  &\left(u_{2}\,u_{3}\,s_{5}+u_{1}\,u_{3}\,s_{5}+u_{1}\,u
 _{2}\,s_{5}+s_{4}-u_{2}\,s_{3}\,u_{3}-u_{1}\,s_{3}\,u_{3}-s_{3}\,u_{
  3}-u_{1}\,u_{2}\,s_{3}-u_{2}\,s_{3}-u_{1}\,s_{3}\right)
\end{align*}
Only the last factor contributes a non-trivial condition (after rewriting
in terms of the elementary symmetric polynomials $\sigma_1=u_1+u_2+u_3$ and
$\sigma_2=u_1\,u_2+u_1\,u_3+u_2\,u_3$):
\[ \sigma_{2}\,\left(s_{5}+s_{3}\right)+s_{4}+\sigma_{1}\,s_{3} = 0 \]
\subsubsection*{Missing parity $j=3, 4, 5$}
In these cases, the coefficient at $w_j$ is $0$, i.e. the existence and uniqueness
is automatic.
\begin{table}[htb]
  \caption{\label{tab:fixing-3-data-1-parity} A table
    of equations related to decoding $4$ errors, where $3$ are
    data errors and $1$ is a parity error. The listing of conditions
    on the syndromes $s_j$ and data error locators $u_1$, $u_2$ and
    $u_3$ to have a solution with one parity error at position $j$, written
    in terms of the elementary symmetric polynomials
    $\sigma_1 = u_1+u_2+u_3$ and
    $\sigma_2 = u_1\,u_2+u_1\,u_3+u_2\,u_3$.}
  \begin{center}
    \begin{tabular}{||c|c||}
      \hline\hline
      $j$ & Consistency condition \\
      \hline\hline
      1 & $s_2 + s_3+s_5 = 0$ \\
      2 & $ \sigma_{2}\,\left(s_{5}+s_{3}\right)+s_{4}+\sigma_{1}\,s_{3} = 0 $\\
      3,4,5 & Empty\\
      \hline\hline
    \end{tabular}
  \end{center}
\end{table}

\subsection{Two parity drives missing}
Following the method for a single missing parity, we consider a determinant:
\[
  \left|\begin{matrix}
    1   & 1   & s_1 & w_1 & z_1\\
    u_1 & u_2  & s_2 & w_2 & z_2\\
    u_1^2 & u_2^2 & s_3 & w_3 & z_3\\
    u_1^3 & u_2^3 & s_4 & w_4 & z_4\\
    u_1(u_1+1) & u_2(u_2+1)  & s_5 & w_5 & z_5\\
  \end{matrix}\right|.
\]
A solution to the equation $H\,e=s$ exists with data error locations
given by data error locators $u_1$, $u_2$, with parity errors at
positions $j$ and $l$, iff the coefficient at $w_j\,z_l$ of the above
polynomial is $0$.  These coefficients are listed in
Table~\ref{tab:fixing-2-data-2-parity}.  It should be noted that all
consistency conditions are equations which are either linear or
quadratic in $(u_1,u_2)$ (the exception is pair $(1,4)$ which is never
satisfied; the equation is $1=0$). Therefore, if only one of the data
disks is a known erasure, these equations limit the second data disk
to at most $2$ positions, which provides a viable method to repair
RAID with $3$ erasures and $1$ failure at unknown location.

What is important about the degeneracy condition in the fourth column
of the table is that only when the syndrome vector satisfies this
condition the equation in the second column degenerates enough to
allow the possibility of more than $2$ solutions.  It is clear that
the consistency equation when treated as function of $u_1$ is a
quadratic equation, and only when all coefficients of it are $0$ the
degeneracy occurs. By comparing with
Table~\ref{tab:fixing-2-data-1-parity} we can see that the degeneracy
conditions are the consistency conditions for that case (except for
parity pair $(4,1)$ which is never consistent with $4$ errors,
$2$-parity). Hence, Table~\ref{tab:fixing-2-data-2-parity} does not
allow more than $2$ combinations of data errors, where only $1$ data
error is at a known location. If the degeneracy condition is
satisfied, there is a $2$ data, $1$ parity error consistent with the
syndromes, which is more likely.

\begin{table}[htb]
  \caption{\label{tab:fixing-2-data-2-parity} The listing of conditions
    on the syndromes $s_j$ and data error locators $u_1$ and $u_2$ to
    have a solution with parity errors at position $j$ and $l$,
    written in terms of the symmetric polynomials $\sigma_1 = u_1+u_2$
    and $\sigma_2 = u_1\,u_2$. The degeneracy condition in the last
    column is a condition for the equation expressing consistency
    to have more tan $2$ solutions.}
  \begin{center}
    \begin{tabular}{||c|c|c|c||}
      \hline\hline
      $j$ & $l$ & Consistency condition & Degeneracy condition\\
      \hline\hline
      1 & 2 &
$\sigma_{2}\,\left(s_{5}+s_{3}\right)+s_{4}+\sigma_{1}\,s_{3}$&
$s_{4}\,s_{5}+s_{3}\,s_{4}+s_{3}^2$\\
\hline

      1 & 3 &

$\sigma_{1}\,\left(s_{5}+s_{2}\right)+s_{4}+s_{2}\,\sigma_{2}$&
$s_{5}^2+s_{2}\,s_{4}+s_{2}^2$\\
\hline
      1& 4 &

             $1$&
$1$\\
\hline
      1 & 5 &

$s_{4}+\sigma_{1}\,s_{3}+s_{2}\,\sigma_{2}$&
$s_{2}\,s_{4}+s_{3}^2$\\
\hline
      2 & 3 &
              $\sigma_{2}\,s_{5}+\sigma_{1}^2\,s_{5}+\sigma_{1}\,s_{4}+s_{4}+s_{1}
              \,\sigma_{2}^2+s_{1}\,\sigma_{1}\,\sigma_{2}$&
$s_{5}^3+s_{1}\,s_{4}\,s_{5}+s_{1}\,s_{4}^2+s_{1}^2\,s_{4}$\\
\hline
      2 & 4 &
$\sigma_{1}\,\left(s_{5}+s_{3}\right)+s_{3}+s_{1}\,\sigma_{2}$&
$s_{5}^2+s_{3}^2+s_{1}\,s_{3}$\\
\hline
      2& 5&

$\sigma_{1}\,s_{4}+\sigma_{2}\,s_{3}+\sigma_{1}^2\,s_{3}+s_{1}\, \sigma_{2}^2$&
$s_{1}\,s_{4}^2+s_{3}^3$\\
\hline
      3 & 4 &

$s_{5}+s_{1}\,\sigma_{2}+\sigma_{1}\,s_{2}+s_{2}$&
$s_{1}\,s_{5}+s_{2}^2+s_{1}\,s_{2}$\\
\hline
      3 & 5 &
$s_{4}+s_{2}\,\sigma_{2}+s_{1}\,\sigma_{1}\,\sigma_{2}+\sigma_{1}^2 \,s_{2}$&
$s_{1}^2\,s_{4}+s_{2}^3$\\
\hline
      4& 5&

            $s_{3}+s_{1}\,\sigma_{2}+\sigma_{1}\,s_{2}$&
$s_{1}\,s_{3}+s_{2}^2$\\
      \hline
    \end{tabular}
  \end{center}
\end{table}  

\appendix
\section{Additional properties}
We formulate several results without a proof, which address several
specific situations which may occur when more than $2$ drives
fail. There are cases where recovery is possible.  In other cases, we
cannot recover the content of lost drives. Our results are summarized
in Table~\ref{tab:recovery-options}.

It should be noted that our primary algorithm,
Algorithm~\ref{alg:main}, searches for the error vector $e$ of minimum
weight, matching given syndrome vector $s$.  Given that the
probability of an individual disk failure is sufficiently low, this
leads to \emph{maximum likelihood decoding}, where most likely errors
are given priority over less likely errors. This results in a unique
solution if the number of failed drives is not more than $2$. If the
number reaches $3$, it may happen that there is an error vector $e$ of
weight $3$, but this vector may not be
unique. Table~\ref{tab:recovery-options} identifies situations in
which it is possible to identify most likely error vectors $e$ by an
algebraic procedure based on solving a linear system of
type~\eqref{eqn:specialized-linear-system}, but there may be many
choices of columns of $H$ which result in equally likely solutions.
The idea of a decoder producing many solutions in descending order of
likelihood is that of \emph{list decoding} \cite{Sudan_2001}. Thus,
Table~\ref{tab:recovery-options} is helpful in constructing a list
decoder. Of course, a list decoder can be based on brute force search,
which always works, but it has expensive exponential run time. 
\begin{remark}[On non-linear nature of list decoding]
It is worth noting that a list decoder must find solutions to systems
of algebraic equations. In fact, we had to solve a non-linear system
of algebraic equations in order to find the locations of the failed
drives. It is an important observation in this paper that this can be
done by solving a quadratic equation for a particular code given
by parity check matrix \eqref{eqn:parity-check-matrix}.
\end{remark}

\begin{table}[h]
  \caption{\label{tab:recovery-options}We describe various situations in which recovery of data may be possible.}
  \begin{tabular}{|c|c|p{4in}|c|}
    \hline\hline
    \multicolumn{2}{|c|}{Failed Drive \#} &\multirow{2}{4.5in}{\centerline{Analysis of recovery options}} & \multirow{2}{*}{Recoverable?}\\
    \cline{1-2}
    Data & Parity && \\ 
    \hline\hline
    5 & 0 & There is no way
    to recover; the fifth row of $P$ is a sum of second
    and third row, and thus the relevant matrix is a singular matrix. &No\\
    \hline
    0 & 5 & We can recover, by recomputing all parity. & Yes\\
    \hline
    1 & 4 & We can recover the data
    drive first using the non-failed parity drive, and then
    recomputing the other 4 failed drives. & Yes\\
    \hline
    \multirow{3}{*}{4} & \multirow{3}{*}{1} &  We can recover if the failed parity is the second, third, or fifth. & Yes \\
    \cline{3-4}
                &&  No way to recover if the failed parity is the first, or fourth ( determinant always zero). & No \\
    \hline
\if0
    \multirow{3}{*}{2} & \multirow{3}{*}{2} & We can recover if the failed parities are: $(P_1,P_2)$, $(P_1,P_3)$, $(P_1,P_5)$, $(P_4,P_5)$, $(P_2,P_4)$, $(P_3,P_4)$. & Yes \\
    \cline{3-4}
    &&We cannot recover if the failed parities are $(P_1,P_4)$ ( determinant is zero). 
    In other cases, we might be able to recover, but not guaranteed, based on the 
    determinant.             & No \\
    \hline
\fi
    \multirow{3}{*}{2}&\multirow{3}{*}{3}& If both of the second
    and third parity are among the three failed parities, then we
    might be able to recover (based on the determinant). &Maybe\\
    \cline{3-4}
    &&If at most one of the second and third parity are among the
    three failed parities, we can recover. & Yes\\
    \hline\hline
  \end{tabular}
\end{table}


%
\section{Intellectual property status disclosure}
An earlier version of this paper was submitted on January $24$, $2017$
to USPTO with a provisional patent application (application number:
$62/449,920$); and on January $19$, 2018 to USPTO with a full patent
application (International application number: $PCT/US18/14420$).

\TrademarkNotice



\bibliography{RaidBibliography}
\bibliographystyle{plain}

\end{document}